\title{Reactive Valuations}
\author{Bert Christiaan Regenboog}
\newcommand{\sig}{\ensuremath{\Sigma_{CP}(A)}}
\newcommand{\dd}[1]{\frac{\partial}{\partial #1}}
\newcommand{\BF}{\ensuremath{\textbf{BF}}}
\newcommand{\syn}{\bumpeq}
\newcommand{\nsyn}{\not\bumpeq}
\newcommand{\CP}[1]{\ensuremath{\mathrm{CP#1}}}
\newcommand{\CPrp}[1]{\ensuremath{\mathrm{CPrp#1}}}
\newcommand{\CPcr}[1]{\ensuremath{\mathrm{CPcr#1}}}
\newcommand{\CPcontr}{\ensuremath{\mathrm{CPcontr}}}
\newcommand{\CPstat}{\ensuremath{\mathrm{CPstat}}}
\newcommand{\lef}{\ensuremath{\triangleleft}}
\newcommand{\rig}{\ensuremath{\triangleright}}
\newcommand{\leftand}{~
     \mathbin{\setlength{\unitlength}{1ex}
     \begin{picture}(1.4,1.8)(-.3,0)
     \put(-.6,0){$\wedge$}
     \put(-.53,-0.36){\circle{0.6}}
     \end{picture}
     }}
\newcommand{\rightand}{~
     \mathbin{\setlength{\unitlength}{1ex}
     \begin{picture}(1.4,1.8)(0,0)
     \put(-.8,0){$\wedge$}
     \put(.72,-0.36){\circle{0.6}}
     \end{picture}
     }}
\newcommand{\leftor}{~
     \mathbin{\setlength{\unitlength}{1ex}
     \begin{picture}(1.4,1.8)(-.3,0)
     \put(-.6,0){$\vee$}
     \put(-.53,1.7){\circle{0.6}}
     \end{picture}
     }}
\newcommand{\rightor}{~
     \mathbin{\setlength{\unitlength}{1ex}
     \begin{picture}(1.4,1.8)
     \put(-.8,0){$\vee$}
     \put(.72,1.7){\circle{0.6}}
     \end{picture}
     }}
 \newcommand{\leftimp}{~
     \mathbin{\setlength{\unitlength}{1ex}
     \begin{picture}(1.5,1.8)
     \put(-.1,0){$\rightarrow$}
     \put(-.3,0.57){\circle{0.6}}
     \end{picture}
     ~}}
\newcommand{\rightimp}{~
     \mathbin{\setlength{\unitlength}{1ex}
     \begin{picture}(1.5,1.8)
     \put(-.9,0){$\rightarrow$}
     \put(1.76,0.57){\circle{0.6}}
     \end{picture}
     ~}}
\newcommand{\leftbiimp}{~
     \mathbin{\setlength{\unitlength}{1ex}
     \begin{picture}(1.5,1.8)
     \put(-.1,0){$\leftrightarrow$}
     \put(-.44,0.57){\circle{0.6}}
     \end{picture}
     ~}}
\newcommand{\rightbiimp}{~
     \mathbin{\setlength{\unitlength}{1ex}
     \begin{picture}(1.5,1.8)
     \put(-.9,0){$\leftrightarrow$}
     \put(1.76,0.57){\circle{0.6}}
     \end{picture}
     ~}}
\newtheorem{theorem}{Theorem}[chapter]
\newtheorem{prop}[theorem]{Proposition}
\newtheorem{lem}[theorem]{Lemma}
\newtheorem{thm}[theorem]{Theorem}
\newtheorem{defn}[theorem]{Definition}
\newtheorem{cor}[theorem]{Corollary}
\begin{document}
\maketitle
\thispagestyle{empty}
~\newpage

\pagenumbering{roman}
\tableofcontents

\begin{abstract}
In sequential logic there is an order in which the atomic propositions
in an expression are evaluated. This order allows the same atomic proposition
to have different values depending on which atomic propositions have
already been evaluated. In the sequential propositional logic
introduced by Bergstra and Ponse in \cite{main}, such valuations are
called ``reactive'' valuations, in contrast to ``static'' valuations as
are common in e.g.\ ordinary propositional logic. There are many
classes of these reactive valuations e.g., we can define a class of
reactive valuations such that the value for each atomic proposition
remains the same until another atomic proposition is evaluated.

This Master of Logic thesis consists of a study of some of the properties of
this logic.

We take a closer look at some of the classes of reactive valuations mentioned in \cite{main}. We particularly focus on the relation between the axiomatization and the semantics. Consequently, the main part of this thesis focuses on proving soundness and completeness. Furthermore, we show that the axioms in the provided axiomatizations are independent i.e., there are no redundant axioms present. Finally, we show $\omega$-completeness for two classes of reactive valuations.

\end{abstract}

\thispagestyle{empty}
~\newpage

\pagenumbering{arabic}

\chapter{Introduction}

\section{Introduction}
In sentential logic (also called propositional calculus), sentences are build from atomic propositions, the constants true and false, and connectives such as $\neg$, $\wedge$, $\vee$, etc. The truth of such a sentence with respect to a model, is calculated using the interpretation function associated with that model. This function not only assigns meaning to the connectives and constants but also to the individual atomic propositions.

In sentential logic the interpretation of connectives and constants is given. Hence, a model in sentential logic is uniquely defined by the interpretation of the individual atomic propositions. These atomic propositions are assigned either the value true or the value false by the interpretation, indicating whether they are true or false in the model. Such an assignment is referred to as a valuation. In sentential logic, these valuations entirely depend on the atomic propositions they assign a value to and not on other external factors. Consequently, a valuation will give an atomic proposition the same valuation no matter its location within a sentence, and this valuation will never change. These valuations are, in a manner of speaking, static.

This static behaviour can be considered a severe limitation of sentential logic. For example, sentential logic is not sufficiently expressive for modelling logical conjunction as implemented in most programming languages because the conjunction in these cases is non-commutative\footnote{We explain in the next section why this is the case.}. In order for us to effectively model these and other kinds of connectives and sequential systems, we are required to extend our notion of valuation.

This thesis is based on the work by Bergstra and Ponse in \cite{main}. They introduce a logic that uses reactive valuations instead of normal valuations. Reactive valuations allow us to take previously evaluated atomic propositions into account. Thus the valuations are in a sense ``reactive''. The use of reactive valuations necessitates the need for expressions in this logic to be evaluated in some fixed order. Hence the resulting logic has a sequential interpretation. The same atomic proposition may have a different value depending on which atomic propositions have previously been evaluated. The reactive valuations have thus an additional dependence on a sequence of atomic propositions representing the history of evaluation.

The signature of this logic consists of a finite set of atomic propositions and the constants $T$ and $F$ plus a ternary operator $\_\lef\_\rig\_$. The constants $T$ and $F$ denote true and false, respectively. The ternary operator denotes conditional composition i.e., an \emph{if-then-else} operator. For example, $a\lef b\rig c$ translates to if $b$ then $a$ else $c$. This then clues us to the order in which expressions of this type are evaluated i.e., the antecedent is evaluated first. The question which of the two consequents is then evaluated first is irrelevant because their value depends only on the antecedent and not on each other.

For example, take the expression
\[
a\lef a\rig b
\]
The letters $a$ and $b$ represent atomic propositions. The above reads thus \emph{if a then a else b}. In sentential logic, it suffices to know the value of $a$ and $b$ to know whether the sentence is true or false, see the following table:
\[
\begin{array}{cc|c}
a & b & a\lef a\rig b\\
\hline
F & F & F\\
F & T & T\\
T & F & T\\
T & T & T
\end{array}
\]
However this is not the case if we are using reactive valuations. Keeping the if-then-else interpretation in mind, we intuitively begin by evaluating the middle $a$, the antecedent. The act of evaluating this $a$ can possibly have influence on the valuation of the left-hand $a$ and the right-hand $b$. We denote the value of $a$ given a valuation $H$ as $a/H$. Furthermore, the valuation obtained after evaluating $a$ is denoted as $\dd a(H)$. So $\dd a$ can be viewed as a function that maps reactive valuations to other reactive valuations. The value $(a\lef a\rig b)/H$ is determined by the values of $a/H$, $a/\dd a(H)$ and $b/\dd a(H)$, as illustrated in the following table: 
\[
\begin{array}{ccc|c}
a/H & a/\dd a(H) & b/\dd a(H) & (a\lef a \rig b)/H\\
\hline
F & F & F & F\\
F & F & T & T\\
F & T & F & F\\
F & T & T & T\\
T & F & F & F\\
T & F & T & F\\
T & T & F & T\\
T & T & T & T
\end{array}
\]
Compared to sentential logic, we have an extra parameter because $a/H\ne a/\dd a(H)$. Note that it is not possible for the valuation of $b$ to influence the valuation of either of the $a$'s. Nor is it possible that the left-hand $a$ has influence on the valuation of the middle $a$ or the right-hand $b$.

There are other limitations to reactive valuations. For example, take the expression $b\lef a\rig b$. This has the following truth table:
\[
\begin{array}{cc|c}
a/H & b/\dd a(H) & (b\lef a\rig b)/H\\
\hline
F & F & F\\
F & T & T\\
T & F & F\\
T & T & T\\
\end{array}
\]
The value of $(b\lef a\rig b)/H$ is in this case just computed using two values $a/H$ and $b/\dd a(H)$. It is not possible to assign different values to the left-hand $b$ and the right-hand $b$ using reactive valuations because reactive valuations do not take into account the value of previously observed atomic propositions. Only the act of evaluating atomic propositions influences reactive valuations, regardless of what those values might have been.

The class of all reactive valuations is referred to as the free reactive valuations. We can construct different logics by constraining the type of reactive valuations we allow. For example, if we take the class of reactive valuations that ignore the sequence of previously evaluated atomic propositions, we get the static valuations. Static valuations coincide with the classical valuations in sentential logic i.e., they always give the same value for an atomic proposition independent of context.

Another example of a class of reactive valuations are the contractive valuations. In these valuations the value of an atomic proposition, say $a$, remains the same as long as no atomic proposition other than $a$ is evaluated. This is in contrast with the free reactive valuation where each instance of atomic proposition $a$ in a sequence of $a$'s can have a different value. For example, if we are using free reactive valuations it is possible to assign different values to the $a$'s in the expression $F\lef a\rig a$. This is not possible if we are using contractive valuations because between the first $a$ and the second $a$ no other atomic proposition is evaluated, and thus the valuation of $a$ must remain the same.

We can formalize the idea of creating new logics using classes of reactive valuations. A class $K$ of reactive valuations gives rise to an equivalence relation, $K$-equivalence. Since reactive valuations are sensitive to the context, this $K$-equivalence is not necessarily congruent i.e., equivalence of a term $t$ need not be preserved when substituting equivalent subterms in the term $t$. For example, for every reactive valuation $H$ we have $(T\lef a\rig T)/H=T/H=T$. It is however not the case that $(b\lef T\rig F)/H=(b\lef(T\lef a\rig T)\rig F)/H$ because in the right-hand term the valuation of $b$ depends on $a$ which is not the case in the left-hand term. Since congruence is a necessary property, we therefore introduce $K$-congruence as the largest congruence contained in $K$-equivalence. $K$-congruence thus represents our semantics.

Besides a semantical characterization, each logic can be equationally specified using a number of axioms. For example, the axiom $x\lef T\rig y=x$, where $x$ and $y$ are arbitrary terms, is an axiom shared by every logic we present. Given these axiomatizations we can prove properties such as soundness and completeness.

In the rest of this chapter we further motivate why reactive valuations are relevant, and discuss some related work. In Chapter~2, a formal introduction is given to reactive valuations. In addition, four varieties are introduced and discussed. These varieties include the varieties of free, contractive and static valuations we mentioned earlier. This discussion includes proper axiomatizations and subsequent proofs of completeness and soundness of these varieties. In Chapter~3 a definition of $\omega$-completeness is given, and we explain why $\omega$-completeness is a nice property of an axiomatization. Subsequently, a proof of $\omega$-completeness for the variety of free reactive valuations and the variety of static valuations is presented. The axiomatizations given in Chapter~2 might contain redundant axioms. In Chapter~4, we show that this is mostly not the case. Finally, Chapter~5 contains a summary, and a few suggestions for further research.

\section{Motivation}
Static valuations, the type of valuations used in sentential logic, are inadequate to model many sequential systems. However, we can model those systems using different classes of reactive valuations.

Using reactive valuations we can model non-commutative logical connectives. For example, $\rightor$ is disjunction in which the right argument is evaluated first (notation is taken from \cite{connectives}). So in the signature of our logic $x\rightor y$ is defined as $T\lef y\rig x$. In similar fashion $x\leftor y$ is defined as $T\lef x\rig y$. In sentential logic these two definitions would coincide. This however is not the case if we use reactive valuations i.e., this allows us to distinguish between $x\rightor y$ and $x\leftor y$. Hence, reactive valuations are suited for modelling non-commutative connectives. One area where non-commutative connectives are commonplace, is that of programming languages

In most programming languages, it is possible that a function, in addition to producing a value, also does something else. It might for example raise an exception or modify a global variable. This kind of behaviour is called a side-effect of said function. Furthermore, it is also possible that the return value of a function might depend on some external factor. For example, a database or a random number generator. Finally, expressions are evaluated sequentially. This means that if want to evaluate $x\wedge y$, the interpreter has to decide whether to evaluate $x$ first or $y$ first.

Combining these facts, we could get a situation in which the value of the expression $f(x)\wedge g(y)$ depends on whether $f(x)$ is evaluated first or $g(y)$ because $f(x)$ might influence the value of $g(y)$, and vice versa. Admittedly it is limited to situations that can be translated to boolean formulas, but this is the kind of behaviour we can model with reactive valuations and not with sentential logic.

Short-circuit evaluation is a common feature of programming languages. Short-circuit evaluation is usually limited to the evaluation of a few specific operators. Using such evaluations only the arguments that have to be evaluated, are actually evaluated. The operator \verb+&&+, logical conjunction, in C/C++ is an example of a short-circuiting operator. Consider evaluating the expression \verb+x && y+. If \verb+x+ evaluates to false, the second argument \verb+y+ is not evaluated because regardless of its value \verb+x && y+ evaluates to false. If $x$ and $y$ do not have any side-effects and their values are limited to true and false, this operator is commutative. However, in practice it is possible that $x$ and $y$ represent some computation that does not necessarily terminate. Consequently, if one of the arguments does not terminate, the value of a short-circuiting operator like \verb+&&+ might be different depending on which argument we decide to evaluate first. Hence, this is another example of a non-commutative connective with a symmetric counterpart.

Arguably the programming language with the most direct connection with reactive valuations is Prolog. Prolog is originally designed to model language through computational models based on predicate logic. This paradigm of programming in terms of predicate logic is called logic programming. As a result, programs in Prolog almost read like logical formulas, and are referred to as predicates. In the early days of logic programming, the language did not have any instructions with explicit side-effects. However, for Prolog to have some practical value extra instructions are needed. For example, the database instructions \emph{assert} and \emph{retract}. These instructions can, perhaps not surprisingly, assert and retract facts to a Prolog program. Clearly, programs using these instructions have side-effects that might influence whether predicates evaluate to true or false. For example, predicate $P$ is true if fact $A$ is true. In addition, predicate $P$ retracts fact $B$. Hence, if there is another predicate $Q$, which is true if $B$ is true, and retracts $A$, the predicates influence each others value by their side-effects.

Staying within the field of computer science, the ``reactive behaviour'' illustrated by the previous examples does not limit itself to programming languages. On the more lower hardware level we have the term ``sequential logic'' in circuit theory. Here sequential logic refers to logic circuits that have a memory. The output of such a circuit does not only depend on the input, but also on the history of inputs. These circuits can be used to construct finite state machines such as Moore and Mealy machines. The output of these machines depends on an internal state, which in turn depends on the previous state and input. 

In everyday reasoning, so called common-sense reasoning, the assertion and retraction of facts is fairly common. For example, while it may be true that Jack is at home in the evening, it certainly does not have to be true that he is always at home. In addition, this is not limited to the physical world but can also include the beliefs of agents. For example, one might believe that all adult swans have white plumage, until one travels to Australia and sees that there are swans with black plumage, at which point the beliefs are revised. Ordinary classical logic is not equipped to model these reactive processes i.e., the validity of propositions remains the same.

Pragmatics is a subfield of linguistics in which the interaction between utterances i.e., speech acts, is studied. One example of such interaction is that of presupposition. Presupposition refers to implicit assumptions in sentences. Take for example the sentence ``Jack drives his car to the mall''. This sentence presupposes that Jack has a car. So modelling presupposition requires we are able to deal with side-effects of posing a proposition, another example of reactive evaluation. 

In this section we presented a number of processes that might be modelled using reactive valuations as motivation why reactive valuations are interesting. In the following section most of the aforementioned examples will be examined once more except this time in the context of related work i.e., we compare existing literature on these subjects with reactive valuations.

\section{Related work}
One of the defining properties of our logic is that the valuation of the atomic propositions changes depending on what atomic propositions have been evaluated. In this section we discuss some other logics and formalisms that also demonstrate this property.

As stated earlier in this thesis is based on the work done by Bergstra and Ponse in \cite{main}. In it they introduce reactive valuations and the varieties which we will study in the next chapters. They discuss a number of topics which will not be covered in this thesis. These topics include a method of modelling a three-valued logic using reactive valuations, expressivity results, the complexity of satisfiability, and a study of the properties of infinite propositions.

Since reactive valuations are a relatively new invention with no clearly defined predecessor, there is no related work that deals specifically with reactive valuations besides the one by Bergstra and Ponse. There is, however a very large body of literature dealing with sequential reasoning. This literature ranges from computer science to philosophy and linguistics. An exhaustive literature overview is however beyond the scope of this thesis and would in all likelihood constitute a thesis all on its own. This chapter, therefore, gives a very brief overview with a few specific examples, which will hopefully offer a starting point for a more detailed account of related work.

The previous section on motivation already gave a few examples of areas where reactive valuations might be applied and hence literature dealing with the phenomena described in that section can be considered related to the theory of reactive valuations.

For example we mentioned common-sense reasoning i.e., the type of reasoning we use in our daily lives. Common-sense reasoning has been studied in many fields but it has enjoyed renewed attention the past decades with the rise of the field of artificial intelligence where it is mostly referred to as non-monotonic reasoning.

In classical logic when a statement $\phi$ logically follows from a set $S$ of premises, it is the case that $\phi$ logically follows from a superset $S'\supseteq S$ of premises. Consequently, we call this logic monotonic. This means that once something is true it will remain so. In common-sense reasoning this is not the case. Hence this type of reasoning is called non-monotonic. Similarly reactive valuations are non-monotonic due to ever-changing valuations. See \cite{non-monotonic} and \cite{non-monotonic2} for an overview of non-monotonic reasoning.

Similarly in philosophy we have defeasible reasoning which deals with arguments that might be true but can be refuted at a later point by observing new facts, see \cite{defeasible}.

We can view the evaluation of an expression as the execution of a program. The atomic propositions would then correspond to single instructions or pieces of programs such as procedures or functions. There are many formalisms that are designed for reasoning about propositional properties of programs, e.g. Hoare logic, temporal logic of actions and propositional dynamic logic (PDL). 

For example, PDL (see \cite{PDL} for an overview) can be effectively used to model reactive valuations. In PDL we have a set of atomic propositions $P$, a set of basic actions $A$, a set of states $S$, and a binary relation $R$ on $S$. The connection between PDL and reactive valuations becomes evident if for every atomic proposition in $P$ there is a basic action in $A$ that signifies the evaluation of the respective atomic proposition. Then the states $S$ correspond to the various reactive valuations. In that context deterministic PDL i.e., the class of frames characterized by $\left<a\right>\phi\to\left[a\right]\phi$, is of particular interest because it illustrates one of the limitations of reactive valuations. Namely, that in the previously mentioned example $b\lef a\rig b$ the value of both $b$'s will have to be the same.

As mentioned in the previous section the programming language Prolog has special instructions for the assertion and retraction of facts.  Consider the following Prolog program
\begin{verbatim}
    p(a) :- p(b), retract(p(b)).
    p(a) :- assert(p(b)), fail.
\end{verbatim}
The statement \verb|fail| is a reserved keyword that automatically fails i.e., somewhat similar to the constant false. When repeatedly asking the interpreter \verb|p(a)| we get the output sequence 0101010101\ldots where 0 and 1 stand for \verb|no| and \verb|yes|, respectively.

In \cite{prolog_expr} the expressive power of the side effects of the assert and retract statements in Prolog is investigated. The authors main tool in this analysis are these output sequences. Much the same as we consider different varieties they consider different classes of output sequences e.g., constant sequences that represent programs with no side effect and ultimately periodic sequences where the sequence at some point starts to repeat itself. For a complete denotational semantics of Prolog, see \cite{prolog_semantics}.

Reactive valuations give rise to directed versions of connectives such as $\wedge$ and $\vee$. In Chapter~2 we define a number of these connectives. The notation for these is taken from \cite{connectives}, where a number of many-valued logics are described. In \cite{belnap_cond} an axiomatization of Belnap's four-valued logic is given using conditional composition, which as you might recall is the only connective in the signature of the logic described in this thesis. The notation for conditional composition is taken from a paper by Hoare, \cite{hoare}. In this paper Hoare describes what we call the variety with static valuations. This variety corresponds to boolean algebra, as we shall show in Chapter~3.

In the previous section we briefly mentioned pragmatics, and more specifically presupposition. The most commonly used formalism to describe presupposition and its effects is discourse representation theory (see \cite{DHT}). There are, however, different approaches. For example, in \cite{kracht} a many-valued logic with directed connectives is used to investigate some of the main problems in presupposition.

Besides the ones we just mentioned, there are many other research areas that deal with sequentiality that we did not mention here. For example, temporal logic, substructural logics and non-commutative logics. As mentioned before this section is but a brief overview, and we hope this will prove to be a useful point of departure for a more thorough investigation into related work.

Lastly, it took more than two years to write this thesis. This year a new paper on the subject of reactive valuations by Bergstra and Ponse appeared, see \cite{future}. The new results in that paper are not discussed here nor do the results in this thesis depend on those results.


\chapter{Reactive valuations}

This chapter represents the main body of the thesis. Reactive valuations are formally introduced, which enables us to define a number of different logics. Subsequently, some basic properties such as soundness and completeness are proven. 

\section{Introduction}
In this section we introduce a sequential propositional theory starting with the language. The symbols of our language are as follows:
\begin{itemize}
\item the constants $T$ and $F$
\item the ternary operator $\_\lef\_\rig\_$, called conditional composition
\item a finite non-empty set $A$ of atomic propositions
\item an infinite set of variables $V$
\end{itemize}
The notation $\_\lef\_\rig\_$ for conditional composition was first introduced by Hoare in \cite{hoare}. We use the letters $x, y, z, u, v, w, \ldots$ to denote the variables, and the letters $a, b, c,\ldots$ to denote atomic propositions i.e., members of $A$.  Note that $A$ is finite and non-empty. These conditions on the set of atomic propositions are relevant because they affect the validity of certain theorems, particularly those dealing with independence and $\omega$-completeness as we shall see in the next chapter. We call this signature $\sig$.

The set $\mathbb{T}(\sig)$ of all terms over signature $\sig$ is defined as the smallest set such that
\begin{itemize}
\item $A\subseteq \mathbb{T}(\sig)$
\item $V\subseteq \mathbb{T}(\sig)$
\item $T,F\in \mathbb{T}(\sig)$
\item $t\lef r\rig s\in \mathbb{T}(\sig)$ with $t,r,s\in\mathbb{T}(\sig)$
\end{itemize}
Subsequently, the set $T(\sig)$ of all closed terms over signature $\sig$ is defined as the largest subset of $\mathbb{T}(\sig)$ such that no term in $T(\sig)$ contains a variable. The letters $t, r, s,\ldots$ are used to denote members of $\mathbb{T}(\sig)$ whereas we use $P, Q, R,\ldots$ to denote closed terms i.e., members of $T(\sig)$.

We postpone the discussion of the actual model construction until the next two sections. For now it suffices to recall the interpretation we offered in the introduction i.e., $T$ and $F$ are interpreted as true and false, respectively, and $\_\lef\_\rig\_$ is the if-then-else operator, where the middle argument is the antecedent and the left-most and right-most argument are the consequents.



Using the language we just introduced it is now possible to give the following axiomatization:
\[
\begin{array}{lrcl}
(\CP1)\qquad& x\lef T\rig y &=&x\\
(\CP2)\qquad& x\lef F\rig y &=&y\\
(\CP3)\qquad& T\lef x\rig F &=&x\\
(\CP4)\qquad& x\lef(y\lef z\rig u)\rig v &=& (x\lef y\rig v)\lef z\rig(x\lef u\rig v)
\end{array}
\]
We call this set of axioms CP. Hence when $t=r$, for terms $t$ and $r$ from signature \sig, can be derived from CP, we denote this as
\[
CP\vdash t=r
\]
Henceforth we often omit the ``$CP\vdash$'' part except when it is not clear from the context which set of axioms is used.

It is also important to note that the equality is in fact a congruence. Consequently, equality has besides the usual properties of reflexivity, symmetry, and transitivity
\[
\begin{array}{ll}
\text{(reflexivity)}\qquad & x = x\\
\text{(symmetry)}\qquad & x = y \rightarrow y=x\\
\text{(transitivity)}\qquad & x=y\wedge y=z \rightarrow x=z\\
\end{array}
\]
also the congruence property, which in this case will have the following form
\[
\begin{array}{lrcl}
\text{(congruence)}\qquad
\begin{array}{c}
x_1=y_1\quad x_2=y_2\quad x_3=y_3\\
\hline
x_1\lef x_2\rig x_3 = y_1\lef y_2\rig y_3
\end{array}
\end{array}
\]

Using this axiomatization and its intended interpretation, we can define versions of the classical connectives.
\[
\begin{array}{rclcrcl}
x\leftand y & = & y\lef x\rig F &\qquad& x\leftimp y & = & y\lef x\rig T\\
x\rightand y & = & x\lef y\rig F &\qquad& x\rightimp y & = & T\lef y\rig\neg x\\
x\leftor y & = & T\lef x\rig y &\qquad& x\leftbiimp y & = & y\lef x\rig\neg y\\
x\rightor y & = & T\lef y\rig x &\qquad& x\rightbiimp y & = & x\lef y\rig\neg x
\end{array}
\]
where $\neg x=F\lef x\rig T$. The notation of the operators is due to Bergstra, Bethke, and Rodenburg in \cite{connectives}. The circle in the connective indicates the order in which the expression is evaluated e.g., $x\leftor y$ indicates that we evaluate $x$ before looking at $y$. In the following sections we show that although e.g. $\leftand$ and $\rightand$ have the same interpretation in classical logic, they are not provably equal in CP. Also properties such as idempotency, commutativity, distributivity and absorption are not derivable in CP. However, there are some classical properties that are derivable in CP. For example we have the following property,
\begin{align*}
x\lef(F\lef y\rig T)\rig z
&=(x\lef F\rig z)\lef y\rig(x\lef T\rig z)\\
&=z\lef y\rig x
\end{align*}
which implies that 
\[
\neg\neg x = F\lef (F\lef x\rig T)\rig T = T\lef x\rig F = x
\]
Another example is based on De Morgan's laws,
\begin{align*}
\neg(x\rightor y)
&= F\lef(T\lef y\rig x)\rig T\\
&= (F\lef T\rig T)\lef y\rig(F\lef x\rig T)\\
&= F\lef y\rig\neg x\\
&= \neg x\lef \neg y\rig F\\
&= \neg x\rightand\neg y
\end{align*}

Using conditional composition, we can create the notion of sequential composition, denoted by $\circ$ i.e.,
\[
x\circ y = y\lef x\rig y
\]
By axiom CP4 it follows that sequential composition is associative,
\begin{align*}
x\circ(y\circ z) 
&= (z\lef y\rig z)\lef x\rig(z\lef y\rig z)\\
&= z\lef(y\lef x\rig y)\rig z\\
&= (x\circ y)\circ z
\end{align*}
In the following sections we not only give a model for the discussed axiomatization CP but also show that given the provided framework, it is easy to create variations on this model.

\section{Reactive valuations}
In the classic case a valuation determines the value of all the atomic propositions $a\in A$ i.e., it assigns either true or false to each atomic proposition. In the case of reactive valuations, this assignment can be dependent on atomic propositions previously evaluated. In this section we will formally define the notion of reactive valuations.

Let $B$ be the sort of boolean values with constants $T$ and $F$ and $RV$ be a sort of reactive valuations. Then for each $a\in A$ let there be a function
\[
y_a:RV\to B
\]
This function is called the yield of $a$ and it allows us to look up the value of $a$ using a specific reactive valuation. Furthermore, for each $a\in A$ there exists a function
\[
\dd a:RV\to RV
\]
called the $a$-derivative. With this function we can capture the dynamic nature of reactive valuations i.e., when we evaluate an atomic proposition $a$ the current reactive valuation can change. It is important to note that the elements in $RV$ do not just encode the value of the individual atomic propositions but also keep a history of atomic propositions previously evaluated. It is therefore possible that two reactive valuations $H$ and $H'$ assign the same values to each atomic proposition but $\dd a(H)\ne \dd a(H')$.

We define the signature $\Sigma_{ReVal}(A)$ to consist of the sorts $B$, $RV$, functions $y_a$ and $\dd a$ for each $a\in A$, and constants $T_{RV}$ and $F_{RV}$ of sort $RV$.

A structure $\mathbb{A}$ over $\Sigma_{ReVal}(A)$ is called a \emph{reactive valuation algebra} (RVA) if the following axioms are satisfied
\begin{align*}
y_a(T_{RV}) &= T\\
y_a(F_{RV}) &= F\\
\dd a(T_{RV}) &= T_{RV}\\
\dd a(F_{RV}) &= F_{RV}\\
\end{align*}
for each $a\in A$. So the constants $T_{RV}$ and $F_{RV}$ assign either $T$ to all the atomic propositions or $F$ to all the atomic propositions, respectively. Furthermore, these two valuations do not change while evaluating an expression.

The value of a proposition $P$ from signature $\sig$ according to a reactive valuation $H$ in a RVA $\mathbb{A}$ is denoted by
\[
P/H
\]
This value is determined as follows: for $a\in A$,
\begin{align*}
	T/H &= T\\
	F/H &= F\\
	a/H &= y_a(H)\\
	(P\lef Q\rig R)/H &= 
	\begin{cases}
		P/\dd Q(H)&\text{if $Q/H=T$}\\
		R/\dd Q(H)&\text{if $Q/H=F$}
	\end{cases}
\end{align*}
where $\dd P(H)$ is a generalized notion of the function $\dd a(H)$, and is defined as follows
\begin{align*}
	\dd T(H) &= H\\
	\dd F(H) &= H\\
	\dd{(P\lef Q\rig R)}(H) &=
	\begin{cases}
		\dd P(\dd Q(H))&\text{if $Q/H=T$}\\
		\dd R(\dd Q(H))&\text{if $Q/H=F$}\\
	\end{cases}
\end{align*}
There are a number of observations to be made here.

The propositions $T$ and $F$ are always evaluated as true and false, respectively, no matter which evaluation we use. Furthermore, evaluating $T$ and $F$ will not change the current valuation. So, it follows that $(a\lef T\rig b)/H=a/H$.

Also important to note is that in e.g. the proposition $a\lef (b\lef T\rig c)\rig d$ the values of $a$ and $d$ will not depend on $c$ as $c$ never gets evaluated.

Finally, let us look at a few examples.
\begin{align*}
((a\lef T\rig b)\lef a\rig c)/H
&= \begin{cases}
(a\lef T\rig b)/\dd a(H) &\text{if $a/H=T$}\\
c/\dd a(H) &\text{if $a/H=F$}\\
\end{cases}\\
&= \begin{cases}
a/\dd a(H) &\text{if $a/H=T$}\\
c/\dd a(H) &\text{if $a/H=F$}\\
\end{cases}\\
\end{align*}
Note that if we know that $a/H=T$ then it does not necessarily follow that $a/\dd a(H)$ is also true because the valuations $H$ and $\dd a(H)$ are different. To emphasize this point, look at propositions $a\lef a\rig a$ and $a$. Although, it certainly is true that in a classical setting these two are equivalent, it immediately follows that a valuation $H$ exists such that $(a\lef a\rig a)/H\ne a/H$.

The following example has instead of a constant or an atomic proposition as a condition, another conditional statement.
\begin{align*}
(a\lef(b\lef c\rig a)\rig F)/H
&=\begin{cases}
a/\dd{(b\lef c\rig a)}(H) &\text{if $(b\lef c\rig a)/H=T$}\\
F/\dd{(b\lef c\rig a)}(H) &\text{if $(b\lef c\rig a)/H=F$}\\
\end{cases}\\
&=\begin{cases}
a/\dd b(\dd c(H)) &\text{if $c/H=T$ and $b/\dd c(H)=T$}\\
a/\dd a(\dd c(H)) &\text{if $c/H=F$ and $a/\dd c(H)=T$}\\
F &\text{if $(b\lef c\rig a)/H=F$}\\
\end{cases}\\
&=\begin{cases}
a/\dd b(\dd c(H)) &\text{if $c/H=T$ and $b/\dd c(H)=T$}\\
a/\dd a(\dd c(H)) &\text{if $c/H=F$ and $a/\dd c(H)=T$}\\
F &\text{otherwise}\\
\end{cases}\\
\end{align*}
This example illustrates that the value of the leftmost $a$ does not only depend on $c$ being evaluated but also on the actual value of $c$ because this determines whether either $b$ or $a$ (the occurrence of $a$ right next to $c$ in the expression) is evaluated which in turn affects the value of the leftmost $a$.

\section{Reactive valuation varieties}
In the previous section we introduced the notion of reactive valuation algebra (RVA). In this section we define a number of specific classes of RVAs. Since the signature of all RVAs is the same, we refer to these classes as varieties. We define the following varieties of RVAs:
\begin{description}
\item[Free reactive valuations] This variety of RVAs consist of all possible RVAs. So there are no requirements posed on the RVAs in this variety other than that they are RVAs. Every other variety will be a subvariety of this one.
\item[Repetition-proof valuations] The variety with repetition-proof valuations consists of all RVAs that satisfy
\[
y_a(x) = y_a(\dd a(x))
\]
for all $a\in A$.
\item[Contractive valuations] The variety with contractive valuations is a subvariety of the variety with repetition-proof valuations i.e., every RVA in this variety will also be in the variety with repetition-proof valuations. In addition the RVAs here will satisfy
\[
\dd a(\dd a(x)) = \dd a(x)
\]
for all $a\in A$.
\item[Static valuations] The RVAs in the variety with static valuations satisfy the following equation
\[
y_a(\dd b(x))=y_a(x)
\]
for all $a,b\in A$.
\end{description}
The definitions of these varieties were taken from \cite{main}. In Appendix A we define and examine an additional variety of our own.

Given a variety $K$ we say that propositions $P$ and $Q$ are $K$-equivalent, which is denoted as
\[
P\equiv_K Q
\]
if $P/H=Q/H$ for all RVAs $\mathbb{A}$ in the variety $K$ and valuations $H\in\mathbb{A}$. Using the relation $\equiv_K$ we can define a congruence relation over propositions. We say that $P$ and $Q$ are $K$-congruent,
\[
P=_K Q
\]
if $=_K$ is the largest congruence contained in $\equiv_K$.

Given the four varieties we defined earlier we will use the abbreviations $fr$, $rp$, $cr$, $st$ for free, repetition-proof, contractive and static varieties, respectively

Bergstra and Ponse prove the following proposition.
\begin{prop}\label{variety relation prop}
$\equiv_{fr}\subsetneq\equiv_{rp}\subsetneq\equiv_{cr}\subsetneq\equiv_{st}$ and $=_K\subsetneq\equiv_K$ for $K\in\{fr,rp,cr\}$
\end{prop}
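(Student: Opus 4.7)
The plan is to handle the two claims separately. For the chain $\equiv_{fr}\subseteq\equiv_{rp}\subseteq\equiv_{cr}\subseteq\equiv_{st}$, the inclusions are immediate from the fact that each variety is a subvariety of the one listed to its left: if $P/H = Q/H$ holds in every free RVA, it certainly holds in every repetition-proof RVA, and similarly along the chain. So only strictness carries content.

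For the strict inclusions between consecutive varieties, I would exhibit a separating equation together with a counter-RVA. To separate $\equiv_{rp}$ from $\equiv_{fr}$, the equation $a\lef a\rig F \equiv_{rp} a$ is immediate from the repetition-proof axiom $y_a(x)=y_a(\dd a(x))$; to see it fails in some free RVA, build a two-state RVA in which $y_a(H)=T$ but $y_a(\dd a(H))=F$. For $\equiv_{cr}$ versus $\equiv_{rp}$, I would use $x\lef(a\lef a\rig a)\rig y \equiv_{cr} x\lef a\rig y$: under the rp axiom the condition $(a\lef a\rig a)/H$ reduces to $a/H$, so both sides reduce to expressions that agree precisely when $x/\dd a(\dd a(H))=x/\dd a(H)$ — true in cr by $\dd a\circ\dd a=\dd a$, but avoidable in an rp RVA that does not identify those two states. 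Finally, for $\equiv_{st}$ versus $\equiv_{cr}$, pick two distinct atomic propositions $a,b$ (using $|A|\ge 2$) and use $a\lef b\rig a \equiv_{st} a$, which holds in every static RVA by $y_a\circ\dd b = y_a$ and fails in a cr RVA engineered so that $y_a(\dd b(H)) \ne y_a(H)$ for some $H$.

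For $=_K \subsetneq \equiv_K$ with $K\in\{fr,rp,cr\}$, the plan is to exhibit a single witness pair that is $K$-equivalent but not preserved by some context. Take $P = T\lef a\rig T$ and $Q = T$: both evaluate to $T$ under every reactive valuation, so $P\equiv_K Q$. In the context $C[\cdot]=b\lef\cdot\rig F$ with $b\neq a$ one computes $C[P]/H = b/\dd a(H)$ while $C[Q]/H=b/H$. An RVA in any of $fr,rp,cr$ whose $y_b$ value changes under $\dd a$ then witnesses $C[P]\not\equiv_K C[Q]$, so $P\ne_K Q$ by the definition of $=_K$ as the largest congruence contained in $\equiv_K$. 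The key point is that the rp and cr axioms constrain only $y_a$ (and the iterate $\dd a\circ\dd a$), leaving $y_b\circ\dd a$ unconstrained, so such RVAs are easy to produce.

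The main obstacle is the pattern underlying both step 2 and step 3: in each case one must produce an explicit RVA lying in the weaker variety but outside the stricter one, simultaneously falsifying the targeted equation. The cleanest approach is to model states as finite sequences over $A$ with a normalization rule matched to the axioms of the target variety (no normalization for $fr$; idempotence of adjacent duplicates in the yield, and in the derivative, for $cr$; and so on), and then read yields off from an appropriate coordinate. Once the constructions are fixed, both the variety axioms and the falsification of the candidate equation are routine finite case checks.
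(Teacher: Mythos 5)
Your proposal is correct, and it is in fact more detailed than the paper's own treatment: the paper simply cites Bergstra and Ponse for the chain of strict inclusions, remarking that the differences between the varieties ``will become apparent in the following sections'', and only argues the second claim, using the witness $F\lef a\rig F\equiv_K F$ placed in the context $b\lef\cdot\rig b$. Your argument for $=_K\subsetneq\equiv_K$ is the same idea with $T\lef a\rig T$ in place of $F\lef a\rig F$ and the context $b\lef\cdot\rig F$, and is fine. Your separating witnesses for the three links of the chain ($a\lef a\rig F$ versus $a$; a condition of the form $a\lef a\rig a$, whose derivative $\dd a(\dd a(H))$ detects the failure of $\dd a\circ\dd a=\dd a$ in a merely repetition-proof algebra; and $a\lef b\rig a$ versus $a$) are all correct, and modelling valuations as finite sequences over $A$ with a normalization rule tailored to the target variety is exactly the standard way to realize the required counter-RVAs. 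One point you should make explicit at the outset rather than only in your third step: not just the last separation but also the middle one, and the claim $=_K\subsetneq\equiv_K$ for $K\in\{rp,cr\}$, genuinely require $|A|\ge 2$. Indeed, by Proposition \ref{prop one} and Corollary \ref{cor one}, when $|A|=1$ every repetition-proof valuation behaves statically, so $\equiv_{rp}$, $\equiv_{cr}$ and $\equiv_{st}$ all coincide and are already congruences; the proposition therefore carries an implicit hypothesis $|A|\ge 2$ (your observation that the $rp$ and $cr$ axioms leave $y_b\circ\dd a$ unconstrained is what makes the choice of a second atom $b\ne a$ essential, since taking $b\syn a$ would be blocked by $y_a(\dd a(x))=y_a(x)$).
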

The first part of this proposition and  the differences between the varieties will become apparent in the following sections. The second part is best demonstrated using an example. If we take the term $F\lef a\rig F$ then clearly
\[
(F\lef a\rig F)/H=F/H
\] 
for all $H\in\mathbb{A}$ and $\mathbb{A}\in K$. However, it is not the case that for all varieties $K$ the following holds
\[
(b\lef (F\lef a\rig F)\rig b)/H=(b\lef F\rig b)/H
\]
because in the left-hand side the value of $b$ will depend on $a$ but on the right-hand side this dependency is gone. This means that although for all varieties $K$ we have $F\lef a\rig F\equiv_K F$, it does not follow that $F\lef a\rig F=_K F$. In the section on static valuations, we show that congruence and equivalence do happen to coincide for that variety.

The following proposition clarifies the relationship between congruence and equivalence for arbitrary variety $K$.
\begin{prop}\label{equiv congr prop}
If $P\equiv_KQ$ and for all $\mathbb{A}\in K$ and $H\in\mathbb{A}$, 
\[
\dd P(H)=\dd Q(H)
\]
then
\[
P=_KQ
\]
\end{prop}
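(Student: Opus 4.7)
The plan is to build the relation
\[
P\sim Q \quad\text{iff}\quad P\equiv_K Q \text{ and } \dd P(H)=\dd Q(H)\text{ for all } \mathbb{A}\in K,\ H\in\mathbb{A},
\]
show that $\sim$ is a congruence contained in $\equiv_K$, and then invoke the fact that $=_K$ is by definition the \emph{largest} such congruence. Since the hypothesis of the proposition says exactly that $P\sim Q$, the conclusion $P=_K Q$ follows from $\sim\ \subseteq\ =_K$.

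First I would note that $\sim$ is trivially an equivalence relation, as both conjuncts in its definition are reflexive, symmetric and transitive, and that $\sim\ \subseteq\ \equiv_K$ by the first conjunct. The only non-routine step is verifying that $\sim$ is preserved by the single operation of the signature, conditional composition; that is, given $P_i\sim Q_i$ for $i=1,2,3$, we must produce
\[
P_1\lef P_2\rig P_3 \ \sim\ Q_1\lef Q_2\rig Q_3.
\]

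For this I would unfold the definitions of $P/H$ and $\dd{P}(H)$ from the previous section. Fix $\mathbb{A}\in K$ and $H\in\mathbb{A}$. Because $P_2\equiv_K Q_2$, we have $P_2/H=Q_2/H$; call this common value $b$. Because $P_2\sim Q_2$ we also have $\dd{P_2}(H)=\dd{Q_2}(H)$; call this $H'$. Split on $b$: if $b=T$, then
\[
(P_1\lef P_2\rig P_3)/H = P_1/H' \quad\text{and}\quad (Q_1\lef Q_2\rig Q_3)/H = Q_1/H',
\]
and $P_1\equiv_K Q_1$ gives $P_1/H'=Q_1/H'$; for the derivative, both sides reduce to $\dd{P_1}(H')=\dd{Q_1}(H')$ by the hypothesis $\dd{P_1}=\dd{Q_1}$. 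The case $b=F$ is symmetric, using $P_3\sim Q_3$. This establishes both conjuncts of $\sim$ at the compound term, so $\sim$ is indeed a congruence.

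The main (and essentially only) obstacle is keeping the case analysis on $P_2/H$ honest: the value and the derivative of a conditional composition are defined by the same branching, so one must check that the two hypotheses about $P_2,Q_2$ (equal yield \emph{and} equal derivative) are used together to make the two arguments collapse onto the same branch before invoking the hypotheses on $P_1,Q_1$ or $P_3,Q_3$. Once that is observed, the rest is a direct reading-off of the recursive definitions of $\_/H$ and $\dd{\_}(H)$, and the conclusion $P=_KQ$ is immediate from the maximality of $=_K$.
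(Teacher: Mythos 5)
Your proof is correct, but it is organized differently from the one in the thesis. The thesis argues directly: it takes the three one--level contexts $P\lef S\rig R$, $S\lef P\rig R$ and $S\lef R\rig P$, computes $\_/H$ in each, and checks that $\equiv_K$ is preserved (only the middle position needs the derivative hypothesis). You instead package the two hypotheses into an auxiliary relation $\sim$ (equal yield \emph{and} equal derivative under every $H$), verify that $\sim$ is itself a congruence contained in $\equiv_K$, and conclude $\sim\ \subseteq\ =_K$ from maximality. The computations at the heart of both arguments are the same case split on $P_2/H$ using $\dd{P_2}(H)=\dd{Q_2}(H)$, but your formulation buys something real: because $\sim$ carries the derivative condition along as part of the induction invariant, closure under the single operation of the signature immediately gives closure under arbitrarily nested contexts, whereas the thesis's check of one--level contexts leaves the extension to deeper contexts implicit (to push $P$ through a context like $U\lef(S\lef P\rig R)\rig V$ one also needs that the derivatives of $S\lef P\rig R$ and $S\lef Q\rig R$ agree, which is exactly the extra conjunct you track). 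Your observation that the two hypotheses must be used \emph{together} to collapse both sides onto the same branch before the hypotheses on $P_1,Q_1$ or $P_3,Q_3$ can be invoked is precisely the right point, and the only place one could go wrong. One small thing to make explicit: $H'=\dd{P_2}(H)$ lives in the same algebra $\mathbb{A}\in K$, so the hypotheses $P_1\equiv_K Q_1$ and $\dd{P_1}(H')=\dd{Q_1}(H')$ really do apply at $H'$.
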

\begin{proof}
Assume $P\equiv_KQ$ and $\forall\mathbb{A}\in K~\forall H\in\mathbb{A}$, $\dd P(H)=\dd Q(H)$. Since $=_K$ is defined as the largest congruence contained in $\equiv_K$, $P=_KQ$ if for all closed terms $S$ and $R$ the following three cases are true:
\begin{itemize}
\item[(1)] $P\lef S\rig R\equiv_K Q\lef S\rig R$
\item[(2)] $S\lef P\rig R\equiv_K S\lef Q\rig R$
\item[(3)] $S\lef R\rig P\equiv_K S\lef R\rig Q$
\end{itemize}
We continue by proving these three cases.

Since $P\equiv_KQ$, it follows that $\forall\mathbb{A}\in K~\forall H\in\mathbb{A}$, $P/H=Q/H$. Consequently, $P/\dd S(H)=Q/\dd S(H)$. Hence,
\begin{align*}
(P\lef S\rig R)/H
&=\begin{cases}
P/\dd S(H) & \text{if $S/H=T$}\\
R/\dd S(H) & \text{if $S/H=F$}
\end{cases}\\
&=\begin{cases}
Q/\dd S(H) & \text{if $S/H=T$}\\
R/\dd S(H) & \text{if $S/H=F$}
\end{cases}\\
&=(Q\lef S\rig R)/H
\end{align*}
So case (1) is true. Furthermore, the argument for case (3) is symmetric to the one give here. So case (3) is also true.

By assumption we know that $P/H=Q/H$ and $\dd P(H)=\dd Q(H)$. Thus,
\begin{align*}
(S\lef P\rig R)/H
&=\begin{cases}
S/\dd P(H) & \text{if $P/H=T$}\\
R/\dd P(H) & \text{if $P/H=F$}
\end{cases}\\
&=\begin{cases}
S/\dd Q(H) & \text{if $Q/H=T$}\\
R/\dd Q(H) & \text{if $Q/H=F$}
\end{cases}\\
&=(S\lef Q\rig R)/H
\end{align*}
Consequently, case (2) also holds, and $P=_KQ$.
\end{proof}

In the following sections we will further discuss the varieties we defined here. This discussion will include proper axiomatizations, and proofs of soundness and completeness.

\section{Notation and conventions}
Before continuing with the in-depth discussion of the varieties, we recap and introduce additional notation and conventions. We have encountered the following equality relations thus far:
\begin{itemize}
\item $\equiv_K$ denotes semantic equivalence with respect to variety $K$.
\item $=_K$ is the largest congruence contained in $\equiv_K$.
\item Plain $=$ is used to denote three different types of congruences. The first type is provable equality e.g. $CP\vdash x\lef T\rig y=x$. However we often omit the ``$CP\vdash$'' part if it is clear from the context which axiomatization we use. We also use $=$ in the interpretation of terms given some valuation $H$ e.g. $(P\lef F\rig Q)/H=Q/H$. Finally we use $=$ for equality between valuations e.g. $\dd T(H)=H$. Note that no ambiguity arises from these three different interpretations because they deal with equality over three distinct classes of objects and it will be immediately clear from the arguments or the context how $=$ is used.
\end{itemize}
Absent from this list is syntactic equality. We therefore introduce the symbol $\syn$ for syntactic equality\footnote{Often $\equiv$ is used to denote syntactic equality. However, since $\equiv$ is already used for semantic equivalence, we opted to use $\syn$ in order to avoid confusion.}.

We have the following conventions concerning symbols:
\begin{itemize}
\item The letters $a, b, c, \ldots$ denote atomic propositions. $A$ is the set of all atomic propositions.
\item The letters $x, y, z, u, v, w, \ldots$ denote variables. $V$ is the set of all variables.
\item The capital letters $P, Q, R,\ldots$ denote closed terms. $T(\sig)$ is the set of all closed terms from signature $\sig$.
\item The letters $t, s, r,\ldots$ denote terms that can possibly, but not necessarily, be open. $\mathbb{T}(\sig)$ is the set of all terms.
\end{itemize}

\section{Free reactive valuations}
$fr$-Congruence is axiomatized by CP. The variety with free reactive valuations is the variety on which all other varieties are based. This does not mean that there are no limitations. For example if we take the term $a\lef b\rig a$ it is not possible to distinguish between the two $a$'s i.e., we are forced to give them the same value no matter what valuation we choose. This limitation can be found in the definition of RVA where we define $y_a:RV\to B$ and $\dd a:RV\to RV$ as functions instead of relations.

\subsection{Soundness}
We have claimed that $fr$-congruence is axiomatized by CP. We, however, have not yet proven that the resulting theory is sound and complete with respect to our model. In this section we will show that CP is sound. Soundness of a theory with respect to a model means that whatever we derive from the axiomatization of the theory is also true in that model.
\begin{thm}\label{soundness cp}
For all closed terms $P$ and $Q$, 
\[
CP\vdash P=Q\qquad\Longrightarrow\qquad P=_{fr}Q
\]
\end{thm}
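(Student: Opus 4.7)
The plan is to induct on the length of a CP-derivation of $P=Q$, reducing everything to verifying that each axiom instance CP1--CP4 is sound with respect to $=_{fr}$. Since $=_{fr}$ is a congruence by definition (it is the largest congruence inside $\equiv_{fr}$), the congruence inference rule is automatic, and reflexivity, symmetry, and transitivity come for free. So the whole theorem reduces to: for each axiom and each substitution of closed terms $P,Q,R,U,V$ for the variables, the resulting closed equation holds in $=_{fr}$.

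To verify each such instance, I would invoke Proposition~\ref{equiv congr prop} so that showing $t =_{fr} r$ reduces to the two checks $t \equiv_{fr} r$ and $\dd{t}(H) = \dd{r}(H)$ for every RVA $\mathbb{A}$ and every $H \in \mathbb{A}$. The checks are then purely computational, driven by the semantic clauses for $\_\lef\_\rig\_$ and for $\dd{(\_\lef\_\rig\_)}$ given in the previous section. For CP1 one uses $T/H = T$ and $\dd T(H) = H$ to collapse both $(P\lef T\rig Q)/H$ and $\dd{(P\lef T\rig Q)}(H)$ to $P/H$ and $\dd P(H)$ respectively; CP2 is symmetric. For CP3 one splits on whether $P/H = T$ or $P/H = F$ and observes that in both cases the value reduces to $P/H$ and the derivative to $\dd P(H)$, using $\dd T(\dd P(H)) = \dd P(H) = \dd F(\dd P(H))$.

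The main obstacle is CP4, which needs a double case analysis. For an instance
\[
P \lef (Q \lef R \rig U) \rig V \;=\; (P \lef Q \rig V) \lef R \rig (P \lef U \rig V),
\]
I would first split on $R/H \in \{T,F\}$. In the $R/H = T$ branch, the outer semantics on the right-hand side evaluates to $(P\lef Q\rig V)/\dd R(H)$, which further splits on $Q/\dd R(H)$; meanwhile, on the left-hand side, $(Q\lef R\rig U)/H = Q/\dd R(H)$, so the condition and the resulting subterm to be evaluated match exactly, and the common value is $P/\dd Q(\dd R(H))$ or $V/\dd Q(\dd R(H))$. The $R/H = F$ branch is symmetric, using $(Q\lef R\rig U)/H = U/\dd R(H)$. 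An identical case split handles $\dd{(\cdot)}(H)$ on the two sides; the definitions of $\dd{(P\lef Q\rig V)}$ and $\dd{(Q\lef R\rig U)}$ align so that both derivatives become $\dd P(\dd Q(\dd R(H)))$, $\dd V(\dd Q(\dd R(H)))$, $\dd P(\dd U(\dd R(H)))$, or $\dd V(\dd U(\dd R(H)))$ in the four corresponding subcases.

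Finally, I would observe that since every axiom instance on closed terms satisfies both hypotheses of Proposition~\ref{equiv congr prop}, each axiom yields a $=_{fr}$ equation; the inductive step then combines such equations via reflexivity, symmetry, transitivity, and the congruence rule, all of which are already properties of $=_{fr}$. The bookkeeping is entirely routine once CP4 is handled; the cleanest way to present it is to prove in parallel, by induction on $P$, the two auxiliary facts $P/H$-invariance and $\dd P(H)$-invariance under axiomatic rewriting of subterms, which is exactly the content packaged by Proposition~\ref{equiv congr prop}.
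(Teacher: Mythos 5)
Your proposal is correct and follows essentially the same route as the paper: reduce soundness to checking each closed instance of CP1--CP4, invoke Proposition~\ref{equiv congr prop} so that each check amounts to verifying both $(\text{LHS})/H=(\text{RHS})/H$ and $\dd{(\text{LHS})}(H)=\dd{(\text{RHS})}(H)$ by direct computation with the semantic clauses, with the only nontrivial case being the double case split for CP4 on $R/H$ and then on $Q/\dd R(H)$ (resp.\ $U/\dd R(H)$). The paper leaves the derivation-induction and the handling of the equational inference rules implicit, which you spell out explicitly; this is a presentational difference only.
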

\begin{proof}
It suffices to show that the four axioms CP1, CP2, CP3 and CP4 are sound with respect to the variety with free reactive valuations. Let RVA $\mathbb{A}$ from the variety $fr$ with valuation $H\in\mathbb{A}$ be given. Then, starting with CP1 we proceed as follows.
\begin{align*}
(P\lef T\rig Q)/H
&=\begin{cases}
P/\dd T(H)&\text{if $T/H=T$}\\
Q/\dd T(H)&\text{if $T/H=F$}
\end{cases}\\
&=\begin{cases}
P/\dd T(H)&\text{if $T=T$}\\
Q/\dd T(H)&\text{if $T=F$}
\end{cases}\\
&=P/\dd T(H)\\
&=P/H
\end{align*}
We use the semantics we defined in the previous sections to evaluate the left-hand side of CP1 with an arbitrary valuation $H$, and end up with the right-hand side. In addition, observe that we do not pose any requirements on $H$. Consequently, the above derivation holds for all RVAs $\mathbb{A}$ and $H\in\mathbb{A}$. 
Furthermore, we also have the following:
\begin{align*}
\dd{(P\lef T\rig Q)}(H)
&=\begin{cases}
\dd P(\dd T(H)) & \text{if $T/H=T$}\\
\dd Q(\dd T(H)) & \text{if $T/H=F$}
\end{cases}\\
&=\dd P(H)
\end{align*}
Thus, by Proposition \ref{equiv congr prop}, we have proven that CP1 is sound.

Using the same strategy we prove that axioms CP2, CP3 and CP4 are sound.
\begin{align*}
(P\lef F\rig Q)/H
&=\begin{cases}
P/\dd F(H)&\text{if $F/H=T$}\\
Q/\dd F(H)&\text{if $F/H=F$}
\end{cases}\\
&=\begin{cases}
P/\dd F(H)&\text{if $F=T$}\\
Q/\dd F(H)&\text{if $F=F$}
\end{cases}\\
&=Q/\dd F(H)\\
&=Q/H
\end{align*}
\begin{align*}
\dd{(P\lef F\rig Q)}(H)
&=\begin{cases}
\dd P(\dd F(H)) & \text{if $F/H=T$}\\
\dd Q(\dd F(H)) & \text{if $F/H=F$}
\end{cases}\\
&=\dd Q(H)
\end{align*}

\begin{align*}
(T\lef P\rig F)/H
&=\begin{cases}
T/\dd P(H) & \text{if $P/H=T$}\\
F/\dd P(H) & \text{if $P/H=F$}\\
\end{cases}\\
&=\begin{cases}
T & \text{if $P/H=T$}\\
F & \text{if $P/H=F$}\\
\end{cases}\\
&=P/H
\end{align*}
\begin{align*}
\dd{(T\lef P\rig F)}(H)
&=\begin{cases}
\dd T(\dd P(H)) & \text{if $P/H=T$}\\
\dd F(\dd P(H)) & \text{if $P/H=F$}
\end{cases}\\
&=\begin{cases}
\dd P(H) & \text{if $P/H=T$}\\
\dd P(H) & \text{if $P/H=F$}
\end{cases}\\
&=\dd P(H)
\end{align*}
Showing that axiom CP4 is sound, is a bit more complicated than the previous three axioms because of the number of cases involved.
\begin{align*}
(P\lef(Q\lef R\rig S)\rig V)/H 
&=\begin{cases}
P/\dd{(Q\lef R\rig S)}(H) & \text{if $(Q\lef R\rig S)/H=T$}\\
V/\dd{(Q\lef R\rig S)}(H) & \text{if $(Q\lef R\rig S)/H=F$}\\
\end{cases}\\\\
&=\begin{cases}
P/\dd{(Q\lef R\rig S)}(H) & \text{if $Q/\dd R(H)=T$ and $R/H=T$}\\
P/\dd{(Q\lef R\rig S)}(H) & \text{if $S/\dd R(H)=T$ and $R/H=F$}\\
V/\dd{(Q\lef R\rig S)}(H) & \text{if $Q/\dd R(H)=F$ and $R/H=T$}\\
V/\dd{(Q\lef R\rig S)}(H) & \text{if $S/\dd R(H)=F$ and $R/H=F$}\\
\end{cases}\\\\
&=\begin{cases}
P/\dd Q(\dd R(H)) & \text{if $Q/\dd R(H)=T$ and $R/H=T$}\\
P/\dd S(\dd R(H)) & \text{if $S/\dd R(H)=T$ and $R/H=F$}\\
V/\dd Q(\dd R(H)) & \text{if $Q/\dd R(H)=F$ and $R/H=T$}\\
V/\dd S(\dd R(H)) & \text{if $S/\dd R(H)=F$ and $R/H=F$}\\
\end{cases}\\
\end{align*}
Since
\begin{align*}
(P\lef Q\rig V)/\dd R(H)
&= \begin{cases}
P/\dd Q(\dd R(H)) & \text{if $Q/\dd R(H)=T$}\\
V/\dd Q(\dd R(H)) & \text{if $Q/\dd R(H)=F$}
\end{cases}
\end{align*}
and
\begin{align*}
(P\lef S\rig V)/\dd R(H)
&= \begin{cases}
P/\dd S(\dd R(H)) & \text{if $S/\dd R(H)=T$}\\
V/\dd S(\dd R(H)) & \text{if $S/\dd R(H)=F$}
\end{cases}
\end{align*}
it is possible to e.g. replace $P/\dd S(\dd R(H))$ if $S/\dd R(H)=T$,
and $V/\dd S(\dd R(H))$ if $S/\dd R(H)=F$ by the expression $(P\lef
S\rig V)/\dd R(H)$. So continuing from where we left off,
\begin{align*}
&=\begin{cases}
(P\lef Q\rig V)/\dd R(H) & \text{if $R/H=T$}\\
(P\lef S\rig V)/\dd R(H) & \text{if $R/H=F$}\\
\end{cases}\\
&=((P\lef Q\rig V)\lef R\rig (P\lef S\rig V))/H
\end{align*}
\begin{align*}
\dd{(P\lef(Q\lef R\rig S)\rig V)}(H)
&=\begin{cases}
\dd P(\dd Q(\dd R(H))) & \text{if $R/H=T$ and $Q/\dd R(H)=T$}\\
\dd V(\dd Q(\dd R(H))) & \text{if $R/H=T$ and $Q/\dd R(H)=F$}\\
\dd P(\dd S(\dd R(H))) & \text{if $R/H=F$ and $S/\dd R(H)=T$}\\
\dd V(\dd S(\dd R(H))) & \text{if $R/H=F$ and $S/\dd R(H)=F$}
\end{cases}\\
&=\begin{cases}
\dd{(P\lef Q\rig V)}(\dd R(H)) & \text{if $R/H=T$}\\
\dd{(P\lef S\rig V)}(\dd R(H)) & \text{if $R/H=F$}\\
\end{cases}\\
&=\dd{((P\lef Q\rig V)\lef R\rig(P\lef S\rig V))}(H)
\end{align*}

So all axioms are sound with respect to $=_{fr}$.
\end{proof}

\subsection{Completeness}
In this section we prove completeness. The axiomatization $CP$ is complete with respect to the variety with free reactive valuations, if two closed terms are $fr$-congruent then these terms are also provably equal in $CP$.

Before proving completeness we first introduce basic forms, which are a class of closed terms. We will show that each closed term is provably equal to a basic form. The primary reason for introducing these basic forms is that they will greatly simplify most proofs by structural induction on closed terms because their structure is less complicated. This will be especially useful in proving completeness.

\begin{defn}\label{basic form def}
The set of basic forms $\BF$ is defined as the smallest set
such that $T,F\in\BF$, and if $P,Q\in\BF$ then $P\lef a\rig Q\in\BF$ for all atomic propositions $a\in A$.
\end{defn}

So e.g. $F\lef(a\lef b\rig c)\rig T$ is not a basic form but $(F\lef a\rig T)\lef b\rig(F\lef c\rig T)$ is. Similarly, $a$ is not a basic form but $T\lef a\rig F$ is. If conditional composition occurs in a basic form, the antecedent is always an atomic proposition.

An alternative way of looking at basic forms is to view them as labeled binary trees i.e., the basic form $P\lef a\rig Q$ corresponds to the tree
\[
\xymatrix{
&a\ar[dl]\ar[dr]\\
T(P)&&T(Q)
}
\]
where $T(P)$ and $T(Q)$ are the binary trees corresponding to $P$ and $Q$, respectively. Hence, the nodes of the tree consist of atomic propositions and the leaves of either $T$ or $F$. This illustrates the simplicity of basic forms because if we would try to similarly construct a binary tree for arbitrary closed terms, the nodes themselves would have to be trees because the antecedent of conditional composition occurring in such a term can itself be an arbitrary closed term.

As mentioned before we will prove that for each closed term there exists a basic form such that they are provably equal to each other.

\begin{lem}\label{basic form theorem}
For each closed term $P$ over signature $\sig$ there exists a
term $P'\in\BF$ such that $CP\vdash P=P'$.
\end{lem}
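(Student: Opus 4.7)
The plan is to prove this by structural induction on the closed term $P$, with an auxiliary induction on the size of the antecedent appearing in conditional composition. The base cases are immediate: $T$ and $F$ are already in \BF{} by Definition~\ref{basic form def}, and for an atomic proposition $a \in A$ we use CP3 to write $a = T \lef a \rig F$, which is a basic form.

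For the inductive step, suppose $P \syn Q_1 \lef Q_2 \rig Q_3$ and, by the induction hypothesis, there exist basic forms $Q_1', Q_2', Q_3' \in \BF$ with $CP \vdash Q_i = Q_i'$ for $i = 1,2,3$. By the congruence property of equality, it then suffices to prove the following auxiliary claim: for any basic forms $R_1, R_2, R_3 \in \BF$, the term $R_1 \lef R_2 \rig R_3$ is provably equal to some basic form. I would prove this auxiliary claim by induction on the structure of the middle argument $R_2$, since the overall complexity of the term to be normalized is controlled by the antecedent. If $R_2 \syn T$, then CP1 gives $R_1 \lef T \rig R_3 = R_1 \in \BF$; if $R_2 \syn F$, then CP2 gives $R_1 \lef F \rig R_3 = R_3 \in \BF$.

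The remaining case, where $R_2 \syn S_1 \lef a \rig S_2$ for basic forms $S_1, S_2 \in \BF$ and $a \in A$, is the crux of the proof, and this is where axiom CP4 does the work. Applying CP4 we obtain
\[
R_1 \lef (S_1 \lef a \rig S_2) \rig R_3 \;=\; (R_1 \lef S_1 \rig R_3) \lef a \rig (R_1 \lef S_2 \rig R_3).
\]
By the auxiliary induction hypothesis, each of $R_1 \lef S_1 \rig R_3$ and $R_1 \lef S_2 \rig R_3$ is provably equal to some basic form $U_1, U_2 \in \BF$ (note that $S_1$ and $S_2$ are structurally strictly smaller basic forms than $S_1 \lef a \rig S_2$, so the inner induction is well-founded). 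Finally, by congruence, the whole expression is provably equal to $U_1 \lef a \rig U_2$, which is in \BF{} by Definition~\ref{basic form def}.

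The main obstacle here is the last case: a single application of CP4 removes one layer of conditional composition from the antecedent, but it duplicates $R_1$ and $R_3$, so a naive induction on the overall size of the term will not close. The choice of induction measure is therefore important: by measuring only the size of the antecedent basic form $R_2$ and observing that CP4 strictly reduces this measure (both occurrences on the right have antecedent $a$, an atom, and the recursive subproblems use the strictly smaller antecedents $S_1$ and $S_2$), the induction goes through cleanly.
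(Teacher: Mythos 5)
Your proof is correct and follows essentially the same route as the paper: structural induction on $P$, reducing the inductive step to an inner induction on the (already normalized) antecedent, with CP1/CP2 handling the constant cases and CP4 pushing the atomic condition outward. Your closing remark about why the measure must be the antecedent rather than the overall term size is a nice explicit justification of a point the paper leaves implicit.
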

\begin{proof}
We proceed by structural induction on $P$. Suppose $P$ is either $T$ or $F$ then $P\in\BF$. Suppose $P\syn a$ for $a\in A$.\footnote{Recall that $\syn$ is used for syntactic equality.} Since $P=T\lef P\rig F$ and $T\lef P\rig F\in\BF$, $P'$ exists. Suppose $P$ is $P_1\lef P_2\rig P_3$. By the induction hypothesis there exist terms $P_1',P_2',P_3'\in\BF$ such that $P_1=P_1'$, $P_2=P_2'$, and $P_3=P_3'$. By congruence, it follows that $P=P_1'\lef P_2'\rig P_3'$. We show by structural induction on $P_2'$ that $P'$ exists.

If $P_2'\syn T$, then $P_1'\lef T\rig P_3'=P_1'$, and $P_1'$ is a basic form. Similarly, if $P_2'$ is $F$, then $P_1'\lef F\rig P_3'=P_3'$, and $P_3'\in\BF$. If $P_2'\syn P_{21}'\lef a\rig P_{22}'$ for some $a\in A$ then
\begin{align*}
P_1'\lef (P_{21}'\lef a\rig P_{22}')\rig P_3' &=
(P_1'\lef P_{21}'\rig P_3')\lef a\rig (P_1'\lef P_{22}'\rig P_3')\\
&=_{IH} V\lef a\rig W
\end{align*}
where $V,W\in\BF$ and $V=P_1'\lef P_{21}'\rig P_3'$, $W=P_1'\lef P_{22}'\rig P_3'$. Clearly, $V\lef a\rig W\in\BF$.
\end{proof}

The following lemma is needed in the Lemma \ref{completeness lemma} that shows that syntactic equality and $fr$-congruence coincide.
\begin{lem}\label{inv congr lemma}
For $P_1\lef a\rig P_2, Q_1\lef a\rig Q_2\in\BF$,
\[
P_1\lef a\rig P_2=_{fr}Q_1\lef a\rig Q_2\qquad\Longrightarrow\qquad P_1=_{fr} Q_1\wedge P_2=_{fr} Q_2
\]
\end{lem}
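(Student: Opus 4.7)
The plan is to prove $P_1 =_{fr} Q_1$; the case $P_2 =_{fr} Q_2$ is fully symmetric. I will verify the hypotheses of Proposition \ref{equiv congr prop} by transferring information about $P_1 \lef a \rig P_2$ to its ``then-branch'' $P_1$ via a uniform extension of an arbitrary RVA. Given any RVA $\mathbb{A}$ and valuation $H \in \mathbb{A}$, extend $\mathbb{A}$ to an RVA $\mathbb{A}^+$ by adjoining a fresh valuation $H^T$ with $y_a(H^T) = T$ and $\dd{a}(H^T) = H$, and setting $y_b(H^T), \dd{b}(H^T)$ arbitrarily for $b \neq a$ (say $\dd{b}(H^T) = H^T$). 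The RVA axioms only constrain $T_{RV}$ and $F_{RV}$, which are untouched, so $\mathbb{A}^+$ is again a free RVA. Unfolding the semantics gives, for any closed $R_1, R_2$,
\[
(R_1 \lef a \rig R_2)/H^T = R_1/H \qquad\text{and}\qquad \dd{(R_1 \lef a \rig R_2)}(H^T) = \dd{R_1}(H).
\]

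Since $=_{fr}\,\subseteq\,\equiv_{fr}$, instantiating at $H^T$ yields
\[
P_1/H = (P_1 \lef a \rig P_2)/H^T = (Q_1 \lef a \rig Q_2)/H^T = Q_1/H,
\]
which, as $\mathbb{A}$ and $H$ were arbitrary, shows $P_1 \equiv_{fr} Q_1$. Next, for any closed term $S$, apply $=_{fr}$-congruence in the context $S \lef \cdot \rig S$ to obtain
\[
S \lef (P_1 \lef a \rig P_2) \rig S \;\equiv_{fr}\; S \lef (Q_1 \lef a \rig Q_2) \rig S.
\]
Evaluated at $H^T$, both cases of conditional composition select $S$ as the consequent, so the left-hand side collapses to $S/\dd{(P_1 \lef a \rig P_2)}(H^T) = S/\dd{P_1}(H)$ and the right-hand side to $S/\dd{(Q_1 \lef a \rig Q_2)}(H^T) = S/\dd{Q_1}(H)$. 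Hence $S/\dd{P_1}(H) = S/\dd{Q_1}(H)$ for every closed $S$, every RVA $\mathbb{A}$, and every $H \in \mathbb{A}$.

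Feeding these two facts into the calculation used to prove Proposition \ref{equiv congr prop} now gives $P_1 =_{fr} Q_1$: that proof invokes the derivative hypothesis only to rewrite $S/\dd{P_1}(H)$ as $S/\dd{Q_1}(H)$ for closed $S$, which is exactly what has been established. The symmetric construction with a fresh $H^F$ satisfying $y_a(H^F) = F$ and $\dd{a}(H^F) = H$ then gives $P_2 =_{fr} Q_2$. The main delicacy is precisely this gap between the ``literal equality of derivatives'' demanded by the statement of Proposition \ref{equiv congr prop} and the ``behavioural equality of derivatives'' that the $H^T$-trick actually produces; fortunately the proposition's proof only consumes the latter, so unwrapping rather than black-boxing it is what makes the argument go through.
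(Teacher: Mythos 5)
Your proof is correct, and it relies on the same key device as the paper's --- extending an arbitrary RVA $\mathbb{A}$ with a fresh valuation $H'$ satisfying $y_a(H')=T$ and $\dd a(H')=H$ --- but it is organized quite differently. The paper argues by contraposition: assuming $P_1\ne_{fr}Q_1$, it splits into the case $P_1\not\equiv_{fr}Q_1$ (where the extension turns a valuation distinguishing $P_1$ from $Q_1$ into one distinguishing the composites) and the case where $P_1\equiv_{fr}Q_1$ but congruence fails in some one-level context, of which only the context $S\lef\cdot\rig R$ survives and is again transferred through the extension. You instead argue directly, using the extension to pull both $P_1\equiv_{fr}Q_1$ and the \emph{behavioural} derivative equality $S/\dd{P_1}(H)=S/\dd{Q_1}(H)$ (for all closed $S$, all $\mathbb{A}$, all $H$) out of the hypothesis, and then rerunning the computation in the proof of Proposition~\ref{equiv congr prop}, which indeed consumes only this behavioural form of the derivative hypothesis. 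This is a clean reorganization, and your explicit warning that Proposition~\ref{equiv congr prop} cannot be invoked as a black box --- its statement demands literal equality of derivatives, which the $H^T$-construction does not deliver --- is exactly the right caution. The one place a fastidious reader would ask for more is the final step: the proof of Proposition~\ref{equiv congr prop} certifies membership in the largest congruence by checking one-level contexts only, so strictly one should also check that your two transferred properties are themselves preserved under conditional composition (they are; for instance $S/\dd{Q_1}(\dd{P_2}(H))=(Q_1\circ S)/\dd{P_2}(H)$ reduces the nested derivative case to the one you established). But this is precisely the level of detail at which the paper's own proofs operate, so your argument stands on an equal footing with theirs.
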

\begin{proof}
We prove the contraposition. Then either $P_1\ne_{fr}Q_1$ or $P_2\ne_{fr}Q_2$ implies $P_1\lef a\rig P_2\ne_{fr}Q_1\lef a\rig Q_2$. Assume without loss of generality that $P_1\ne_{fr}Q_1$. Then the following two cases can be distinguished.

In the first case, $P_1\not\equiv_{fr} Q_1$. Consequently, there exists an algebra $\mathbb{A}$ and valuation $H\in\mathbb{A}$ such that $P_1/H\ne Q_1/H$. Subsequently, we construct an algebra $\mathbb{A'}\supseteq\mathbb{A}$ with valuation $H'\in\mathbb{A'}$ such that $\dd a(H')=H$ and $y_a(H')=T$. Then
\begin{align*}
(P_1\lef a\rig P_2)/H'
&=\begin{cases}
P_1/\dd a(H') & \text{if $a/H'=T$}\\
P_2/\dd a(H') & \text{if $a/H'=F$}
\end{cases}\\
&=P_1/H\\
&\ne Q_1/H\\
&=(Q_1\lef a\rig Q_2)/H'
\end{align*}

In the second case, $P_1\equiv_{fr}Q_1$. So the congruence property does not apply to $P_1$ and $Q_1$. It follows that there are closed terms $S$ and $R$ such that one of the following is the case:
\begin{itemize}
\item[(1)] $S\lef R\rig P_1\not\equiv_{fr} S\lef R\rig Q_1$
\item[(2)] $P_1\lef S\rig R\not\equiv_{fr} Q_1\lef S\rig R$
\item[(3)] $S\lef P_1\rig R\not\equiv_{fr} S\lef Q_1\rig R$
\end{itemize}
In each of the three the cases there is an algebra $\mathbb{A}$ and valuation $H\in\mathbb{A}$ such that the left-hand side and the right-hand side are not equal using  valuation $H$. Using this valuation $H$ we know that in case (1) the following applies:
\begin{align*}
(S\lef R\rig P_1)/H
&=\begin{cases}
S/\dd R(H) & \text{if $R/H=T$}\\
P_1/\dd R(H) & \text{if $R/H=F$}
\end{cases}\\
&\ne\begin{cases}
S/\dd R(H) & \text{if $R/H=T$}\\
Q_1/\dd R(H) & \text{if $R/H=F$}
\end{cases}\\
&=(S\lef R\rig Q_1)/H
\end{align*}
Consequently, $P_1/\dd R(H)\ne Q_1/\dd R(H)$. However, this implies that $P_1\not\equiv_{fr} Q_1$. Since we already assumed that $P_1\equiv_{fr} Q_1$, we have a contradiction. Hence, case (1) cannot occur. Using a similar argument, we can show that this also applies to case (2).

This leaves us with case (3). Let $\mathbb{A}$ and $H$ be defined as before. Then, similarly to the case where $P_1\not\equiv_{fr}Q_1$, we construct a new algebra $\mathbb{A'}\supseteq\mathbb{A}$ and valuation $H'\in\mathbb{A'}$ such that $\dd a(H')=H$ and $y_a(H')=T$. Consequently,
\begin{align*}
(S\lef(P_1\lef a\rig P_2)\rig R)/H'
&=(S\lef P_1\rig R)/H\\
&\ne(S\lef Q_1\rig R)/H\\
&=(S\lef(Q_1\lef a\rig Q_2)\rig R)/H'
\end{align*}
So the congruence property does not hold for $P_1\lef a\rig P_2$ and $Q_1\lef a\rig Q_2$, and thus these terms are not congruent to each other, $P_1\lef a\rig P_2\ne_{fr}Q_1\lef a\rig Q_2$.
\end{proof}
It is perhaps interesting to observe that the other direction of the previous lemma follows from congruence i.e., if we know that $P_1=_{fr}Q_1$ and $P_2=_{fr}Q_2$ then $P_1\lef a\rig P_2=_{fr}Q_1\lef a\rig Q_2$.

The next lemma shows that syntactic equality and $fr$-congruence coincide.
\begin{lem}\label{completeness lemma}
For $P,Q\in\BF$,
\[
P=_{fr}Q\qquad\Longleftrightarrow\qquad P\syn Q
\]
\end{lem}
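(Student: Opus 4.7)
The direction $P \syn Q \Rightarrow P =_{fr} Q$ is immediate from reflexivity, so I will focus on the converse. My plan is structural induction on $P$, together with a case analysis on the syntactic shape of $Q$, using Lemma~\ref{inv congr lemma} as the backbone of the matching inductive step.

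The clean inductive case is when $P \syn P_1 \lef a \rig P_2$ and $Q \syn Q_1 \lef a \rig Q_2$ share the same outer atomic proposition: Lemma~\ref{inv congr lemma} yields $P_1 =_{fr} Q_1$ and $P_2 =_{fr} Q_2$, the induction hypothesis gives $P_1 \syn Q_1$ and $P_2 \syn Q_2$, and hence $P \syn Q$.

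All other cases must derive a contradiction from the assumption $P =_{fr} Q$ together with $P \nsyn Q$. When $P \syn T$ and $Q \syn F$ (or vice versa), $T/H \ne F/H$ directly refutes $\equiv_{fr}$ and hence $=_{fr}$. When $P \syn P_1 \lef a \rig P_2$ and $Q \syn Q_1 \lef b \rig Q_2$ with $a \nsyn b$, I will construct an RVA with a valuation $H$ satisfying $y_a(H) = T$, $y_b(H) = F$, and then freely choose the derivatives so that the subsequent evaluations of $P_1/\dd a(H)$ and $Q_2/\dd b(H)$ land on different leaves of the respective basic forms, refuting $\equiv_{fr}$. The sensitive case is $P \syn T$ (or $F$) versus $Q \syn Q_1 \lef a \rig Q_2$, where $Q \equiv_{fr} T$ is possible (e.g.\ $T \lef a \rig T$). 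Here I will use a congruence-breaking context $c \lef \cdot \rig c$ for some $c \in A$: the $T$-side evaluates to $y_c(H)$, while the $Q$-side evaluates to $y_c(\dd Q(H))$ since $Q/H = T$, and because $\dd Q(H)$ factors through $\dd a(H)$ I can build an RVA in which these two yields disagree, refuting $=_{fr}$.

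The main obstacle will be handling the degenerate sub-case of the ``different outer atomic proposition'' scenario in which both $P$ and $Q$ are semantically constant (all leaves $T$, or all leaves $F$): since no choice of leaves separates them, I will again fall back on a congruence-breaking context such as $c \lef \cdot \rig c$, exploiting the fact that $\dd P(H)$ and $\dd Q(H)$ pass through derivatives of distinct atomic propositions $a$ and $b$ and can therefore be forced to yield different values at $c$ in a suitably constructed RVA. Once these RVA constructions are verified, the induction closes and the lemma follows.
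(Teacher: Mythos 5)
Your proposal is correct and follows essentially the same route as the paper: contraposition plus structural induction, Lemma~\ref{inv congr lemma} for the case of matching outer atomic propositions, and a sequential-composition context ($c \lef \cdot \rig c$, i.e.\ $\cdot \circ c$, where the paper uses $\cdot \circ a$) to break congruence in the cases where the two basic forms cannot be separated by $\equiv_{fr}$ alone. Your case split of the ``different outer atoms'' scenario into a directly separable sub-case and a semantically constant sub-case is a slightly more explicit bookkeeping of what the paper compresses into ``we can simply pick an algebra,'' but the underlying argument is the same.
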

\begin{proof}
The direction from syntactic equality to $fr$-congruence is trivial. The other direction is proven by taking the contraposition and then proceeding by structural induction on both $P$ and $Q$. So assume $P\nsyn Q$. We omit the trivial cases and the cases that follow by symmetry. 

Suppose $P\syn T$ and $Q\syn Q_1\lef a\rig Q_2$ for some $a\in A$. If $P=_{fr}Q$ then $P\circ a=_{fr}Q\circ a$ by congruence. However,
\begin{align*}
(P\circ a)/H &= (a\lef T\rig a)/H\\
&=\begin{cases}
a/\dd T(H) & \text{if $T/H=T$}\\
a/\dd T(H) & \text{if $T/H=F$}
\end{cases}\\
&= a/H\\
&\ne\begin{cases}
a/\dd{Q_1}(\dd a(H)) & \text{if $a/H=T$}\\
a/\dd{Q_2}(\dd a(H)) & \text{if $a/H=F$}
\end{cases}\\
&= (a\lef (Q_1\lef a\rig Q_2)\rig a)/H\\
&= (Q\circ a)/H
\end{align*}
Note that we can always construct an algebra $\mathbb{A}$ and $H\in\mathbb{A}$ such that neither $a/H=a/\dd{Q_1}(\dd a(H))$ nor $a/H=a/\dd{Q_2}(\dd a(H))$.

Suppose $P\syn P_1\lef a\rig P_2$ and $Q\syn Q_1\lef a\rig Q_2$. Then we can assume without loss of generality that $P_1\nsyn Q_1$. By I.H., it follows that $P_1\ne_{fr}Q_1$. By Lemma \ref{inv congr lemma}, $P\ne_{fr}Q$.

Of course, if $P\syn P_1\lef a\rig P_2$ and $Q\syn Q_1\lef b\rig Q_2$, we can simply pick an algebra $\mathbb{A}$ and valuation $H\in\mathbb{A}$ such that $(P\circ a)/H\ne (Q\circ a)/H$.
\end{proof}

Now the stage has been set to prove completeness for not just basic forms but for all closed terms.

\begin{thm}\label{completeness thm}
For closed terms $P$ and $Q$,
\[
P=_{fr}Q\qquad\Longrightarrow\qquad CP\vdash P=Q
\]
\end{thm}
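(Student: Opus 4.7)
The plan is to combine the two preceding lemmas: Lemma \ref{basic form theorem} reduces arbitrary closed terms to basic forms via provable equality, and Lemma \ref{completeness lemma} identifies $fr$-congruence on basic forms with syntactic equality. Since the hard work has already been done, the proof is essentially a chain of rewrites.

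First, I would take closed terms $P$ and $Q$ with $P =_{fr} Q$, and apply Lemma \ref{basic form theorem} to obtain basic forms $P', Q' \in \BF$ with $CP \vdash P = P'$ and $CP \vdash Q = Q'$. Next, I would invoke Theorem \ref{soundness cp} (soundness) to lift these provable equalities to the semantics, giving $P =_{fr} P'$ and $Q =_{fr} Q'$. Because $=_{fr}$ is (by construction) an equivalence relation, transitivity with the hypothesis yields $P' =_{fr} Q'$.

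At this point both sides are basic forms, so Lemma \ref{completeness lemma} applies directly and gives $P' \syn Q'$, i.e.\ $P'$ and $Q'$ are the same term. Hence $CP \vdash P' = Q'$ trivially (by reflexivity), and chaining the three provable equalities $P = P'$, $P' = Q'$, $Q' = Q$ by transitivity produces $CP \vdash P = Q$, as required.

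There is no real obstacle here: the completeness theorem is essentially a corollary of the basic form reduction together with the injectivity result on basic forms. The only thing to be careful about is the direction in which soundness is used—we need soundness to know that reducing a term to its basic form preserves $fr$-congruence, so that the hypothesis $P =_{fr} Q$ transfers to $P' =_{fr} Q'$ and can then be discharged by the syntactic identification of congruent basic forms.
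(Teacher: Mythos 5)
Your proposal is correct and follows exactly the same route as the paper: reduce both terms to basic forms via Lemma \ref{basic form theorem}, use soundness (Theorem \ref{soundness cp}) and transitivity to transfer $P=_{fr}Q$ to $P'=_{fr}Q'$, and then apply Lemma \ref{completeness lemma} to conclude $P'\syn Q'$ and chain the provable equalities. No differences worth noting.
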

\begin{proof}
Let $P=_{fr}Q$. By Lemma \ref{basic form theorem}, there exist
terms $P',Q'\in\BF$ such that $P=P'$ and $Q=Q'$.
Furthermore, by soundness, it follows that $P=_{fr}P'$ and $Q=_{fr}Q'$, and
thus $P'=_{fr}Q'$. Finally, by Lemma \ref{completeness lemma}, we get
$P'=Q'$ which implies that $P=Q$.
\end{proof}

\section{Repetition-proof valuations}
Recall that the variety with repetition-proof valuations is characterized by the following equation:
\[
y_a(x)=y_a(\dd a(x))
\]
This restricts the type of valuations we allow in this variety. The consequences of introducing this restriction are perhaps best explained using an example. Take a look at the following evaluation of the term $(b\lef a\rig c)\lef a\rig d$ using a repetition-proof valuation $H$:
\begin{align*}
((b\lef a\rig c)\lef a\rig d)/H
&=\begin{cases}
b/\dd a(\dd a(H)) & \text{if $y_a(H)=T$ and $y_a(\dd a(H))=T$}\\
c/\dd a(\dd a(H)) & \text{if $y_a(H)=T$ and $y_a(\dd a(H))=F$}\\
d/\dd a(H) & \text{if $y_a(H)=F$}
\end{cases}\\
&=\begin{cases}
b/\dd a(\dd a(H)) & \text{if $y_a(H)=T$ and $y_a(\dd a(H))=T$}\\
d/\dd a(H) & \text{if $y_a(H)=F$}
\end{cases}\\
&=\begin{cases}
b/\dd a(\dd a(H)) & \text{if $y_a(H)=T$ and $y_a(\dd a(H))=T$}\\
e/\dd a(\dd a(H)) & \text{if $y_a(H)=T$ and $y_a(\dd a(H))=F$}\\
d/\dd a(H) & \text{if $y_a(H)=F$}
\end{cases}\\
&=((b\lef a\rig e)\lef a\rig d)/H
\end{align*}
Observe that since $y_a(H)=y_a(\dd a(H))$ it follows that the case where $y_a(H)=T$ and $y_a(\dd a(H))=F$ never occurs. Thus during the evaluation of $(b\lef a\rig c)\lef a\rig d$ the $c$ is never evaluated, and thus we can replace $c$ with any term we like which in this case is another atomic proposition $e$. However, this does not mean that $(b\lef a\rig c)\lef a\rig d=_{rp}b\lef a\rig d$ because the evaluation of $b$ depends on both $a$'s.

Similar to the free reactive valuations, we can define a corresponding axiomatization. In this case, the axiomatization consists of CP plus the following two axiom schemas,
\begin{align*}
(\CPrp1)\qquad(x\lef a\rig y)\lef a\rig z &= (x\lef a\rig x)\lef a\rig z\\
(\CPrp2)\qquad x\lef a\rig (y\lef a\rig z) &= x\lef a\rig (z\lef a\rig z)
\end{align*}
for all $a\in A$. We call the entire axiomatization $\text{CP}_{rp}$.
The axioms \CPrp1 and \CPrp2 combined with CP tell us that the value of an atomic proposition $a$ does not change unless there is another proposition in between them.

An example of repetition-proof behaviour can be found in programming. For example, an atomic proposition corresponds with a function that updates a global variable but its output does not depend on this variable.

As in the previous section we proceed by proving soundness and completeness, starting with soundness.

\subsection{Soundness}

\begin{thm}
For closed terms $P$ and $Q$,
\[
\text{CP}_{rp}\vdash P=Q\qquad\Longrightarrow\qquad P=_{rp}Q
\]
\end{thm}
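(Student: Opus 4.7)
The plan is to extend the soundness argument used for Theorem~\ref{soundness cp} to cover the two new axiom schemas $\CPrp1$ and $\CPrp2$. Since $\text{CP}_{rp}=\text{CP}\cup\{\CPrp1,\CPrp2\}$ and CP is already sound over the strictly larger variety $fr\supseteq rp$, its soundness immediately transfers to $rp$; consequently, it suffices to verify that $\CPrp1$ and $\CPrp2$ hold in every repetition-proof RVA. For each schema I would invoke Proposition~\ref{equiv congr prop}, which reduces the task to checking both the value $P/H$ and the derivative $\dd P(H)$ on the two sides, for an arbitrary $\mathbb{A}\in rp$ and $H\in\mathbb{A}$.

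The essential ingredient is the defining equation $y_a(H)=y_a(\dd a(H))$ of $rp$-valuations, which is exactly what rules out the ``mixed'' branches in a nested case analysis on $a$. Concretely, for $\CPrp1$ I would unfold
\[
((P\lef a\rig Q)\lef a\rig R)/H
\]
by the semantic clause, producing a split on $y_a(H)$ and (in the $T$-branch) a further split on $y_a(\dd a(H))$. The repetition-proof axiom forces these two tests to agree, so the $T$-branch collapses to $P/\dd a(\dd a(H))$ and the $F$-branch to $R/\dd a(H)$. Doing the same unfolding on $((P\lef a\rig P)\lef a\rig R)/H$ gives the identical two branches immediately, because the inner conditional has $P$ on both sides. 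The derivative calculation is entirely parallel: the case $y_a(H)=T$ yields $\dd P(\dd a(\dd a(H)))$ on both sides, and $y_a(H)=F$ yields $\dd R(\dd a(H))$ on both sides.

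For $\CPrp2$ the argument is symmetric, with the collapse now occurring in the $F$-branch rather than the $T$-branch: evaluating $(P\lef a\rig(Q\lef a\rig R))/H$, the $F$-branch requires $y_a(\dd a(H))$, which by the $rp$ axiom equals $y_a(H)=F$, so one always lands on $R/\dd a(\dd a(H))$; evaluating $(P\lef a\rig(R\lef a\rig R))/H$ gives the same branch directly since the inner consequents coincide. The $T$-branch is $P/\dd a(H)$ on both sides, and the derivatives again match branch by branch.

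There is no real obstacle here; the argument is essentially the worked example for $(b\lef a\rig c)\lef a\rig d$ given immediately before the theorem, promoted into a rigorous case analysis. The only thing to watch for is to present both the yield and the derivative computations for each of the two schemas so that Proposition~\ref{equiv congr prop} applies; with that in hand soundness for $\text{CP}_{rp}$ follows by induction on the length of a derivation in the usual way.
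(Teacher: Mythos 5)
Your proposal is correct and follows essentially the same route as the paper: reduce the CP axioms to the already-established soundness over the larger variety $fr$, then verify $\CPrp1$ and $\CPrp2$ by unfolding both the yield $\cdot/H$ and the derivative $\dd{}{}$ on each side, using $y_a(x)=y_a(\dd a(x))$ to eliminate the mixed branches, and conclude via Proposition~\ref{equiv congr prop}. The branch-by-branch collapses you describe are exactly the ones carried out in the paper's displayed case analyses.
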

\begin{proof}
According to Proposition \ref{variety relation prop}, we know that $=_{fr}\subseteq =_{rp}$. Since we already checked the soundness of the axioms in CP in the proof for soundness of free reactive valuations, it suffices to show soundness for $\CPrp1$ and $\CPrp2$, starting with CPrp1. Let $\mathbb{A}$ be a RVA from variety $rp$ and valuation $H\in\mathbb{A}$.
\begin{align*}
((P\lef a\rig Q)\lef a\rig R)/H
&=\begin{cases}
(P\lef a\rig Q)/\dd a(H) & \text{if $a/H=T$}\\
R/\dd a(H) & \text{if $a/H=F$}\\
\end{cases}\\
&=\begin{cases}
P/\dd a(\dd a(H)) & \text{if $a/H=T$ and $a/\dd a(H)=T$}\\
Q/\dd a(\dd a(H)) & \text{if $a/H=T$ and $a/\dd a(H)=F$}\\
R/\dd a(H) & \text{if $a/H=F$}\\
\end{cases}\\
&=\begin{cases}
P/\dd a(\dd a(H)) & \text{if $a/H=T$ and $a/H=T$}\\
Q/\dd a(\dd a(H)) & \text{if $a/H=T$ and $a/H=F$}\\
R/\dd a(H) & \text{if $a/H=F$}\\
\end{cases}\\
&=\begin{cases}
P/\dd a(\dd a(H)) & \text{if $a/H=T$ and $a/H=T$}\\
P/\dd a(\dd a(H)) & \text{if $a/H=T$ and $a/H=F$}\\
R/\dd a(H) & \text{if $a/H=F$}\\
\end{cases}\\
&=\begin{cases}
P/\dd a(\dd a(H)) & \text{if $a/H=T$ and $a/\dd a(H)=T$}\\
P/\dd a(\dd a(H)) & \text{if $a/H=T$ and $a/\dd a(H)=F$}\\
R/\dd a(H) & \text{if $a/H=F$}\\
\end{cases}\\
&=\begin{cases}
(P\lef a\rig P)/\dd a(H) & \text{if $a/H=T$}\\
R/\dd a(H) & \text{if $a/H=F$}\\
\end{cases}\\
&=((P\lef a\rig P)\lef a\rig R)/H
\end{align*}
\begin{align*}
\dd{((P\lef a\rig Q)\lef a\rig R)}(H)
&=\begin{cases}
\dd P(\dd a(\dd a(H))) & \text{if $a/H=T$ and $a/\dd a(H)=T$}\\
\dd Q(\dd a(\dd a(H))) & \text{if $a/H=T$ and $a/\dd a(H)=F$}\\
\dd R(\dd a(H)) & \text{if $a/H=F$}
\end{cases}\\
&=\begin{cases}
\dd P(\dd a(\dd a(H))) & \text{if $a/H=T$}\\
\dd R(\dd a(H)) & \text{if $a/H=F$}
\end{cases}\\
&=\begin{cases}
\dd P(\dd a(\dd a(H))) & \text{if $a/H=T$ and $a/\dd a(H)=T$}\\
\dd P(\dd a(\dd a(H))) & \text{if $a/H=T$ and $a/\dd a(H)=F$}\\
\dd R(\dd a(H)) & \text{if $a/H=F$}
\end{cases}\\
&=\dd{((P\lef a\rig P)\lef a\rig R)}(H)
\end{align*}
By Proposition \ref{equiv congr prop} CPrp1 is sound. Next we show soundness for CPrp2:
\begin{align*}
(P\lef a\rig (Q\lef a\rig R))/H
&=\begin{cases}
P/\dd a(H) & \text{if $y_a(H)=T$}\\
(Q\lef a\rig R)/\dd a(H) & \text{if $y_a(H)=F$}\\
\end{cases}\\
&=\begin{cases}
P/\dd a(H) & \text{if $y_a(H)=T$}\\
Q/\dd a(\dd a(H)) & \text{if $y_a(H)=F$ and $y_a(\dd a(H))=T$}\\
R/\dd a(\dd a(H)) & \text{if $y_a(H)=F$ and $y_a(\dd a(H))=F$}\\
\end{cases}\\
&=\begin{cases}
P/\dd a(H) & \text{if $y_a(H)=T$}\\
Q/\dd a(\dd a(H)) & \text{if $y_a(H)=F$ and $y_a(H)=T$}\\
R/\dd a(\dd a(H)) & \text{if $y_a(H)=F$ and $y_a(H)=F$}\\
\end{cases}\\
&=\begin{cases}
P/\dd a(H) & \text{if $y_a(H)=T$}\\
R/\dd a(\dd a(H)) & \text{if $y_a(H)=F$ and $y_a(H)=T$}\\
R/\dd a(\dd a(H)) & \text{if $y_a(H)=F$ and $y_a(H)=F$}\\
\end{cases}\\
&=\begin{cases}
P/\dd a(H) & \text{if $y_a(H)=T$}\\
R/\dd a(\dd a(H)) & \text{if $y_a(H)=F$ and $y_a(\dd a(H))=T$}\\
R/\dd a(\dd a(H)) & \text{if $y_a(H)=F$ and $y_a(\dd a(H))=F$}\\
\end{cases}\\
&=\begin{cases}
P/\dd a(H) & \text{if $y_a(H)=T$}\\
(R\lef a\rig R)/\dd a(H) & \text{if $y_a(H)=F$}\\
\end{cases}\\
&=(P\lef a\rig (R\lef a\rig R))/H
\end{align*}
\begin{align*}
\dd{(P\lef a\rig(Q\lef a\rig R))}(H)
&=\begin{cases}
\dd P(\dd a(H)) & \text{if $a/H=T$}\\
\dd Q(\dd a(\dd a(H))) & \text{if $a/H=F$ and $a/\dd a(H)=T$}\\
\dd R(\dd a(\dd a(H))) & \text{if $a/H=F$ and $a/\dd a(H)=F$}\\
\end{cases}\\
&=\begin{cases}
\dd P(\dd a(H)) & \text{if $a/H=T$}\\
\dd R(\dd a(\dd a(H))) & \text{if $a/H=F$}\\
\end{cases}\\
&=\begin{cases}
\dd P(\dd a(H)) & \text{if $a/H=T$}\\
\dd R(\dd a(\dd a(H))) & \text{if $a/H=F$ and $a/\dd a(H)=T$}\\
\dd R(\dd a(\dd a(H))) & \text{if $a/H=F$ and $a/\dd a(H)=F$}\\
\end{cases}\\
&=\dd{(P\lef a\rig(R\lef a\rig R))}(H)
\end{align*}
So CPrp2 is also sound.
\end{proof}

\subsection{Completeness}
Similar to the previous section we define a set of basic forms for this variety. Since we are working with a different variety the set of basic forms needs to change. If we were to use the set \BF\ i.e., the set of basic forms as defined in the section on free reactive valuations, as the basic forms of this variety then syntactic equality and $rp$-congruence would not coincide. For example, let the terms $(P\lef a\rig Q)\lef a\rig R$ and $(P\lef a\rig P)\lef a\rig R$ be in \BF\ and let $P\nsyn Q$ then these terms are $rp$-congruent but not syntactically equal. So we need to define a new set of basic forms.
\begin{defn}
The set of repetition-proof basic forms is the smallest set $\BF_{rp}$ such that $T,F\in\BF_{rp}$ and if $P,Q\in\BF_{rp}$ then
\begin{itemize}
\item if $P\syn P_1\lef a\rig P_2$ and $Q\syn Q_1\lef a\rig Q_2$ then $(a\circ P_1)\lef a\rig(a\circ Q_2)\in\BF_{rp}$
\item if $P\syn P_1\lef a\rig P_2$ and $Q\nsyn Q_1\lef a\rig Q_2$ then $(a\circ P_1)\lef a\rig Q\in\BF_{rp}$
\item if $P\nsyn P_1\lef a\rig P_2$ and $Q\syn Q_1\lef a\rig Q_2$ then $P\lef a\rig(a\circ Q_2)\in\BF_{rp}$
\item if $P\nsyn P_1\lef a\rig P_2$ and $Q\nsyn Q_1\lef a\rig Q_2$ then $P\lef a\rig Q\in\BF_{rp}$
\end{itemize}
for all $a\in A$.
\end{defn}
Clearly, the set $\BF_{rp}$ is a subset of $\BF$. The four cases mentioned in the definition are based on the axioms \CPrp1 and \CPrp2. 
\begin{lem}
For each closed term $P$ there exists a term $P'\in\BF_{rp}$ such
that $\text{CP}_{rp}\vdash P=P'$.
\end{lem}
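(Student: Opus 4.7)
The plan is to produce a $\BF_{rp}$-form in two stages. First, apply Lemma~\ref{basic form theorem} to obtain a basic form $P^{*}\in\BF$ with $CP\vdash P=P^{*}$; since the axioms of $CP$ are contained in $\text{CP}_{rp}$, this gives $\text{CP}_{rp}\vdash P=P^{*}$. It therefore suffices to show that every element of $\BF$ is $\text{CP}_{rp}$-provably equal to some element of $\BF_{rp}$. I would prove this auxiliary claim by structural induction on $\BF$, which is clean because the antecedent of every conditional composition in $\BF$ is an atomic proposition.

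The base cases $T,F\in\BF_{rp}$ are immediate. In the inductive step, suppose $P^{*}\syn P_{1}\lef a\rig P_{2}$ with $P_{1},P_{2}\in\BF$. The induction hypothesis yields $P_{1}',P_{2}'\in\BF_{rp}$ with $\text{CP}_{rp}\vdash P_{i}=P_{i}'$, and congruence gives $\text{CP}_{rp}\vdash P^{*}=P_{1}'\lef a\rig P_{2}'$. The remainder is a four-way case analysis on whether the root connective of $P_{1}'$ (respectively $P_{2}'$) is $\_\lef a\rig\_$ with the \emph{same} outer atomic proposition $a$. Whenever $P_{1}'\syn Q_{1}\lef a\rig Q_{2}$, use \CPrp{1} to rewrite $P_{1}'\lef a\rig P_{2}'$ as $(Q_{1}\lef a\rig Q_{1})\lef a\rig P_{2}'=(a\circ Q_{1})\lef a\rig P_{2}'$; whenever $P_{2}'\syn R_{1}\lef a\rig R_{2}$, use \CPrp{2} to rewrite it as $P_{1}'\lef a\rig(R_{2}\lef a\rig R_{2})=P_{1}'\lef a\rig(a\circ R_{2})$. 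Applying both, only one, or neither of these rewrites produces precisely the four shapes listed in the definition of $\BF_{rp}$, so the result lies in $\BF_{rp}$.

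The only real difficulty is bookkeeping: one must verify in each of the four subcases that the rewritten term matches the corresponding clause of the $\BF_{rp}$ definition. In particular, if the root connective of $P_{1}'$ happens to be $\_\lef b\rig\_$ with $b\ne a$, then $P_{1}'\nsyn P_{11}\lef a\rig P_{12}$, so the \CPrp{1} rewrite is inapplicable and we land in the clause that leaves $P_{1}'$ unchanged. No deeper conceptual obstacle is expected beyond this careful case-by-case verification.
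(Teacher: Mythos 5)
Your proof is correct and follows essentially the same route as the paper: reduce to a $\BF$ basic form via Lemma~\ref{basic form theorem}, then by structural induction apply the induction hypothesis to the two consequents and use \CPrp1 and \CPrp2 to rewrite the root into one of the four shapes in the definition of $\BF_{rp}$. The paper writes out only the case where both consequents have root $\_\lef a\rig\_$ and waves at the other three; your explicit handling of the $b\ne a$ situation is just the bookkeeping it leaves implicit.
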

\begin{proof}
We prove this theorem by structural induction on $P$. By Lemma \ref{basic form theorem} it follows that we can assume without loss of generality that $P$ is a basic form as defined in the section on free reactive valuations i.e., $P\in\BF$.

If $P$ is $T$ or $F$ then $P\in\BF_{rp}$. Suppose $P$ is $P_1\lef a\rig P_2$. By the induction hypothesis,
there exist terms $P_1',P_2'\in\BF_{rp}$ such that $P_1=P_1'$
and $P_2=P_2'$. Now suppose that $P_1'\syn P_{11}'\lef a\rig
P_{12}'$ and $P_2'\syn P_{21}'\lef a\rig P_{22}'$. Consequently,
\begin{align*}
P_1\lef a\rig P_2
&= P_1'\lef a\rig P_2'\\
&= (P_{11}'\lef a\rig P_{12}')\lef a\rig (P_{21}'\lef a\rig P_{22}')\\
&= (a\circ P_{11}')\lef a\rig (a\circ P_{22}')
\end{align*}
By definition of $\BF_{rp}$ we have $(a\circ P_{11}')\lef a\rig
(a\circ P_{22}')\in\BF_{rp}$.

Using similar reasoning we can show that there exists such a term
$P'\in\BF_{rp}$ for the remaining three cases:
\begin{itemize}
\item $P_1'\nsyn P_{11}'\lef a\rig P_{12}'$ and $P_2'\syn P_{21}'\lef a\rig P_{22}'$
\item $P_1'\syn P_{11}'\lef a\rig P_{12}'$ and $P_2'\nsyn P_{21}'\lef a\rig P_{22}'$
\item $P_1'\nsyn P_{11}'\lef a\rig P_{12}'$ and $P_2'\nsyn P_{21}'\lef a\rig P_{22}'$
\end{itemize}
\end{proof}

In the section on free reactive valuations we needed Lemmas \ref{inv congr lemma} and \ref{completeness lemma} in order to prove completeness. Similarly, we would like to prove these lemmas for this variety. However, observe that in the proofs of Lemmas \ref{inv congr lemma} and \ref{completeness lemma}, we construct a new valuation algebra based on another algebra. In the variety with free reactive valuations this is not a problem, but in this variety we have some restrictions on our RVAs, and thus cannot automatically assume that such a construction is possible. Therefore, in the proofs of the following two lemmas we focus on showing that such an algebra exists. We call an algebra from the variety with repetition-proof valuations an rp-algebra.

\begin{lem}
For $P_1\lef a\rig P_2,Q_1\lef a\rig Q_2\in\BF_{rp}$,
\[
P_1\lef a\rig P_2=_{rp}Q_1\lef a\rig Q_2\qquad\Longrightarrow\qquad P_1=_{rp}Q_1\wedge P_2=_{rp}Q_2
\]
\end{lem}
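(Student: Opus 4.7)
The plan is to prove the contrapositive and follow the overall skeleton of Lemma \ref{inv congr lemma}, replacing the free-algebra constructions by ones that remain in the variety $rp$. Assume without loss of generality that $P_1\ne_{rp}Q_1$, and split into two cases as in that proof.

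Case 1: $P_1\not\equiv_{rp}Q_1$. Pick an rp-algebra $\mathbb{A}$ and $H\in\mathbb{A}$ with $P_1/H\ne Q_1/H$. The goal is to exhibit an rp-algebra $\mathbb{A}'$ with a valuation $H'$ satisfying $y_a(H')=T$ and $\dd a(H')=H^\star$, where $H^\star$ is a valuation with $y_a(H^\star)=T$ and $P_1/H^\star\ne Q_1/H^\star$. The naive construction from the free case — set $y_a(H')=T$, $\dd a(H')=H$ — is no longer sound, because the axiom $y_a(x)=y_a(\dd a(x))$ forces $y_a(H)=T$. I sidestep this by first building an auxiliary element $H^\star$: add a new point to the carrier of $\mathbb{A}$, set $\dd a(H^\star)=H^\star$ and $y_a(H^\star)=T$ (consistent with rp by self-loop), and for each $b\ne a$ set $y_b(H^\star)=y_b(H)$ and $\dd b(H^\star)=\dd b(H)$ (which preserves rp on those coordinates because $\mathbb{A}$ already satisfies it). Then add $H'$ with $y_a(H')=T$, $\dd a(H')=H^\star$, and extend the other operations arbitrarily respecting rp. This gives $(P_1\lef a\rig P_2)/H'=P_1/H^\star$ and $(Q_1\lef a\rig Q_2)/H'=Q_1/H^\star$, so it suffices to verify $P_1/H^\star\ne Q_1/H^\star$.

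Case 2: $P_1\equiv_{rp}Q_1$ but the congruence property fails. As in Lemma \ref{inv congr lemma}, the three contexts $S\lef R\rig(-)$, $(-)\lef S\rig R$, and $S\lef(-)\rig R$ are considered; the first two lead to a contradiction with $P_1\equiv_{rp}Q_1$ (since $P_1/\dd R(H)\ne Q_1/\dd R(H)$ would imply $P_1\not\equiv_{rp}Q_1$), and the third case produces an rp-algebra $\mathbb{A}$ and $H$ with $(S\lef P_1\rig R)/H\ne(S\lef Q_1\rig R)/H$. Applying the same lifting to an $\mathbb{A}'$ and $H'$ as in Case 1, we get $(S\lef(P_1\lef a\rig P_2)\rig R)/H'=(S\lef P_1\rig R)/H^\star\ne(S\lef Q_1\rig R)/H^\star=(S\lef(Q_1\lef a\rig Q_2)\rig R)/H'$, which shows congruence fails for $P_1\lef a\rig P_2$ and $Q_1\lef a\rig Q_2$.

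The principal obstacle is the verification $P_1/H^\star\ne Q_1/H^\star$ in Case 1 (and the analogous point in Case 2). This is precisely where the restricted shape of $\BF_{rp}$ carries its weight: whenever $a$ occurs at the top of $P_1$ or $Q_1$, the definition forces the shape $a\circ(\ldots)$, so both $a$-branches of such a subterm coincide, and the evaluation never depends on whether $\dd a$ lands in $H^\star$ or in $\dd a(H)$. For every other atomic proposition $b\ne a$, the copy $H^\star$ agrees with $H$ by construction, so the evaluations of $P_1$ and $Q_1$ on $H^\star$ reduce to the ones on $H$. Establishing this invariance precisely — presumably by induction on the structure of repetition-proof basic forms — is the crux of the argument and the step that distinguishes this lemma from its free-case counterpart.
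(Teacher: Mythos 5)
Your two-case skeleton matches the paper's, and you have correctly located the obstacle: the rp-axiom $y_a(x)=y_a(\dd a(x))$ blocks the free-case extension $\dd a(H')=H$, $y_a(H')=T$ whenever $a/H=F$. But your way around it does not work. The auxiliary element $H^\star$ you build carries $\dd a(H^\star)=H^\star$ and agrees with $H$ (not with $\dd a(H)$) on all $b\ne a$. Consequently, when $P_1\syn a\circ V$ (the shape that $\BF_{rp}$ forces on any left child whose root is $a$), you get $(P_1\lef a\rig P_2)/H'=P_1/H^\star=V/\dd a(H^\star)=V/H^\star$, which collapses to $V/H$ once $V$'s root is some $b\ne a$ --- whereas the quantity you need to preserve is $P_1/H=V/\dd a(H)$. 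Your stated invariance, ``the evaluation never depends on whether $\dd a$ lands in $H^\star$ or in $\dd a(H)$,'' is false: take $V\syn T\lef b\rig F$, so that $V/H^\star=y_b(H)$ while $V/\dd a(H)=y_b(\dd a(H))$; the rp-axioms place no constraint relating $y_b$ to $\dd a$ for $b\ne a$, so these can differ, and then $P_1/H^\star\ne Q_1/H^\star$ need not follow from $P_1/H\ne Q_1/H$ (e.g.\ with $Q_1\syn F$ and a valuation where $y_b(\dd a(H))=T$ but $y_b(H)=F$, your $H'$ fails to separate the two terms). The same defect infects your Case 2.

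The paper avoids the auxiliary element altogether. It first shows, by examining the four possible shapes of $P_1$ and $Q_1$ relative to $a$ (using that a left child with root $a$ must be of the form $a\circ V$, so both its $a$-branches coincide), that the inequality $P_1/H\ne Q_1/H$ does not hinge on the value of $a/H$; it then assumes without loss of generality that $a/H=T$, after which the free-case extension $\dd a(H')=H$, $y_a(H')=T$ is already rp-consistent and immediately gives $(P_1\lef a\rig P_2)/H'=P_1/H\ne Q_1/H=(Q_1\lef a\rig Q_2)/H'$. To salvage your construction you would have to make $H^\star$ reproduce the non-$a$ coordinates of $\dd a(H)$, $\dd a(\dd a(H))$, \dots\ along a fresh $a$-chain with $y_a$ set to $T$ throughout, which amounts to a roundabout implementation of the paper's ``without loss of generality $a/H=T$'' step.
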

\begin{proof}
We prove the contraposition. We assume without loss of generality that $P_1\ne_{rp}Q_1$. Then either $P_1\not\equiv_{rp} Q_1$ or $P_1\equiv_{rp}Q_1$.

Suppose $P_1\not\equiv_{rp}Q_1$. Then there exists an rp-algebra $\mathbb{A}$ and valuation $H\in\mathbb{A}$ such that $P_1/H\ne Q_1/H$. We show that $P_1/H\ne Q_1/H$ holds whether $a/H=T$ or $a/H=F$. Consider the following four cases:
\begin{itemize}
\item Suppose $P_1\syn P_{11}\lef a\rig P_{12}$ and $Q_1\syn Q_{11}\lef a\rig Q_{12}$. Since $P_1\lef a\rig P_2,Q_1\lef a\rig Q_2\in\BF_{rp}$, it follows by definition of $\BF_{rp}$ that $P_{11}\syn P_{12}$ and $Q_{11}\syn Q_{12}$. Consequently, $P_1/H\ne Q_1/H$ whether $y_a(H)=T$ or $y_a(H)=F$.
\item Suppose $P_1\syn P_{11}\lef a\rig P_{12}$ and $Q_1\nsyn Q_{11}\lef a\rig Q_{12}$. By similar reasoning as before, we can conclude that $P_{11}\syn P_{12}$. Furthermore, the value of $Q_1/H$ does not depend on $a/H$. Consequently, $a/H$ can be either $T$ or $F$.
\item Suppose $P_1\nsyn P_{11}\lef a\rig P_{12}$ and $Q_1\syn Q_{11}\lef a\rig Q_{12}$. Argument is symmetric to the previous case.
\item Suppose $P_1\nsyn P_{11}\lef a\rig P_{12}$ and $Q_1\nsyn Q_{11}\lef a\rig Q_{12}$. Neither the value of $P_1/H$ nor that of $Q_1/H$ depends on $a/H$. Consequently, $P_1/H\ne Q_1/H$ is independent of the value of $a/H$.
\end{itemize}
Since $P_1/H\ne Q_1/H$ regardless of whether $a/H=T$ or $a/H=F$, we can assume without loss of generality that $a/H=T$. We construct an rp-algebra $\mathbb{A}'\supseteq\mathbb{A}$ with valuation $H'\in\mathbb{A'}$ such that $\dd a(H')=H$. Since $\mathbb{A'}$ is an rp-algebra we know that $y_a(x)=y_a(\dd a(x))$. Hence, it follows that $y_a(H')=T$ because if this were not the case then $y_a(\dd a(H'))=y_a(H)=a/H=F$ which is contrary to our assumption. It follows that $(P_1\lef a\rig P_2)/H'\ne(Q_1\lef a\rig Q_2)/H'$.

Suppose $P_1\equiv_{rp}Q_1$. Then the congruence property does not hold i.e., at least one of the following three cases is true,
\begin{itemize}
\item[(1)] $S\lef R\rig P_1\not\equiv_{rp} S\lef R\rig Q_1$
\item[(2)] $P_1\lef S\rig R\not\equiv_{rp} S\lef P_1\rig R$
\item[(3)] $S\lef P_1\rig R\not\equiv_{rp} S\lef P\rig R$
\end{itemize}
for closed terms $S$ and $R$. Using the same argument as in the proof of Lemma \ref{inv congr lemma}, it follows that cases (1) and (2) cannot occur. So suppose case (3) is true. Then there exists an rp-algebra $\mathbb{A}$ and valuation $H\in\mathbb{A}$ such that $(S\lef P_1\rig R)/H\ne(S\lef Q_1\rig R)/H$. Using similar reasoning as in the case for $P_1\not\equiv_{rp}Q_1$, we can assume without loss of generality that $a/H=T$. Thus we can construct an rp-algebra $\mathbb{A'}\supseteq\mathbb{A}$ and valuation $H'\in\mathbb{A'}$ such that $\dd a(H')=H$ and $y_a(H)=T$. Consequently, $(S\lef(P_1\lef a\rig P_2)\rig R)/H'\ne (S\lef(Q_1\lef a\rig Q_2)\rig R)/H'$. Thus the congruence property does not hold and $P_1\lef a\rig P_2\ne_{rp} Q_1\lef a\rig Q_2$.
\end{proof}

The following two lemmas have the perhaps odd condition that there are at least two atomic propositions. At the end of this section we examine what happens if there is only one atomic proposition. Note that by definition there is at least one atomic proposition i.e., $A$ is non-empty.

\begin{lem}\label{rp completeness lemma}
For $|A|\ge 2$ and $P,Q\in\BF_{rp}$,
\[
P=_{rp}Q\qquad\Longrightarrow\qquad P\syn Q
\]
\end{lem}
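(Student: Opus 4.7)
The plan is to prove the contrapositive by induction on the combined structure of $P,Q\in\BF_{rp}$, mirroring Lemma~\ref{completeness lemma} but being careful that every distinguishing valuation we construct actually lives in an $rp$-algebra. So assume $P\nsyn Q$ and split into cases. The cases $P,Q\in\{T,F\}$ with $P\nsyn Q$ are immediate, since $T/H\ne F/H$ for every valuation $H$ in any RVA, rp or not.

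Next, handle the mixed cases where exactly one of $P,Q$ is a constant. Say $P\syn T$ and $Q\syn Q_1\lef a\rig Q_2$ (the $F$-case and the symmetric case are analogous). Here I would use the hypothesis $|A|\ge 2$ to pick some $b\in A$ with $b\nsyn a$, and then take the distinguishing context $b\circ(\cdot)=(\cdot)\lef b\rig(\cdot)$ pre-composed with an outer $b$: concretely, compare $(b\lef(T)\rig b)/H=b/H$ with $(b\lef Q\rig b)/H=b/\dd Q(H)$. Since $\dd Q(H)$ is obtained from $H$ by traversing through $a$ (and possibly further), we can build an rp-algebra in which $y_b(H)\ne y_b(\dd Q(H))$ while simultaneously respecting $y_c(x)=y_c(\dd c(x))$ for every $c\in A$; the key is that $b$ never appears on the spine of the traversal, so its yield can be set independently on $H$ and on $\dd Q(H)$ without violating the rp-axiom. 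If congruence already fails to apply we land in contradiction just as in Lemma~\ref{inv congr lemma}.

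The next case is $P\syn P_1\lef a\rig P_2$ and $Q\syn Q_1\lef b\rig Q_2$ with $a\nsyn b$. Using again that $|A|\ge 2$, pick $c\in A$ with $c\nsyn a$ (or $c\nsyn b$, whichever is needed) and evaluate both sides in an rp-valuation $H$ tuned so that the derivatives $\dd P(H)$ and $\dd Q(H)$ produce different yields for $c$; since $P$ and $Q$ begin by evaluating different atoms, we can construct the rp-algebra freely on the two disjoint trajectories. This gives $(c\lef P\rig c)/H\ne(c\lef Q\rig c)/H$, hence $P\ne_{rp}Q$ by contraposition.

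Finally, handle the main recursive case $P\syn P_1\lef a\rig P_2$ and $Q\syn Q_1\lef a\rig Q_2$ (same head atom). Since $P\nsyn Q$ either $P_1\nsyn Q_1$ or $P_2\nsyn Q_2$; by the induction hypothesis the non-matching subterms are not $rp$-congruent, and the preceding inverse-congruence lemma (the one immediately above, for $\BF_{rp}$) promotes this to $P\ne_{rp}Q$. The main obstacle throughout is the algebra construction: unlike the free case, we cannot blindly extend an algebra by a single fresh valuation, because the new valuation must continue to satisfy $y_a(x)=y_a(\dd a(x))$ for every atom. The argument I expect to rely on most heavily is that, because the basic forms in $\BF_{rp}$ are built with the $(a\circ P_1)\lef a\rig(a\circ Q_2)$ pattern, consecutive evaluations of the same atom along the spine automatically agree, so the witnesses we need can be realised inside an rp-algebra by fixing yields and derivatives along the distinguishing trajectory and extending arbitrarily elsewhere.
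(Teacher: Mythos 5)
Your proposal is correct and follows essentially the same route as the paper: the paper's proof simply reuses the argument of Lemma~\ref{completeness lemma} and, in the one problematic case $P\syn T$, $Q\syn Q_1\lef a\rig Q_2$, replaces the distinguishing context $(\cdot)\circ a$ by $(\cdot)\circ b$ for a second atom $b$ whose existence is guaranteed by $|A|\ge 2$ --- exactly the fix you propose --- while the remaining cases are delegated to the induction and the preceding inverse-congruence lemma for $\BF_{rp}$, as in your sketch. One small caution: your stated justification that ``$b$ never appears on the spine of the traversal'' is not literally true ($b$ may well occur inside $Q_1$ or $Q_2$), but the construction survives because the rp-axiom only ties $y_b(x)$ to $y_b(\dd b(x))$, and since the first derivative along the trajectory from $H$ is $\dd a$ with $a\nsyn b$, no chain of forced equalities links $y_b(H)$ to $y_b(\dd Q(H))$.
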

\begin{proof}
We use the same argument as in the proof of Lemma \ref{completeness lemma}. However, in the case of $P\syn T$ and $Q\syn Q_1\lef a\rig Q_2$, we claimed that we can always construct an algebra $\mathbb{A}$ and valuation $H$ such that neither $a/H=a/\dd a(\dd{Q_1}(H))$ nor $a/H=a/\dd a(\dd{Q_2}(H))$. This is not true in this variety. For example, take $Q_1\syn Q_2\syn T$. Then $\dd{Q_1}(H)=\dd{Q_2}(H)=H$, and by definition of this variety, $a/H=a/\dd a(H)$. We can solve this by instead of taking $P\circ a$ and $Q\circ a$ to show that $P$ and $Q$ are not congruent, we take $P\circ b$ and $Q\circ b$ where the existence of $b$ is guaranteed by the assumption that $|A|\ge 2$. Since it is possible to construct an algebra and corresponding valuation $H$ such that neither $b/H=b/\dd a(\dd{Q_1}(H))$ nor $b/H=b/\dd a(\dd{Q_2}(H))$.
\end{proof}

The argument for completeness is exactly the same as in the previous section, except that we use the lemmas proven in this section.
\begin{thm}
If $|A|\ge 2$ then for closed terms $P$ and $Q$,
\[
P=_{rp}Q\qquad\Longrightarrow\qquad \text{CP}_{rp}\vdash P=Q
\]
\end{thm}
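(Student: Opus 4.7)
The plan is to mirror the completeness proof for free reactive valuations (Theorem \ref{completeness thm}) step by step, replacing each ingredient with its repetition-proof counterpart proved in this section. The overall chain has three links: reduction to basic forms, transfer of $=_{rp}$ across provable equality via soundness, and the rigidity of $\BF_{rp}$ under $=_{rp}$.

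First, starting from the hypothesis $P =_{rp} Q$, I would apply the rp-basic-form lemma proved just above to obtain terms $P', Q' \in \BF_{rp}$ with $\text{CP}_{rp} \vdash P = P'$ and $\text{CP}_{rp} \vdash Q = Q'$. This reduces the goal to establishing $\text{CP}_{rp} \vdash P' = Q'$. Next, by the soundness theorem for $\text{CP}_{rp}$, these provable equalities yield $P =_{rp} P'$ and $Q =_{rp} Q'$. Combining these with the hypothesis $P =_{rp} Q$ and using the fact that $=_{rp}$ is a congruence (in particular transitive), we obtain $P' =_{rp} Q'$.

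Finally, since $P', Q' \in \BF_{rp}$ and we have the standing assumption $|A| \ge 2$, Lemma \ref{rp completeness lemma} gives $P' \syn Q'$, i.e.\ the two basic forms are literally the same term. Hence $\text{CP}_{rp} \vdash P' = Q'$ holds trivially by reflexivity, and chaining the three provable equalities $P = P' = Q' = Q$ through transitivity yields $\text{CP}_{rp} \vdash P = Q$.

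The main obstacle to this argument is not in the statement itself but has already been absorbed into the supporting Lemma \ref{rp completeness lemma}, whose proof requires $|A| \ge 2$ in order to separate a basic form $T$ from a nontrivial basic form $Q_1 \lef a \rig Q_2$ by post-composing with a \emph{different} atomic proposition $b$ (because within $rp$-algebras one cannot freely arrange $a/H \ne a/\dd a(H)$). Once that lemma is available, the present theorem is essentially the same three-line calculation as in the free case.
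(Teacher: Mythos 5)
Your proposal is correct and matches the paper exactly: the paper states that ``the argument for completeness is exactly the same as in the previous section,'' i.e.\ reduce $P$ and $Q$ to repetition-proof basic forms, transfer $=_{rp}$ to those forms via soundness, and invoke Lemma \ref{rp completeness lemma} (which is where the $|A|\ge 2$ hypothesis is consumed) to obtain syntactic identity. Your closing remark about why that lemma needs a second atomic proposition $b$ also agrees with the paper's discussion.
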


Look at the following proposition to understand what happens when there is only one atomic proposition i.e., $|A|=1$.
\begin{prop}\label{prop one}
If $|A|=1$ then for all $P$, $Q$ and for all $\mathbb{A}\in rp$, $H\in\mathbb{A}$,
\[
P/\dd Q(H) = P/H
\]
\end{prop}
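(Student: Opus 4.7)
The plan is to do a double structural induction, pivoting on the following auxiliary claim: in this singleton‐atom rp setting, the yield $y_a$ is invariant under every derivative. Writing $A=\{a\}$, this is the statement
\[
y_a(\dd Q(H)) = y_a(H) \qquad \text{for every closed $Q$ and every $H$.}
\]

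First I would establish this auxiliary claim by structural induction on $Q$. The constant cases $Q\syn T$ and $Q\syn F$ are immediate since $\dd T(H)=\dd F(H)=H$. The base atomic case $Q\syn a$ is precisely the defining axiom $y_a(x)=y_a(\dd a(x))$ of the rp variety; this is the only place that assumption is used, and it is the only base case available because $|A|=1$. For $Q\syn Q_1\lef Q_2\rig Q_3$, the definition of $\dd Q(H)$ gives either $\dd{Q_1}(\dd{Q_2}(H))$ or $\dd{Q_3}(\dd{Q_2}(H))$, and in each subcase two successive applications of the induction hypothesis (first on $Q_2$, then on the relevant $Q_i$) collapse $y_a$ back to $y_a(H)$.

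Next I would prove the proposition itself by structural induction on $P$. The cases $P\syn T$ and $P\syn F$ are trivial. The case $P\syn a$ is exactly the auxiliary claim above, since $a/H = y_a(H)$. For $P\syn P_1\lef P_2\rig P_3$, the IH applied to $P_2$ gives $P_2/\dd Q(H)=P_2/H$, so the two evaluations of $P$ branch on the same truth value. In the $P_2/H=T$ branch we must compare $P_1/\dd{P_2}(\dd Q(H))$ with $P_1/\dd{P_2}(H)$. Applying the IH to $P_1$ with valuation $\dd Q(H)$ and term $P_2$ rewrites the former to $P_1/\dd Q(H)$, and a further application of the IH to $P_1$ with term $Q$ rewrites that to $P_1/H$; symmetrically the right-hand side reduces to $P_1/H$ via the IH applied to $P_1$ with term $P_2$. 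The $P_2/H=F$ branch is completely analogous using $P_3$.

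The only subtle point to guard against is the induction formulation: the hypothesis must be universally quantified over both the valuation and the "prefixing" term, so that after descending into a subterm of $P$ one is free to reapply it with $\dd Q(H)$ or with a fresh $Q'$. Apart from that bookkeeping, everything reduces to the rp axiom together with the fact that $|A|=1$ leaves no other atomic base case that could fail the invariance of $y_a$.
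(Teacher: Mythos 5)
Your proposal is correct and follows essentially the same route as the paper: an outer structural induction on $P$, with the atomic case handled by an inner induction on $Q$ using the rp axiom $y_a(x)=y_a(\dd a(x))$ as the only nontrivial base case, and the conditional-composition case resolved by repeated applications of the doubly-quantified induction hypothesis. The only difference is that you factor the atomic case out as a standalone auxiliary claim, whereas the paper runs it as a nested induction inside the main proof; this is a presentational choice, not a different argument.
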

\begin{proof}
Proof by induction on $P$. If $P$ is either $T$ or $F$ then $P/\dd Q(H)=P/H$ follows immediately.

Suppose $P\syn a$ then we proceed by induction on $Q$. If $Q$ is either $T$ or $F$ then it is trivial. If $Q\syn a$ then
\begin{align*}
P/\dd Q(H) &= a/\dd a(H)\\
&= y_a(\dd a(H))\\
&= y_a(H)\\
&= P/H
\end{align*}
If $Q\syn Q_1\lef Q_2\rig Q_3$ then
\begin{align*}
a/\dd{(Q_1\lef Q_2\rig Q_3)}(H)
&=\begin{cases}
a/\dd{Q_1}(\dd{Q_2}(H)) & \text{if $Q_2/H=T$}\\
a/\dd{Q_3}(\dd{Q_2}(H)) & \text{if $Q_2/H=F$}
\end{cases}\\
&=_{IH}\begin{cases}
a/\dd{Q_2}(H) & \text{if $Q_2/H=T$}\\
a/\dd{Q_2}(H) & \text{if $Q_2/H=F$}
\end{cases}\\
&=_{IH} a/H
\end{align*}
Hence, $a/\dd Q(H)=a/H$ for all $Q$ and $H$.

Suppose $P\syn P_1\lef P_2\rig P_3$. Then
\begin{align*}
(P_1\lef P_2\rig P_3)/\dd Q(H)
&=\begin{cases}
P_1/\dd{P_2}(\dd Q(H)) & \text{if $P_2/\dd Q(H)=T$}\\
P_3/\dd{P_2}(\dd Q(H)) & \text{if $P_2/\dd Q(H)=F$}
\end{cases}\\
&=_{IH}\begin{cases}
P_1/(H) & \text{if $P_2/H=T$}\\
P_3/(H) & \text{if $P_2/H=F$}
\end{cases}\\
&=_{IH}\begin{cases}
P_1/\dd{P_2}(H) & \text{if $P_2/H=T$}\\
P_3/\dd{P_2}(H) & \text{if $P_2/H=F$}
\end{cases}\\
&=(P_1\lef P_2\rig P_3)/H
\end{align*}
\end{proof}
This proposition implies for example that
\begin{align*}
(T\lef a\rig a)/H
&=\begin{cases}
T/\dd a(H) & \text{if $a/H=T$}\\
a/\dd a(H) & \text{if $a/H=F$}
\end{cases}\\
&=\begin{cases}
T & \text{if $a/H=T$}\\
a/H & \text{if $a/H=F$}
\end{cases}\\
&= a/H
\end{align*}
In fact, Proposition \ref{prop one} implies that for $|A|=1$ we lose any kind of reactive behaviour, and we end up with static valuations. Furthermore, this proposition is clearly true for every variety in which all valuations are repetition-proof i.e., where $y_a(H)=y_a(\dd a(H))$ is true. Hence, we have the following corollary.
\begin{cor}\label{cor one}
If $|A|=1$ and the valuations in variety $K$ satisfy the equation $y_a(x)=y_a(\dd a(x))$ then
\[
P=_K Q\qquad\Longleftrightarrow\qquad P=_{st}Q
\]
for all closed terms $P$ and $Q$.
\end{cor}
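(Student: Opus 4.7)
The plan is to reduce the corollary to Proposition \ref{prop one} together with a simple structural-induction argument showing that, under the stated hypotheses, the value $P/H$ of a closed term depends only on $y_a(H)$. The paper already notes that Proposition \ref{prop one} goes through verbatim for any variety whose valuations satisfy $y_a(x)=y_a(\dd a(x))$, since the induction nowhere uses anything beyond this equation; I will use this extended form of the proposition.

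First I would prove the key lemma: for any $K$-algebra $\mathbb{A}$ and $H \in \mathbb{A}$, the value $P/H$ is fully determined by $y_a(H)$. The induction is immediate. The cases $P \syn T$ and $P \syn F$ are trivial, $a/H = y_a(H)$ by definition, and for $P \syn P_1 \lef P_2 \rig P_3$ the value $P/H$ equals $P_i/\dd{P_2}(H)$ for the branch selected by $P_2/H$; by the extended Proposition \ref{prop one} this is just $P_i/H$, which by induction hypothesis depends only on $y_a(H)$. The very same argument applies inside any static algebra, so every closed term $P$ induces a single function $f_P: \{T,F\} \to \{T,F\}$ that is computed identically in $K$-algebras and in static algebras.

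Both boolean values of $y_a(H)$ are actually realised in every RVA (take $H = T_{RV}$ and $H = F_{RV}$), hence $P \equiv_K Q$ iff $f_P = f_Q$ iff $P \equiv_{st} Q$. Consequently $\equiv_K$ and $\equiv_{st}$ are the same relation on closed terms, and since $=_K$ and $=_{st}$ are by definition the largest congruences contained in their respective equivalences, they must also coincide, which gives the corollary.

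There is essentially no hard step here: once Proposition \ref{prop one} is in hand all reactive behaviour collapses, and the whole argument is mechanical. The only thing that needs a moment's care is the observation that both values of $y_a(H)$ are genuinely attained in every algebra, because without this the equality of the induced functions $f_P, f_Q$ would not quite guarantee $\equiv_K \;=\; \equiv_{st}$; but the constants $T_{RV}, F_{RV}$ required of any RVA take care of this immediately.
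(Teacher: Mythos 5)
Your proof is correct and follows essentially the same route as the paper, which derives the corollary directly from the remark that Proposition \ref{prop one} holds verbatim in any variety satisfying $y_a(x)=y_a(\dd a(x))$. You merely make explicit two details the paper leaves implicit --- that every closed term collapses to a single Boolean function of $y_a(H)$ computed uniformly across varieties, and that both values of $y_a(H)$ are realised via $T_{RV}$ and $F_{RV}$ --- both of which are sound and worth spelling out.
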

This result will also be helpful in establishing completeness for the variety with contractive valuations.

\section{Contractive valuations}
Recall that the variety with contractive valuations is characterized by the following two equations:
\begin{align*}
y_a(x) &= y_a(\dd a(x))\\\\
\dd a(x) &= \dd a(\dd a(x))
\end{align*}
The first equation should be familiar since we encountered it in the previous section in the characterization of repetition-proof valuations. The second equation tells us that valuations remain constant through multiple $a$-derivatives. Consider the following example,
\begin{align*}
((b\lef a\rig c)\lef a\rig d)/H
&=\begin{cases}
(b\lef a\rig c)/\dd a(H) & \text{if $a/H=T$}\\
d/\dd a(H) & \text{if $a/H=F$}
\end{cases}\\
&=\begin{cases}
b/\dd a(\dd a(H)) & \text{if $a/H=T$ and $a/\dd a(H)=T$}\\
c/\dd a(\dd a(H)) & \text{if $a/H=T$ and $a/\dd a(H)=F$}\\
d/\dd a(H) & \text{if $a/H=F$}
\end{cases}\\
&=\begin{cases}
b/\dd a(H) & \text{if $a/H=T$ and $a/\dd a(H)=T$}\\
c/\dd a(H) & \text{if $a/H=T$ and $a/\dd a(H)=F$}\\
d/\dd a(H) & \text{if $a/H=F$}
\end{cases}\\
&=\begin{cases}
b/\dd a(H) & \text{if $a/H=T$}\\
d/\dd a(H) & \text{if $a/H=F$}
\end{cases}\\
&=(b\lef a\rig d)/H
\end{align*}
In the first two steps we expand the expression using the standard free reactive semantics. In the third step we replace $\dd a(\dd a(H))$ with $\dd a(H)$ using the definition of contractive valuations. Similarly, as in the example given in the section on repetition valuations we can eliminate the case where the example is equal to $c/\dd a(H)$ because $a/H$ must be equal to $a/\dd a(H)$.

By looking at the definition it becomes immediately apparent that the variety with contractive valuations is a subvariety of the variety with repetition-proof valuations i.e., if $P\equiv_{rp}Q$ then $P\equiv_{cr}Q$, and similarly if $P=_{rp}Q$ then $P=_{cr}Q$. Of course both varieties are subvarieties of the variety with free reactive valuations. This relation between the different varieties was previously also stated in Proposition \ref{variety relation prop}.

$cr$-Congruence is axiomatized by CP and the following axiom schemas.
\begin{align*}
(\CPcr1)\qquad(x\lef a\rig y)\lef a\rig z &= x\lef a\rig z\\
(\CPcr2)\qquad x\lef a\rig (y\lef a\rig z) &= x\lef a\rig z\\
\end{align*}
The entire axiomatization is called $\text{CP}_{cr}$. The axioms of \CPcr1 and \CPcr2 allow us to eliminate consecutive atomic propositions in our terms. So for example the terms $a\circ P$ and $a\circ a\circ P$ are provably equal. This is of course a stronger version of what we have seen in the previous section, which should not come as a surprise considering that this variety is defined in terms of the repetition-proof variety.


The following two sections show soundness and completeness for this variety.

\subsection{Soundness}
\begin{thm}
For closed terms $P$ and $Q$,
\[
\text{CP}_{cr}\vdash P=Q\qquad\Longrightarrow\qquad P=_{cr}Q
\]
\end{thm}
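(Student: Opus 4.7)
The plan is to follow exactly the same template used in the proof of soundness for $\text{CP}_{rp}$. Since Proposition \ref{variety relation prop} gives us $=_{fr}\,\subsetneq\,=_{cr}$, and since Theorem \ref{soundness cp} already established soundness of the CP-axioms for $=_{fr}$, all four CP-axioms are automatically sound for $=_{cr}$. So it suffices to verify soundness of the two extra axiom schemas CPcr1 and CPcr2, for every $a\in A$, in any $cr$-algebra $\mathbb{A}$ with any valuation $H\in\mathbb{A}$.

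For each of the two schemas I would invoke Proposition \ref{equiv congr prop}: verify both that the left-hand and right-hand terms agree pointwise under the yield (i.e., $(\cdot)/H$) and that their generalized derivatives $\dd{(\cdot)}(H)$ agree. For CPcr1, I would expand $((P\lef a\rig Q)\lef a\rig R)/H$ by the free-reactive semantics into three cases, indexed by the values of $a/H$ and $a/\dd a(H)$. The contractive axiom $y_a(x)=y_a(\dd a(x))$ forces $a/H=a/\dd a(H)$, collapsing the sub-case ``$a/H=T$ and $a/\dd a(H)=F$'' to the vacuous one; the axiom $\dd a(\dd a(H))=\dd a(H)$ then rewrites $P/\dd a(\dd a(H))$ as $P/\dd a(H)$. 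The remaining case analysis coincides term-by-term with the unfolding of $(P\lef a\rig R)/H$, as required. The derivative computation $\dd{((P\lef a\rig Q)\lef a\rig R)}(H)$ follows the same branching and, using the same two identities, reduces to $\dd{(P\lef a\rig R)}(H)$.

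For CPcr2, the argument is the mirror image: I would expand $(P\lef a\rig(Q\lef a\rig R))/H$, noting that the branch where $a/H=T$ immediately yields $P/\dd a(H)$, while the branch where $a/H=F$ splits further on $a/\dd a(H)$. The repetition-proof identity $y_a(H)=y_a(\dd a(H))$ kills the sub-case ``$a/H=F$ and $a/\dd a(H)=T$'', and $\dd a(\dd a(H))=\dd a(H)$ then collapses $R/\dd a(\dd a(H))$ to $R/\dd a(H)$. This matches the unfolding of $(P\lef a\rig R)/H$. The derivative check is analogous.

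The only ``obstacle'' here is bookkeeping: tracking the three to four branches of the case analysis and ensuring each is reduced to the corresponding branch on the other side. There is no genuine creativity needed — both axioms of the contractive variety are purely equational simplifications of nested $a$-tests and $a$-derivatives, and Proposition \ref{equiv congr prop} packages exactly the two checks (values and derivatives) that need to be performed.
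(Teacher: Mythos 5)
Your proposal is correct and follows essentially the same route as the paper: reduce to checking only CPcr1 and CPcr2 (since the CP-axioms are sound for the subvariety by the free-reactive soundness theorem), expand each side case-by-case, use $y_a(x)=y_a(\dd a(x))$ to eliminate the contradictory sub-case and $\dd a(\dd a(x))=\dd a(x)$ to collapse the double derivative, and conclude via Proposition \ref{equiv congr prop} by checking both yields and derivatives. No gaps.
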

\begin{proof}
Since the variety with contractive valuations is a subvariety of the variety with free reactive valuations it suffices to show soundness for $\CPcr1$ and $\CPcr2$. Let algebra $\mathbb{A}$ and valuation $H\in\mathbb{A}$ be given.
\begin{align*}
((P\lef a\rig Q)\lef a\rig R)/H &= 
\begin{cases}
(P\lef a\rig Q)/\dd a(H) & \text{if $a/H=T$}\\
R/\dd a(H) & \text{if $a/H=F$}\\
\end{cases}\\
&=\begin{cases}
P/\dd a(\dd a(H)) & \text{if $a/H=T$ and $a/\dd a(H)=T$}\\
Q/\dd a(\dd a(H)) & \text{if $a/H=T$ and $a/\dd a(H)=F$}\\
R/\dd a(H) & \text{if $a/H=F$}\\
\end{cases}\\
&=\begin{cases}
P/\dd a(\dd a(H)) & \text{if $a/H=T$ and $a/H=T$}\\
Q/\dd a(\dd a(H)) & \text{if $a/H=T$ and $a/H=F$}\\
R/\dd a(H) & \text{if $a/H=F$}\\
\end{cases}\\
&=\begin{cases}
P/\dd a(\dd a(H)) & \text{if $a/H=T$}\\
R/\dd a(H) & \text{if $a/H=F$}\\
\end{cases}\\
&=\begin{cases}
P/\dd a(H) & \text{if $a/H=T$}\\
R/\dd a(H) & \text{if $a/H=F$}\\
\end{cases}\\
&=(P\lef a\rig R)/H
\end{align*}

\begin{align*}
\dd{((P\lef a\rig Q)\lef a\rig R)}(H)
&=\begin{cases}
\dd P(\dd a(\dd a(H))) & \text{if $a/H=T$ and $a/\dd a(H)=T$}\\
\dd Q(\dd a(\dd a(H))) & \text{if $a/H=T$ and $a/\dd a(H)=F$}\\
\dd R(\dd a(H)) & \text{if $a/H=F$}
\end{cases}\\
&=\begin{cases}
\dd P(\dd a(H)) & \text{if $a/H=T$}\\
\dd R(\dd a(H)) & \text{if $a/H=F$}\\
\end{cases}\\
&=\dd{(P\lef a\rig R)}(H)
\end{align*}
By Proposition \ref{equiv congr prop}, CPcr1 is sound. The proof of soundness for CPcr2 is similar to that of CPcr1.
\begin{align*}
(P\lef a\rig (Q\lef a\rig R))/H
&=\begin{cases}
P/\dd a(H) & \text{if $a/H=T$}\\
(Q\lef a\rig R)/\dd a(H) & \text{if $a/H=F$}\\
\end{cases}\\
&=\begin{cases}
P/\dd a(H) & \text{if $y_a(H)=T$}\\
Q/\dd a(\dd a(H)) & \text{if $a/H=F$ and $a/\dd a(H)=T$}\\
R/\dd a(\dd a(H)) & \text{if $a/H=F$ and $a/\dd a(H)=F$}\\
\end{cases}\\
&=\begin{cases}
P/\dd a(H) & \text{if $y_a(H)=T$}\\
Q/\dd a(\dd a(H)) & \text{if $a/H=F$ and $a/H=T$}\\
R/\dd a(\dd a(H)) & \text{if $a/H=F$ and $a/H=F$}\\
\end{cases}\\
&=\begin{cases}
P/\dd a(H) & \text{if $a/H=T$}\\
R/\dd a(\dd a(H)) & \text{if $a/H=F$}\\
\end{cases}\\
&=\begin{cases}
P/\dd a(H) & \text{if $a/H=T$}\\
R/\dd a(H) & \text{if $a/H=F$}\\
\end{cases}\\
&=(P\lef a\rig R)/H
\end{align*}

\begin{align*}
\dd{(P\lef a\rig(Q\lef a\rig R))}(H)
&=\begin{cases}
\dd P(\dd a(H)) & \text{if $a/H=T$}\\
\dd Q(\dd a(\dd a(H))) & \text{if $a/H=F$ and $a/\dd a(H)=T$}\\
\dd R(\dd a(\dd a(H))) & \text{if $a/H=F$ and $a/\dd a(H)=F$}\\
\end{cases}\\
&=\begin{cases}
\dd P(\dd a(H)) & \text{if $a/H=T$}\\
\dd R(\dd a(H)) & \text{if $a/H=F$}\\
\end{cases}\\
&=\dd{(P\lef a\rig R)}(H)
\end{align*}
Hence, CPcr2 is sound.
\end{proof}

\subsection{Completeness}
As we are working with a new variety we are required to define a new set of basic forms. Otherwise syntactic equality and $cr$-congruence will not coincide.
\begin{defn}
The set of contractive basic forms is the smallest set $\BF_{cr}$ such that $T,F\in\BF_{cr}$ and if $P,Q\in\BF_{cr}$ then for all $a\in A$
\begin{itemize}
\item if $P\syn P_1\lef a\rig P_2$ and $Q\syn Q_1\lef a\rig Q_2$ then $P_1\lef a\rig Q_2\in\BF_{cr}$
\item if $P\syn P_1\lef a\rig P_2$ and $Q\nsyn Q_1\lef a\rig Q_2$ then $P_1\lef a\rig Q\in\BF_{cr}$
\item if $P\nsyn P_1\lef a\rig P_2$ and $Q\syn Q_1\lef a\rig Q_2$ then $P\lef a\rig Q_2\in\BF_{cr}$
\item if $P\nsyn P_1\lef a\rig P_2$ and $Q\nsyn Q_1\lef a\rig Q_2$ then $P\lef a\rig Q\in\BF_{cr}$
\end{itemize}
\end{defn}
This definition differs from the one for repetition-proof basic forms. For example, $(T\lef a\rig T)\lef a\rig F$ is a valid repetition-proof basic form but it is not a contractive basic form. In fact, $\BF_{cr}$ is a subset of $\BF_{rp}$.
\begin{prop}
$\BF_{cr}\subseteq\BF_{rp}$
\end{prop}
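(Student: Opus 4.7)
The plan is to prove $\BF_{cr}\subseteq\BF_{rp}$ by structural induction on the term $t\in\BF_{cr}$, supported by two auxiliary facts about $\BF_{cr}$ that isolate why its clauses are ``more contracted'' than those of $\BF_{rp}$.

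First I would establish a \emph{subterm-closure} lemma: if $t_1\lef a\rig t_2\in\BF_{cr}$, then $t_1,t_2\in\BF_{cr}$. This is a routine induction on the derivation witnessing $t_1\lef a\rig t_2\in\BF_{cr}$: in each of the four clauses of the definition, the two pieces placed to the left and right of $\lef a\rig$ are either elements of $\BF_{cr}$ directly (clauses where $P$ or $Q$ is used whole) or immediate subterms of such elements (clauses using $P_1$ or $Q_2$), in which case the induction hypothesis applied to $P$ or $Q$ supplies membership.

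Second, and this is the heart of the argument, I would prove the \emph{no-$a$-decomposition} lemma: whenever $t_1\lef a\rig t_2\in\BF_{cr}$, neither $t_1$ nor $t_2$ is of the form $X\lef a\rig Y$. I would prove it by induction on the derivation of $t_1\lef a\rig t_2$, observing that since its head connective is $\lef a\rig$, the constructor step must use the same atomic proposition $a$. In the clauses where the left piece is taken to be $P$ itself (clauses 3 and 4), the hypothesis of the clause explicitly gives $P\nsyn X\lef a\rig Y$; in the clauses where the left piece is $P_1$ with $P\syn P_1\lef a\rig P_2\in\BF_{cr}$ (clauses 1 and 2), the induction hypothesis applied to $P$ gives $P_1\nsyn X\lef a\rig Y$. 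A symmetric argument, exchanging the roles of $P_1$ in the left slot with $Q_2$ in the right slot, handles $t_2$. In other words, the $\BF_{cr}$ contraction rules enforce by construction that the two immediate children of an $\lef a\rig$-node in $\BF_{cr}$ themselves never start with $\lef a\rig$.

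With these two lemmas in hand, the main inclusion is immediate by structural induction on $t$. The base cases $t\syn T$ and $t\syn F$ belong to $\BF_{rp}$ by definition. In the inductive step $t\syn t_1\lef a\rig t_2$, subterm closure gives $t_1,t_2\in\BF_{cr}$, the induction hypothesis upgrades this to $t_1,t_2\in\BF_{rp}$, and the no-$a$-decomposition lemma tells us $t_1\nsyn X\lef a\rig Y$ and $t_2\nsyn X\lef a\rig Y$, so the fourth clause of the definition of $\BF_{rp}$ applies directly to yield $t_1\lef a\rig t_2\in\BF_{rp}$. The main obstacle is the second lemma: the proof must carefully reconstruct the derivation of an $\BF_{cr}$ element whose head is $\lef a\rig$ and exploit that the head atomic proposition forces the last construction step to use $a$, after which the four-way case analysis closes inductively. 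Once that is settled, both subterm closure and the overall inclusion are mechanical.
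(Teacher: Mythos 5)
Your proof is correct and follows essentially the same route as the paper: a structural induction on the term, using the facts that the immediate subterms of an element of $\BF_{cr}$ are again in $\BF_{cr}$ and that the children of an $\lef a\rig$-node in $\BF_{cr}$ never themselves start with $\lef a\rig$, and then invoking the fourth clause of the definition of $\BF_{rp}$. The only difference is that you spell out as explicit lemmas (with their own inductions on the derivation) the two auxiliary facts that the paper simply asserts ``by definition,'' which is a reasonable extra level of care.
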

\begin{proof}
Let $P\in\BF_{cr}$. Then by structural induction on $P$ we show that $P\in\BF_{rp}$. If $P\in\{T,F\}$, it follow immediately that $P\in\BF_{rp}$.

Suppose $P\syn P_1\lef a\rig P_2$. By definition, $P_1,P_2\in\BF_{cr}$. Hence, by I.H., it follows that $P_1,P_2\in\BF_{rp}$. By definition of contractive basic forms, $P_1\nsyn P_{11}\lef a\rig P_{12}$ and $P_2\nsyn P_{21}\lef a \rig P_{22}$. Consequently, $P_1\lef a\rig P_2\in\BF_{rp}$.
\end{proof}

The following lemma shows that the set $\BF_{cr}$ is indeed the set of basic forms we want.
\begin{lem}
For each closed term $P$ there exists a term $P'\in\BF_{cr}$ such
that $\text{CP}_{cr}\vdash P=P'$.
\end{lem}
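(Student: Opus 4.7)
The plan is to mirror exactly the structure of the analogous lemma for $\BF_{rp}$: reduce first to a plain basic form using Lemma \ref{basic form theorem}, then do structural induction and apply the new axioms CPcr1 and CPcr2 in a four-way case split to land in $\BF_{cr}$. Since every term $P$ over $\sig$ is provably equal in $\mathrm{CP}$ (and hence in $\mathrm{CP}_{cr}$) to a term in $\BF$, I can assume without loss of generality that $P \in \BF$.

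I would then induct on this basic form. The base case $P \in \{T, F\}$ is immediate since both are in $\BF_{cr}$. In the inductive case, $P \syn P_1 \lef a \rig P_2$ with $P_1, P_2 \in \BF$, and the induction hypothesis yields $P_1', P_2' \in \BF_{cr}$ with $\mathrm{CP}_{cr} \vdash P_1 = P_1'$ and $\mathrm{CP}_{cr} \vdash P_2 = P_2'$, hence $\mathrm{CP}_{cr} \vdash P = P_1' \lef a \rig P_2'$ by congruence. Now I split into the same four cases used in the definition of $\BF_{cr}$, according to whether $P_1'$ and $P_2'$ have $a$ as their outermost antecedent. In the ``both'' case, CPcr1 collapses the left $a$-nest and CPcr2 collapses the right one:
\begin{align*}
(P_{11}' \lef a \rig P_{12}') \lef a \rig (P_{21}' \lef a \rig P_{22}')
&= P_{11}' \lef a \rig (P_{21}' \lef a \rig P_{22}') \\
&= P_{11}' \lef a \rig P_{22}',
\end{align*}
and the resulting term is in $\BF_{cr}$ by the first clause of the definition. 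In the ``only left'' case, one application of CPcr1 gives $P_{11}' \lef a \rig P_2'$, matching the second clause. In the ``only right'' case, CPcr2 gives $P_1' \lef a \rig P_{22}'$, matching the third clause. In the ``neither'' case, $P_1' \lef a \rig P_2'$ is itself already in $\BF_{cr}$ by the fourth clause.

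The only point that merits some care is verifying that the terms produced in each case genuinely lie in $\BF_{cr}$, i.e.\ that the sub-terms we plug in are themselves in $\BF_{cr}$. This follows from the induction hypothesis together with the fact that $\BF_{cr}$ is closed under taking immediate sub-terms in its constructors (if $R \syn R_1 \lef a \rig R_2 \in \BF_{cr}$, then $R_1, R_2 \in \BF_{cr}$). So the proof is essentially a bookkeeping argument once the four-way case analysis mirrored on the definition is set up; I do not expect a genuine obstacle, only the routine verification that each derived shape matches the corresponding clause of the definition of $\BF_{cr}$.
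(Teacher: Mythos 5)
Your proposal is correct and follows essentially the same route as the paper's own proof: reduce to a term in $\BF$ via Lemma \ref{basic form theorem}, then induct structurally and use CPcr1/CPcr2 in the four-way case split mirroring the definition of $\BF_{cr}$. The only cosmetic difference is that you spell out the two-step collapse in the ``both'' case and the closure of $\BF_{cr}$ under immediate subterms, which the paper leaves implicit.
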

\begin{proof}
By structural induction on $P$. We can assume that $P\in\BF$ because Lemma \ref{basic form theorem} is applicable.

If $P$ is $T$ or $F$ then $P\in\BF_{cr}$. Suppose $P$ is $P_1\lef a\rig P_2$. By the induction hypothesis,
there exists terms $P_1',P_2'\in\BF_{cr}$ such that $\vdash P_1=P_1'$
and $\vdash P_2=P_2'$. Now suppose that $P_1'\syn P_{11}'\lef a\rig
P_{12}'$ and $P_2'\syn P_{21}'\lef a\rig P_{22}'$. Consequently,
\begin{align*}
P_1\lef a\rig P_2
&= P_1'\lef a\rig P_2'\\
&= (P_{11}'\lef a\rig P_{12}')\lef a\rig (P_{21}'\lef a\rig P_{22}')\\
&= P_{11}'\lef a\rig P_{22}'
\end{align*}
By definition of $\BF_{cr}$ we have $P_{11}'\lef a\rig
P_{22}'\in\BF_{cr}$.

Using similar reasoning we can show that there exists such a term
$P'\in\BF_{cr}$ for the remaining three cases:
\begin{itemize}
\item $P_1'\nsyn P_{11}'\lef a\rig P_{12}'$ and $P_2'\syn P_{21}'\lef a\rig P_{22}'$
\item $P_1'\syn P_{11}'\lef a\rig P_{12}'$ and $P_2'\nsyn P_{21}'\lef a\rig P_{22}'$
\item $P_1'\nsyn P_{11}'\lef a\rig P_{12}'$ and $P_2'\nsyn P_{21}'\lef a\rig P_{22}'$
\end{itemize}
\end{proof}
The following two lemmas are needed for proving completeness for all closed terms.
\begin{lem}
For $P_1\lef a\rig P_2,Q_1\lef a\rig Q_2\in\BF_{cr}$,
\[
P_1\lef a\rig P_2=_{cr}Q_1\lef a\rig Q_2\qquad\Longrightarrow\qquad P_1=_{cr}Q_1\wedge P_2=_{cr}Q_2
\]
\end{lem}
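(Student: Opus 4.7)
The plan is to mirror the structure of the corresponding lemma for $\BF_{rp}$: prove the contrapositive, assume without loss of generality that $P_1 \ne_{cr} Q_1$, and then split into two sub-cases depending on whether the obstruction is at the level of semantic equivalence or at the level of congruence.

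First I would handle the case $P_1 \not\equiv_{cr} Q_1$. Choose a cr-algebra $\mathbb{A}$ and a valuation $H \in \mathbb{A}$ witnessing $P_1/H \ne Q_1/H$. As in the rp-proof, a four-way case split on whether $P_1$ and $Q_1$ have outermost form $\_ \lef a \rig \_$, combined with the definition of $\BF_{cr}$ (which effectively forbids a term and its immediate $a$-child from both being rooted at $a$), should let me argue that the inequality $P_1/H \ne Q_1/H$ persists regardless of the value $y_a(H)$, so WLOG I can take $y_a(H) = T$. The goal is then to build a cr-algebra $\mathbb{A}' \supseteq \mathbb{A}$ together with a fresh valuation $H' \in \mathbb{A}'$ satisfying $\dd a(H') = H$ and $y_a(H') = T$, from which $(P_1 \lef a \rig P_2)/H' \ne (Q_1 \lef a \rig Q_2)/H'$ follows by unfolding the semantics.

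The main obstacle, and the new ingredient over the rp-proof, is that a cr-algebra must additionally satisfy $\dd a(\dd a(x)) = \dd a(x)$, which forces $\dd a(H) = H$ once we declare $\dd a(H') = H$. To secure this, I would first replace $H$ by $\dd a(H)$, which is automatically a fixed point of $\dd a$ by the second cr-axiom, and use the structural restrictions on $P_1, Q_1 \in \BF_{cr}$ (neither is headed by an $a$-conditional whose value would change across a single $a$-derivative) to conclude that $P_1/\dd a(H) \ne Q_1/\dd a(H)$ is inherited from $P_1/H \ne Q_1/H$. Once $H$ is a fixed point of $\dd a$, defining $H'$ by $\dd a(H') = H$ and $y_a(H') = T$ is consistent with both cr-axioms, and one then extends the $a$-derivatives and yields on the rest of $\mathbb{A}$ trivially. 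Verifying that this extension respects both cr-axioms simultaneously is the most delicate step and will require some care.

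For the second sub-case, in which $P_1 \equiv_{cr} Q_1$ but the congruence property fails in some context, I would reuse the argument from the rp-lemma: the two possibilities where the failing context embeds $P_1$ in a consequent slot of a conditional (i.e., positions corresponding to (1) and (2) in the rp-proof) contradict $\equiv_{cr}$ directly, leaving only the antecedent-position case, which is dispatched by the same extension construction developed above. Combining the sub-cases yields $P_1 \lef a \rig P_2 \ne_{cr} Q_1 \lef a \rig Q_2$, completing the contrapositive.
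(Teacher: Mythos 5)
Your skeleton is the right one and matches the paper's (which simply defers to the proofs of the corresponding lemmas for $\BF$ and $\BF_{rp}$, noting that the four-way case split degenerates because members of $\BF_{cr}$ of the form $P_1\lef a\rig P_2$ cannot have $P_1$ or $P_2$ rooted at $a$). You have also correctly spotted a subtlety that the paper passes over in silence: in a contractive algebra, $\dd a(\dd a(x))=\dd a(x)$ means a valuation $H$ can only equal $\dd a(H')$ if $\dd a(H)=H$, so the extension ``$\mathbb{A}'\supseteq\mathbb{A}$ with $\dd a(H')=H$'' is only available when the witness $H$ is a fixed point of $\dd a$. The problem is your repair of this. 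Replacing $H$ by $\dd a(H)$ and claiming the inequality is inherited does not work: the justification you give concerns only the roots of $P_1$ and $Q_1$, but the inheritance can fail because of atoms \emph{other than} $a$ occurring inside $P_1,Q_1$, whose yields the cr-axioms allow to change arbitrarily under $\dd a$. Concretely, take $P_1\syn T\lef b\rig F$ and $Q_1\syn F$ (both legitimate left components of members of $\BF_{cr}$) and a cr-algebra with $y_b(H)=T$ but $y_b(\dd a(H))=F$; then $P_1/H=T\ne F=Q_1/H$ while $P_1/\dd a(H)=F=Q_1/\dd a(H)$, so the witness is destroyed.

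The step can be repaired, but not by passing to $\dd a(H)$. One correct construction: adjoin to $\mathbb{A}$ a fresh valuation $G$ with $y_b(G)=y_b(H)$ and $\dd b(G)=\dd b(H)$ for all $b\ne a$, with $\dd a(G)=G$, and with $y_a(G)$ set to whichever truth value you need. Both cr-equations hold at $G$ (for $b\ne a$ they are inherited from $\mathbb{A}$ because $\dd b(G)=\dd b(H)$ lies in $\mathbb{A}$; for $a$ they are immediate since $G$ is its own $a$-derivative), and because $P_1,Q_1$ are not rooted at $a$, evaluating them at $G$ takes one derivative step into $\mathbb{A}$ and thereafter coincides with evaluating them at $H$, giving $P_1/G\ne Q_1/G$. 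Now $G$ genuinely is a fixed point of $\dd a$, so the further extension by $H'$ with $\dd a(H')=G$ and $y_a(H')=y_a(G)$ is consistent with contractivity, and the unfolding you describe goes through; the same device is needed in your case (3). So: right structure, correct diagnosis of the new difficulty, but the specific fix you propose rests on a false inheritance claim and must be replaced.
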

\begin{proof}
We use a similar proof as the one for Lemma \ref{inv congr lemma}. Note that the problems that occurred in the variety with repetition-proof valuations from having either $P_1\syn P_{11}\lef a\rig P_{12}$ or $Q_1\syn Q_{11}\lef a\rig Q_{12}$ cannot occur here by construction of the contractive basic forms i.e., if either $P_1$ or $Q_1$ are syntactically equal to those terms then $P_1\lef a\rig P_2,Q_1\lef a\rig Q_2\in\BF_{cr}$ would not be true.
\end{proof}
\begin{lem}\label{cr completeness lemma}
For $|A|\ge 2$ and $P,Q\in\BF_{cr}$,
\[
P=_{cr}Q\qquad\Longleftrightarrow\qquad P\syn Q
\]
\end{lem}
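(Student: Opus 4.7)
The plan is to adapt the proof of Lemma~\ref{rp completeness lemma} (which in turn adapts Lemma~\ref{completeness lemma}) to this variety, exploiting the stricter shape of contractive basic forms. The direction $P\syn Q \Rightarrow P=_{cr}Q$ is immediate. For the converse I would take the contrapositive and, assuming $P\nsyn Q$, proceed by structural induction on $P$ and $Q$ in $\BF_{cr}$, in each subcase constructing either a distinguishing $cr$-valuation or a context in which the congruence property of $=_{cr}$ fails.

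The case split is as follows. Base cases with $P,Q\in\{T,F\}$ are immediate. If exactly one of $P,Q$ is constant while the other has the shape $Q_1\lef a\rig Q_2$, the sequential-composition trick of Lemma~\ref{rp completeness lemma} carries over, provided I compose on the right with $\circ b$ for some atomic proposition $b\ne a$; such a $b$ exists by the hypothesis $|A|\ge 2$. Because the contractivity equations restrict only how a proposition interacts with its own derivative, the value of $b$ under $\dd a(\dd{Q_i}(H))$ can be made to differ from $b/H$ in a suitably chosen contractive algebra. If both $P\syn P_1\lef a\rig P_2$ and $Q\syn Q_1\lef a\rig Q_2$ share the head $a$, the defining clauses of $\BF_{cr}$ prevent any $P_i$ or $Q_i$ from themselves starting with $a$, so the technical wrinkle present in the $rp$-case disappears. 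Here I would invoke the contractive analog of Lemma~\ref{inv congr lemma} proved just above, reducing $P\ne_{cr}Q$ to $P_1\ne_{cr}Q_1$ or $P_2\ne_{cr}Q_2$, which follows by the induction hypothesis. If the heads differ, $P\syn P_1\lef a\rig P_2$ and $Q\syn Q_1\lef b\rig Q_2$ with $a\ne b$, I would build a contractive valuation $H$ with $y_a(H)\ne y_b(H)$ so that $P$ and $Q$ select incompatible branches, possibly composing with $\circ c$ for a fresh $c$ to force a visible disagreement in yield.

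The real work lies in the explicit construction, in each subcase, of a contractive algebra and valuation that actually distinguishes the two terms: one must simultaneously respect $y_a(H)=y_a(\dd a(H))$ and $\dd a(\dd a(H))=\dd a(H)$ for every $a\in A$ while still producing the required disagreement. The assumption $|A|\ge 2$ is indispensable and cannot be dropped: by Corollary~\ref{cor one}, when $|A|=1$ the contractive variety collapses to the static one, in which many syntactically distinct elements of $\BF_{cr}$ become $cr$-congruent (for instance $T\lef a\rig F$ and $a$ would identify as soon as we also coerced $a$ into basic form), so the lemma would be outright false without that hypothesis. Therefore the main obstacle is not the overall scheme, which is by now routine, but the careful algebra construction in the subcase where exactly one side is constant, exactly the point at which the $rp$-proof also had to be refined.
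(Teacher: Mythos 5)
Your proposal is correct and follows essentially the same route as the paper, which simply states that the argument of Lemma~\ref{rp completeness lemma} (itself an adaptation of Lemma~\ref{completeness lemma}) carries over; you in fact supply more detail than the paper does, including the correct observation that the $rp$-case's wrinkle about subterms headed by $a$ disappears by the definition of $\BF_{cr}$, and that $|A|\ge 2$ is forced by Corollary~\ref{cor one}.
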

\begin{proof}
We use a similar proof as the one for Lemma \ref{rp completeness lemma}.
\end{proof} 
Similar to the previous varieties, now that we have proven these lemmas, completeness for all closed terms follows.
\begin{thm}
If $|A|\ge 2$ then for closed terms $P$ and $Q$,
\[
P=_{cr}Q\qquad\Longrightarrow\qquad \text{CP}_{cr}\vdash P=Q
\]
\end{thm}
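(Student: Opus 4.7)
The plan is to reduce the general completeness claim to the already-established completeness for contractive basic forms (Lemma \ref{cr completeness lemma}), mirroring the structure of Theorem \ref{completeness thm} in the free-reactive section. All the heavy lifting has already been done: we have a normal-form lemma producing a term in $\BF_{cr}$, soundness of $\text{CP}_{cr}$ with respect to $=_{cr}$, and the coincidence of $\syn$ with $=_{cr}$ on $\BF_{cr}$ when $|A|\ge 2$.

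Concretely, starting from closed terms $P$ and $Q$ with $P=_{cr}Q$, I would first invoke the normal-form lemma just proven for this section to obtain $P',Q'\in\BF_{cr}$ with $\text{CP}_{cr}\vdash P=P'$ and $\text{CP}_{cr}\vdash Q=Q'$. Applying soundness for $\text{CP}_{cr}$ (the theorem just above the completeness subsection) to these two derivations gives $P=_{cr}P'$ and $Q=_{cr}Q'$. Combining with the hypothesis and using that $=_{cr}$ is an equivalence relation, this yields $P'=_{cr}Q'$.

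At this point both $P'$ and $Q'$ lie in $\BF_{cr}$, so Lemma \ref{cr completeness lemma} (which needs $|A|\ge 2$, exactly the hypothesis of the theorem) delivers $P'\syn Q'$. Syntactic equality trivially implies provable equality, so $\text{CP}_{cr}\vdash P'=Q'$, and then chaining the two normal-form derivations with transitivity of $=$ in $\text{CP}_{cr}$ yields $\text{CP}_{cr}\vdash P=Q$, as required.

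There is no genuine obstacle left: every substantive step has already been proven in this section. The only thing to be careful about is making the role of the hypothesis $|A|\ge 2$ explicit, since it enters only through Lemma \ref{cr completeness lemma}; without it, Corollary \ref{cor one} shows that $=_{cr}$ collapses to $=_{st}$ and the argument at the basic-form level would have to be replaced by the one for the static variety. Apart from that caveat the proof is a direct three-line chase through normal form, soundness, basic-form completeness, and back.
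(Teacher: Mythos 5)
Your proposal is correct and follows exactly the route the paper intends: the paper states this theorem without a written-out proof, remarking only that it follows from the preceding lemmas by the same argument as Theorem \ref{completeness thm} for the free variety, which is precisely the chain (normal form in $\BF_{cr}$, soundness, Lemma \ref{cr completeness lemma}, transitivity) that you spell out. Your remark on where $|A|\ge 2$ enters is also consistent with the paper's discussion via Corollary \ref{cor one}.
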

By definition there is at least on atomic proposition. In the section on repetition-proof valuation we proved Corollary \ref{cor one}. Clearly, this corollary also applies for this variety because $y_a(H)=y_a(\dd a(H))$ is also true in this variety. Hence, if $|A|=1$, contractive congruence $=_{cr}$ coincides with static congruence $=_{st}$.

\section{Static valuations}
Static valuations correspond with classical propositional logic. As such the value of atomic propositions does not depend on other atomic propositions.

$st$-Congruence is axiomatized by CP and the following two axioms, due to Hoare in \cite{hoare}:
\begin{align*}
(\CPstat)\qquad (x\lef y\rig z)\lef u\rig v &=
(x\lef u\rig v)\lef y\rig(z\lef u\rig v)\\
(\CPcontr)\qquad (x\lef y\rig z)\lef y\rig u &= x\lef y\rig u
\end{align*}
The first axiom \CPstat{} tells us that the value of atomic propositions remain the same despite their relative position in the term. The second axiom \CPcontr{} is a generalization of the axioms for contractive valuations in the previous section i.e., this axiom allows contraction of not only atomic propositions but also for terms in general. We call the axiomatization of $st$-congruence $\text{CP}_{st}$.

The symmetric versions of the aforementioned axioms, listed below, follow from $\text{CP}_{st}$.
\begin{align*}
(\CPstat')\qquad x\lef y\rig(z\lef u\rig v) &=
(x\lef y\rig z)\lef u\rig(x\lef y\rig v)\\
(\CPcontr')\qquad x\lef y\rig(z\lef y\rig u) &=
x\lef y\rig u
\end{align*}
The key in deriving $\CPstat'$ and $\CPcontr'$ is the equality $y\lef x\rig z=z\lef(F\lef x\rig T)\rig y$ which we proved in the free reactive valuation section. For example,
\begin{align*}
x\lef y\rig(z\lef u\rig v)
&= x\lef(F\lef(F\lef y\rig T)\rig T)\rig(z\lef u\rig v)\\
&= (z\lef u\rig v)\lef(F\lef y\rig T)\rig x\\
&= (z\lef(F\lef y\rig T)\rig x)\lef u\rig (v\lef(F\lef y\rig T)\rig x)\\
&= (x\lef y\rig z)\lef u\rig(x\lef y\rig v)
\end{align*}
One can prove $\CPcontr'$ using the same technique.

Looking at the axiomatization it might not be immediately clear that this variety corresponds to classical propositional logic. One of the major differences between classical propositional logic and the varieties we have studied in the previous sections is the fact that the values of atomic propositions in a given term do not change depending on where they occur in the term. The following equality illustrates that the variety with static valuations also has this property.
\begin{align*}
x
&= (x\lef y\rig z)\lef F\rig x\\
&= (x\lef F\rig x)\lef y\rig (z\lef F\rig x)\\
&= x\lef y\rig x\\
&= y\circ x
\end{align*}
So appending an arbitrary term before $x$ will not change the value of $x$. In the next subsection we will prove its semantical counterpart.

In the following subsections we show soundness and completeness for the static valuations, and examine the relation between static valuations and boolean algebras.

\subsection{Soundness}
In contrast to the previous sections we cannot immediately start proving soundness but first need the following lemma which will not only be useful for proving soundness for this variety but also provides some additional insight to the correspondence between this logic and classical logic. It is worth noting that although this lemma can be viewed as a generalization of the way this variety is defined i.e., $y_a(\dd bx)=y_a(x)$, it in fact follows from the definition.
\begin{lem}\label{static gen lemma}
For all $P$, $Q$ and for all $\mathbb{A}\in st$, $H\in\mathbb{A}$,
\[
P/\dd Q(H) = P/H
\]
\end{lem}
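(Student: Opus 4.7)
The plan is to establish $P/\dd Q(H) = P/H$ by structural induction on $P$, but the argument in the atomic case depends on a preliminary observation that is useful in its own right: in the static variety, \emph{every} generalised derivative preserves yields. Concretely, I would first prove the sublemma
\begin{align*}
y_a(\dd R(H)) = y_a(H) \qquad \text{for all } a \in A \text{ and all terms } R
\end{align*}
by structural induction on $R$. The cases $R \syn T$ and $R \syn F$ are immediate since $\dd T(H) = \dd F(H) = H$. The case $R \syn b$ with $b \in A$ is exactly the defining equation $y_a(\dd b(x)) = y_a(x)$ of the variety. For $R \syn R_1 \lef R_2 \rig R_3$, unfolding the definition of $\dd R(H)$ gives either $\dd{R_1}(\dd{R_2}(H))$ or $\dd{R_3}(\dd{R_2}(H))$ depending on $R_2/H$; applying the inductive hypothesis twice (once to $R_2$ and once to $R_1$ or $R_3$) collapses both to $y_a(H)$.

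With this sublemma in hand, the main lemma goes by structural induction on $P$. The cases $P \syn T$ and $P \syn F$ are trivial, since these terms evaluate to themselves under every valuation. The atomic case $P \syn a$ is exactly the sublemma: $a/\dd Q(H) = y_a(\dd Q(H)) = y_a(H) = a/H$. For the compound case $P \syn P_1 \lef P_2 \rig P_3$, I would expand $P/\dd Q(H)$ using the semantic clause for conditional composition. The inductive hypothesis applied to $P_2$ yields $P_2/\dd Q(H) = P_2/H$, so the same branch is selected on both sides of the desired equation. The remaining task is to show that $P_1/\dd{P_2}(\dd Q(H)) = P_1/\dd{P_2}(H)$ (and analogously for $P_3$); invoking the inductive hypothesis on $P_1$ twice—first with derivative argument $P_2$ and valuation $\dd Q(H)$, then with derivative argument $Q$ and valuation $H$—rewrites both sides as $P_1/H$, finishing the case.

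The main subtlety is this compound case: the inner valuations $\dd{P_2}(\dd Q(H))$ and $\dd{P_2}(H)$ need not be literally equal, so their interchangeability has to be derived rather than taken for granted. What makes the induction go through is that the statement is universally quantified over $Q$ and $H$, so the inductive hypothesis for $P_1$ and $P_3$ is strong enough to be reused with varying derivative arguments; the sublemma is then needed only to supply the atomic base case of the main induction.
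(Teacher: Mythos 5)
Your proposal is correct and follows essentially the same route as the paper: structural induction on $P$, with the atomic case handled by an inner structural induction on the derivative argument and the compound case closed by re-invoking the (universally quantified over $Q$ and $H$) hypothesis on $P_1$, $P_2$, $P_3$ with varying derivative arguments. The only cosmetic difference is that you factor the atomic case out as a standalone sublemma $y_a(\dd R(H))=y_a(H)$, whereas the paper runs that same induction on $Q$ nested inside the case $P\syn a$.
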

\begin{proof}
Proof by structural induction on $P$. If $P\syn T$ or $P\syn F$ then $P/\dd Q(H) = P/H$ is trivially true. 

Suppose $P\syn a$ for $a\in A$. Then proceed by induction on $Q$. If $Q\syn T$ or $Q\syn F$ then $a/\dd Q(H) = a/H$ is trivially true. If $Q\syn b$ for $b\in A$. Then
\begin{align*}
P/H &= y_a(H)\\
&= y_a(\dd b(H))\\
&= P/\dd Q(H)
\end{align*}
Suppose $Q\syn Q_1\lef Q_2\rig Q_3$. Then
\begin{align*}
P/\dd Q(H)
&=a/\dd{Q_1\lef Q_2\rig Q_3}(H)\\
&=\begin{cases}
a/\dd {Q_1}(\dd {Q_2}(H)) & \text{if $Q_2/H=T$}\\
a/\dd {Q_3}(\dd {Q_2}(H)) & \text{if $Q_2/H=F$}\\
\end{cases}\\
&=_{IH}\begin{cases}
a/\dd {Q_2}(H) & \text{if $Q_2/H=T$}\\
a/\dd {Q_2}(H) & \text{if $Q_2/H=F$}\\
\end{cases}\\
&=a/\dd Q_2(H)\\
&=_{IH}a/H
\end{align*}
This concludes the case for $P\syn a$.

Suppose $P\syn P_1\lef P_2\rig P_3$. Then
\begin{align*}
P/\dd Q(H)
&=(P_1\lef P_2\rig P_3)/\dd Q(H)\\
&=\begin{cases}
P_1/\dd {P_2}(\dd Q(H)) & \text{if $P_2/\dd Q(H)=T$}\\
P_3/\dd {P_2}(\dd Q(H)) & \text{if $P_2/\dd Q(H)=F$}\\
\end{cases}\\
&=_{IH}\begin{cases}
P_1/\dd Q(H) & \text{if $P_2/H=T$}\\
P_3/\dd Q(H) & \text{if $P_2/H=F$}\\
\end{cases}\\
&=_{IH}\begin{cases}
P_1/H & \text{if $P_2/H=T$}\\
P_3/H & \text{if $P_2/H=F$}\\
\end{cases}\\
&=(P_1\lef P_2\rig P_3)/H
\end{align*}
\end{proof}
The previous lemma shows that the valuation of terms is independent of the context in which they appear. It is directly related to the equality $x=y\circ x$ which we derived in the introduction to this section, in that it similarly illustrates that the value of the atomic propositions is not dependent on what other atomic propositions might have occurred during the valuation of a term, and consequently it cannot change during the valuation.

In the previously observed varieties there is a clear distinction between equivalence and congruence. This difference was proven by using the example where $a\circ T\equiv_KT$ but $a\circ T\ne_KT$ for $K\in\{fr,rp,cr\}$ because $a\circ T\circ b\not\equiv_Kb$. However, as the previous lemma shows we cannot apply this example for the static valuations. In fact the following lemma shows that in this variety equivalence and congruence coincide.
\begin{lem}\label{equiv congr lem}
For closed terms $P$ and $Q$,
\[
P\equiv_{st}Q\qquad\Longleftrightarrow\qquad P=_{st}Q
\]
\end{lem}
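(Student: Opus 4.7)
The right-to-left direction is immediate: by definition $=_{st}$ is contained in $\equiv_{st}$. So the entire content is in showing that $P\equiv_{st}Q$ implies $P=_{st}Q$. The plan is to establish this by proving that $\equiv_{st}$ is itself already a congruence; since $=_{st}$ is the largest congruence contained in $\equiv_{st}$, the two relations must then coincide.

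To show $\equiv_{st}$ is a congruence, I would verify the three substitution cases used in the proof of Proposition \ref{equiv congr prop}: given $P\equiv_{st}Q$ and arbitrary closed terms $S,R$, show that
\[
P\lef S\rig R\equiv_{st}Q\lef S\rig R,\qquad S\lef R\rig P\equiv_{st}S\lef R\rig Q,\qquad S\lef P\rig R\equiv_{st}S\lef Q\rig R.
\]
The first and third cases go through exactly as in the general proof of Proposition \ref{equiv congr prop}: $\dd S(H)$ and $\dd R(H)$ do not depend on $P$ or $Q$, so $P\equiv_{st}Q$ suffices to match the two branches of the case-split. These hold in every variety and require no special structure.

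The delicate case is the second: here $(S\lef P\rig R)/H$ branches on $P/H$ and evaluates $S$ and $R$ at $\dd P(H)$, while $(S\lef Q\rig R)/H$ branches on $Q/H$ and evaluates at $\dd Q(H)$. In the general setting one would need $\dd P(H)=\dd Q(H)$ as valuations (which is why Proposition \ref{equiv congr prop} carries this as a hypothesis), but we cannot assume this here. The key observation, and the main work of the proof, is that Lemma \ref{static gen lemma} eliminates exactly this obstacle: it gives $S/\dd P(H)=S/H=S/\dd Q(H)$ and similarly for $R$. Combined with $P/H=Q/H$ from the assumption $P\equiv_{st}Q$, both case-splits collapse to the same expression in terms of $S/H$, $R/H$, and the common truth value $P/H=Q/H$.

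So the hard part is conceptual rather than technical: recognising that in the static variety the generalised derivative $\dd P$ acts trivially on yields, so that Lemma \ref{static gen lemma} lets us sidestep the usual requirement on derivatives and collapse equivalence into congruence. Once Lemma \ref{static gen lemma} is invoked the three congruence cases are a short case-split, and the equivalence $P\equiv_{st}Q \Longleftrightarrow P=_{st}Q$ follows immediately from the definition of $=_{st}$ as the largest congruence contained in $\equiv_{st}$.
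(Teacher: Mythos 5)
Your proposal is correct and follows essentially the same route as the paper: both reduce the statement to showing that $\equiv_{st}$ is itself a congruence (so that it must coincide with $=_{st}$, the largest congruence it contains), and both identify Lemma \ref{static gen lemma} as the key fact that collapses $S/\dd P(H)$ to $S/H$ in the antecedent-substitution case. The only cosmetic difference is that the paper verifies the congruence property for all three argument positions simultaneously in a single chain of equalities, whereas you treat the three substitution cases one at a time.
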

\begin{proof}
Congruence is by definition an equivalence. So it will suffice to show that the equivalence $\equiv_{st}$ also has the congruence property. Suppose $P_1\equiv_{st}Q_1$, $P_2\equiv_{st}Q_2$, and $P_3\equiv_{st}Q_3$. Then
\begin{align*}
(P_1\lef P_2\rig P_3)/H
&=\begin{cases}
P_1/\dd{P_2}(H) & \text{if $P_2/H=T$}\\
P_3/\dd{P_2}(H) & \text{if $P_2/H=F$}
\end{cases}\\
&=_*\begin{cases}
P_1/H & \text{if $P_2/H=T$}\\
P_3/H & \text{if $P_2/H=F$}
\end{cases}\\
&=\begin{cases}
Q_1/H & \text{if $Q_2/H=T$}\\
Q_3/H & \text{if $Q_2/H=F$}
\end{cases}\\
&=_*\begin{cases}
Q_1/\dd{Q_2}(H) & \text{if $Q_2/H=T$}\\
Q_3/\dd{Q_2}(H) & \text{if $Q_2/H=F$}
\end{cases}\\
&=(Q_1\lef Q_2\rig Q_3)/H
\end{align*}
At both the *-marked steps in the derivation we apply Lemma \ref{static gen lemma}.
\end{proof}
Note that the application of Lemma \ref{static gen lemma} in the previous proof is necessary because otherwise $P_1/\dd{P_2}(H)$ and $P_3/\dd {P_2}(H)$ can only be replaced by $Q_1/\dd {P_2}(H)$ and $Q_3/\dd{P_2}(H)$ at which point the expression cannot be further reduced. So this line of reasoning will not work for the other varieties where we do not have this lemma. 

Now that we have proven these lemmas, soundness is relatively easy.
\begin{thm}
For closed terms $P$ and $Q$,
\[
\text{CP}_{st}\vdash P=Q\qquad\Longrightarrow\qquad P=_{st}Q
\]
\end{thm}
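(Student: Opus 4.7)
The plan is to reduce the proof to verifying soundness of the two genuinely new axioms $\CPstat$ and $\CPcontr$. The core axioms CP1--CP4 are already known to be sound in the free variety (Theorem \ref{soundness cp}), and since every $st$-algebra is an $fr$-algebra, soundness of those axioms transfers to $=_{st}$ for free. What remains is to check $\CPstat$ and $\CPcontr$.

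A very convenient simplification is available. By Lemma \ref{equiv congr lem}, $\equiv_{st}$ and $=_{st}$ coincide on closed terms. So, unlike in the previous sections, I will not need to appeal to Proposition \ref{equiv congr prop} and separately verify the derivative equality $\dd{P}(H)=\dd{Q}(H)$; it suffices to show that for every static algebra $\mathbb{A}$ and valuation $H\in\mathbb{A}$, the two sides of each axiom evaluate identically on $H$. The workhorse in both computations will be Lemma \ref{static gen lemma}, which lets me collapse any nested expression $S/\dd{R}(H)$ down to $S/H$ regardless of how deep the chain of derivatives runs.

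For $\CPstat$, I would expand $((P\lef Q\rig R)\lef U\rig V)/H$ by cases on $U/H$ and, within the $U/H=T$ branch, on $Q/\dd U(H)$; applying Lemma \ref{static gen lemma} to rewrite $Q/\dd U(H)$ as $Q/H$, and similarly to flatten the inner $\dd Q(\dd U(H))$ chains, produces a four-way table in $(U/H, Q/H)$ with outputs $P/H$, $R/H$, $V/H$, $V/H$. Expanding the RHS $((P\lef U\rig V)\lef Q\rig(R\lef U\rig V))/H$ in exactly the same manner (cases on $Q/H$, then on $U/\dd Q(H)=U/H$) yields precisely the same four-way table, matching case by case.

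For $\CPcontr$, the computation is shorter: the LHS $((P\lef Q\rig R)\lef Q\rig U)/H$ splits on $Q/H$ into $(P\lef Q\rig R)/\dd Q(H)$ or $U/\dd Q(H)$; Lemma \ref{static gen lemma} rewrites the former as $(P\lef Q\rig R)/H$, and this, again by the same lemma, reduces under the assumption $Q/H=T$ to $P/H$, while the $F$-branch collapses to $U/H$. The RHS $(P\lef Q\rig U)/H$ splits directly into $P/\dd Q(H)=P/H$ or $U/\dd Q(H)=U/H$, so the two sides agree. I do not expect a real obstacle here; the essential work is already absorbed into Lemma \ref{static gen lemma} (which encodes the defining equation $y_a(\dd b(x))=y_a(x)$ of the static variety) and Lemma \ref{equiv congr lem} (which obviates checking derivatives), leaving only a mechanical case analysis for each of the two new axioms.
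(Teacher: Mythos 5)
Your proposal is correct and follows essentially the same route as the paper: reduce to the two new axioms $\CPstat$ and $\CPcontr$, invoke Lemma \ref{equiv congr lem} to replace $=_{st}$ by $\equiv_{st}$ (thereby avoiding the separate derivative check of Proposition \ref{equiv congr prop}), and carry out the case analysis using Lemma \ref{static gen lemma} to collapse all $\dd{}$-chains. The four-way table you describe for $\CPstat$ and the short computation for $\CPcontr$ are exactly the derivations in the paper's proof.
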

\begin{proof}
By Lemma \ref{equiv congr lem}, we only need to show that
\[
\text{CP}_{st}\vdash P=Q\qquad\Longrightarrow\qquad P\equiv_{st}Q
\]
It suffices to show soundness for only $\CPstat$ and $\CPcontr$. Take note of the frequent use of Lemma \ref{static gen lemma} in the derivations below.
\begin{align*}
((P\lef Q\rig R)\lef S\rig V)/H
&=\begin{cases}
(P\lef Q\rig R)/\dd S(H) & \text{if $S/H=T$}\\
V/\dd S(H) & \text{if $S/H=F$}\\
\end{cases}\\
&=\begin{cases}
P/\dd Q(\dd S(H)) & \text{if $S/H=T$ and $Q/\dd S(H)=T$}\\
R/\dd Q(\dd S(H)) & \text{if $S/H=T$ and $Q/\dd S(H)=F$}\\
V/\dd S(H) & \text{if $S/H=F$}\\
\end{cases}\\
&=\begin{cases}
P/H & \text{if $S/H=T$ and $Q/H=T$}\\
R/H & \text{if $S/H=T$ and $Q/H=F$}\\
V/H & \text{if $S/H=F$}\\
\end{cases}\\
&=\begin{cases}
P/H & \text{if $Q/H=T$ and $S/H=T$}\\
V/H & \text{if $Q/H=T$ and $S/H=F$}\\
R/H & \text{if $Q/H=F$ and $S/H=T$}\\
V/H & \text{if $Q/H=F$ and $S/H=F$}\\
\end{cases}\\
&=\begin{cases}
P/\dd S(\dd Q(H)) & \text{if $Q/H=T$ and $S/\dd Q(H)=T$}\\
V/\dd S(\dd Q(H)) & \text{if $Q/H=T$ and $S/\dd Q(H)=F$}\\
R/\dd S(\dd Q(H)) & \text{if $Q/H=F$ and $S/\dd Q(H)=T$}\\
V/\dd S(\dd Q(H)) & \text{if $Q/H=F$ and $S/\dd Q(H)=F$}\\
\end{cases}\\
&=\begin{cases}
(P\lef S\rig V)/\dd Q(H) & \text{if $Q/H=T$}\\
(R\lef S\rig V)/\dd S(H) & \text{if $Q/H=F$}\\
\end{cases}\\
&=((P\lef S\rig V)\lef Q\rig(R\lef S\rig V))/H
\end{align*}
\begin{align*}
((P\lef Q\rig R)\lef Q\rig S)/H
&=\begin{cases}
(P\lef Q\rig R)/\dd Q(H) & \text{if $Q/H=T$}\\
S/\dd Q(H) & \text{if $Q/H=F$}\\
\end{cases}\\
&=\begin{cases}
P/\dd Q(\dd Q(H)) & \text{if $Q/H=T$ and $Q/\dd Q(H)=T$}\\
R/\dd Q(\dd Q(H)) & \text{if $Q/H=T$ and $Q/\dd Q(H)=F$}\\
S/\dd Q(H) & \text{if $Q/H=F$}\\
\end{cases}\\
&=\begin{cases}
P/\dd Q(\dd Q(H)) & \text{if $Q/H=T$ and $Q/H=T$}\\
R/\dd Q(\dd Q(H)) & \text{if $Q/H=T$ and $Q/H=F$}\\
S/\dd Q(H) & \text{if $Q/H=F$}\\
\end{cases}\\
&=\begin{cases}
P/\dd Q(\dd Q(H)) & \text{if $Q/H=T$}\\
S/\dd Q(H) & \text{if $Q/H=F$}\\
\end{cases}\\
&=\begin{cases}
P/\dd Q(H) & \text{if $Q/H=T$}\\
S/\dd Q(H) & \text{if $Q/H=F$}\\
\end{cases}\\
&= (P\lef Q\rig S)/H
\end{align*}
Thus CPstat and CPcontr are sound. Soundness of the rest of the axioms follows by the soundness of the variety with free reactive valuations.
\end{proof}

\subsection{Completeness}
Proving completeness follows the same strategy as we have seen before i.e., we define basic forms, and prove completeness for the basic forms. However, the individual lemmas will differ significantly from what we have seen up to this point because the construction of the static basic forms is more complicated.  

In order to define the static basic forms, we first need to enumerate the members of $A$:
\[
a_1,a_2,\ldots,a_n
\]
Recall that a basic form i.e., a member of \BF, corresponds to a labeled binary tree. A static basic form is a member of \BF, and is a full binary tree with $n+1$ levels. At level $i$ only atomic proposition $a_i$ occurs, and at level $n+1$ at each leaf there is either a $T$ or $F$. The resulting tree is pictured below:
\[
\xymatrix{
&&&&a_1\ar[dll]\ar[drr]\\
&&a_2\ar@{.>}[dl]\ar@{.>}[dr]&&&&a_2\ar@{.>}[dl]\ar@{.>}[dr]\\
&a_n\ar[dl]\ar[dr] &&\ldots&&\ldots&&\ldots\\
T/F && T/F\\
}
\]
So an atomic proposition $a_i$ occurs $2^{i-1}$ times and there are $2^n$ leaves. The set of static basic forms is called $\BF_{st}$. 

The following two lemmas are needed to prove that there exists a static basic form for each closed term.
\begin{lem}\label{red lemma}
For $P,R\in\BF_{st}$ and $a_i\in A$, there exists a term $Q\in\BF_{st}$ such that
\[
\text{CP}_{st}\vdash P\lef a_i\rig R=Q
\]
\end{lem}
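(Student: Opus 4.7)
The plan is to strengthen the statement with a level parameter and proceed by downward induction. For $1 \le k \le n+1$, let $\BF_{st}^{k}$ denote the set of full binary trees in which $a_j$ occurs at level $j$ for $k \le j \le n$ and each leaf is $T$ or $F$, with the convention $\BF_{st}^{n+1} = \{T,F\}$; so $\BF_{st}^{1} = \BF_{st}$, and every non-leaf element of $\BF_{st}^{k}$ has the form $P_1 \lef a_k \rig P_2$ with $P_1, P_2 \in \BF_{st}^{k+1}$. The strengthened claim is that for every $k$, every $P, R \in \BF_{st}^{k}$, and every $i$ with $k \le i \le n$, there is $Q \in \BF_{st}^{k}$ with $\text{CP}_{st} \vdash P \lef a_i \rig R = Q$. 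The lemma itself is the $k = 1$ instance, and the induction proceeds downward from $k = n$ toward $k = 1$, split at each level according to whether $i = k$ or $i > k$.

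For the case $i = k$ (which also serves as the base case, since it uses no inductive hypothesis), write $P = P_1 \lef a_k \rig P_2$ and $R = R_1 \lef a_k \rig R_2$. A single application of $\CPcontr$ to the left-nested $a_k$-conditional gives $(P_1 \lef a_k \rig P_2) \lef a_k \rig R = P_1 \lef a_k \rig R$, and then a single application of $\CPcontr'$ to the right-nested $a_k$-conditional gives $P_1 \lef a_k \rig (R_1 \lef a_k \rig R_2) = P_1 \lef a_k \rig R_2$. Since $P_1, R_2 \in \BF_{st}^{k+1}$, the resulting term $P_1 \lef a_k \rig R_2$ lies in $\BF_{st}^{k}$, so $Q := P_1 \lef a_k \rig R_2$ works.

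For the case $i > k$, I again write $P = P_1 \lef a_k \rig P_2$ and $R = R_1 \lef a_k \rig R_2$ and normalize purely with the distribution/contraction axioms. First, $\CPstat$ pulls $a_k$ out of the left argument, turning $P \lef a_i \rig R$ into $(P_1 \lef a_i \rig R) \lef a_k \rig (P_2 \lef a_i \rig R)$. Next, two applications of $\CPstat'$ pull $a_k$ out of the occurrences of $R$, producing a tree with four leaves of the form $P_j \lef a_i \rig R_l$ joined by three $a_k$-conditionals. One application each of $\CPcontr$ and $\CPcontr'$ collapses the doubled $a_k$-branchings, yielding
\[
P \lef a_i \rig R = (P_1 \lef a_i \rig R_1) \lef a_k \rig (P_2 \lef a_i \rig R_2).
\]
Since $P_1, P_2, R_1, R_2 \in \BF_{st}^{k+1}$ and $i \ge k+1$, the inductive hypothesis applies to $P_1 \lef a_i \rig R_1$ and $P_2 \lef a_i \rig R_2$, producing $Q_1, Q_2 \in \BF_{st}^{k+1}$, and $Q := Q_1 \lef a_k \rig Q_2 \in \BF_{st}^{k}$ is the required term. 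The main obstacle is the bookkeeping in this second case: one must verify that distributing $a_k$ out of both arguments and then contracting the resulting duplicated $a_k$-branchings at the outer level really does leave exactly the four-way join shown above, so that the recursive calls are handed inputs of the correct shape $\BF_{st}^{k+1}$ paired by the same $a_i$. Once this normalization pattern is in place, the induction closes immediately.
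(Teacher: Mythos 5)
Your proof is correct and follows essentially the same route as the paper's: induction on the number of atomic propositions at or below the current level, using $\CPstat$/$\CPstat'$ to distribute $a_i$ over the outermost $a_k$-branchings of $P$ and $R$ and $\CPcontr$/$\CPcontr'$ to collapse the duplicated $a_k$-conditionals. The only (harmless) differences are bookkeeping: you index by level $k$ instead of re-enumerating $A\setminus\{a_1\}$, and you contract down to two subproblems before invoking the inductive hypothesis rather than applying it to all four terms $P_j\lef a_i\rig R_l$ first.
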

\begin{proof}
We proceed by induction on the number of atomic propositions. Note that since $P,R\in\BF_{st}$ it follows that $P\syn P_1\lef a_1\rig P_2$ and $R\syn R_1\lef a_1\rig R_2$ with $P_1,P_2,R_1,R_2\in\BF_{st}$.

If $|A|=1$ then $a_i=a_1$, and
\begin{align*}
P\lef a_i\rig R
&=(P_1\lef a_1\rig P_2)\lef a_1\rig(R_1\lef a_1\rig R_2)\\
&=_*P_1\lef a_1\rig R_2\\
\end{align*}
(*) is obtained by applying \CPcontr. Since $P_1\lef a_1\rig R_2$ is a static basic form, we have shown the existence of $Q$.

Suppose $|A|=n+1$ and $i\ge 2$ (if $i=1$ apply $\CPcontr$ and $\CPcontr'$). Then
\begin{align*}
P\lef a_i\rig R
&=(P_1\lef a_1\rig P_2)\lef a_i\rig(R_1\lef a_1\rig R_2)\\
&=(P_1\lef a_i\rig(R_1\lef a_1\rig R_2))\lef a_1\rig(P_2\lef a_i\rig(R_1\lef a_1\rig R_2))\\
\end{align*}
Next take the left consequent,
\begin{align*}
P_1\lef a_i\rig(R_1\lef a_1\rig R_2)
&=(P_1\lef a_i\rig R_1)\lef a_1\rig(P_1\lef a_i\rig R_2)\\
&=_{IH}Q_1\lef a_1\rig Q_2
\end{align*}
where $Q_1=P_1\lef a_i\rig R_1$, $Q_2=P_1\lef a_i\rig R$, and both $Q_1$ and $Q_2$ are static basic forms given the set of atomic propositions $A\setminus\{a_1\}$ (but otherwise the same enumeration). Note that since $|A\setminus\{a_1\}|=n$, we could use the I.H. in the above derivation. 

We can apply the same argument for the right consequent $P_2\lef a_i\rig(R_1\lef a_1\rig R_2)$ and obtain $Q_2$ and $Q_3$ such that they are static basic forms for this set, and $Q_3=P_2\lef a_i\rig R_1$ and $Q_4=P_2\lef a_i\rig R_2$. Consequently,
\begin{align*}
P\lef a_i\rig R
&=(Q_1\lef a_1\rig Q_2)\lef a_1\rig(Q_3\lef a_1\rig Q_4)\\
&=Q_1\lef a_1\rig Q_4
\end{align*}
Clearly, the term $Q_1\lef a_1\rig Q_4$ is a static basic form for the set $A$.
\end{proof}

The rest of this section resembles the previous sections. So we start by showing that there is a static basic form for each closed term.
\begin{thm}
For each term closed term $P$ there exists a $P'\in\BF_{st}$ such
that 
\[
\text{CP}_{st}\vdash P=P'
\]
\end{thm}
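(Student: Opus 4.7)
The plan is to combine Lemma \ref{basic form theorem} with Lemma \ref{red lemma} via a structural induction. By Lemma \ref{basic form theorem} (which is stated over CP and therefore a fortiori holds over the stronger theory $\text{CP}_{st}$), it suffices to prove the statement for $P \in \BF$. I would then proceed by structural induction on the shape of such a basic form.

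The inductive step is the easy one. If $P \syn P_1 \lef a \rig P_2$ with $a \in A$ and $P_1, P_2 \in \BF$, then by the induction hypothesis there exist $P_1', P_2' \in \BF_{st}$ with $\text{CP}_{st} \vdash P_1 = P_1'$ and $\text{CP}_{st} \vdash P_2 = P_2'$. Congruence gives $\text{CP}_{st} \vdash P = P_1' \lef a \rig P_2'$, and Lemma \ref{red lemma} supplies a $Q \in \BF_{st}$ with $\text{CP}_{st} \vdash P_1' \lef a \rig P_2' = Q$; transitivity of provable equality then yields the desired $P' \syn Q$.

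The base cases $P \syn T$ and $P \syn F$ are where the real work lies, because neither $T$ nor $F$ is itself a static basic form (an element of $\BF_{st}$ has fixed depth $n+1$ with $a_i$ occurring exactly at level $i$). To handle $P \syn T$, I would construct the canonical element of $\BF_{st}$ whose $2^n$ leaves are all $T$, and prove provable equality to $T$ by exploiting the identity
\[
x \;=\; x \lef y \rig x,
\]
derived in $\text{CP}_{st}$ at the start of this section. Concretely, I would run an auxiliary induction on $k$ from $n$ down to $1$: at stage $k$, I replace each occurrence of a leaf $T$ sitting at depth $k-1$ by $T \lef a_k \rig T$, using the above identity with $x \syn T$ and $y \syn a_k$, and justifying the substitution by congruence. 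After $n$ such expansions the outermost connective is labelled $a_1$, level $i$ is populated exclusively by $a_i$'s, and every leaf is $T$, so the term lies in $\BF_{st}$. The case $P \syn F$ is entirely symmetric, starting from $F$ and using $F = F \lef a_k \rig F$.

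The hard part is therefore really just the base case bookkeeping: one must verify that the iterated expansion indeed produces a member of $\BF_{st}$ (rather than merely a term in $\BF$) and that the chain of equalities can be chained together using only $\text{CP}_{st}$ axioms plus congruence. Once that is in place, the inductive step is immediate from Lemma \ref{red lemma}, and the combination with Lemma \ref{basic form theorem} discharges the theorem for all closed terms.
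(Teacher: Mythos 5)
Your proposal is correct and follows essentially the same route as the paper: reduce to $\BF$ via Lemma \ref{basic form theorem}, do structural induction, and discharge the inductive step with Lemma \ref{red lemma}. The only difference is that you spell out the base case (iterated expansion of $T$ or $F$ into a full tree using $x = x\lef y\rig x$), which the paper dismisses with ``simply construct a static basic form where all the leaves are either $T$ or $F$''; your elaboration is a valid justification of that step, modulo running the expansion from the root downwards rather than ``from $n$ down to $1$''.
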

\begin{proof}
By structural induction on $P$. Since we already know that there exists a provably equal basic form for each closed term $P$ (as opposed to a static basic form), we can assume $P$ is a basic form. If $P\in\{T,F\}$ then simply construct a static basic forms where all the leaves are either $T$ or $F$.
Suppose $P\syn P_1\lef a\rig P_2$. By I.H. there exist $P_1',P_2'\in\BF_{st}$ such that $P_1=P_1'$ and $P_2=P_2'$. Then by the previous lemma we know $P'$ exists.
\end{proof}
Next we show that for static basic forms the congruence $=_{st}$ coincides with syntactic equality.
\begin{lem}
For static basic forms $P$ and $Q$,
\[
P=_{st}Q\qquad\Longleftrightarrow\qquad P\syn Q
\]
\end{lem}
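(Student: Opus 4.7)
The direction $P \syn Q \Longrightarrow P =_{st} Q$ is immediate, since syntactic equality trivially implies provable equality and hence $st$-congruence by soundness. For the other direction I would argue the contrapositive: assume $P \nsyn Q$ and exhibit a static RVA and valuation on which they disagree.

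The key observation is that every static basic form has the same underlying shape, namely the full binary tree with $n+1$ levels whose internal nodes at level $i$ are all labelled $a_i$. So if $P$ and $Q$ are distinct static basic forms they agree on this tree structure and must differ at (at least) one leaf. Such a leaf is specified by a sequence $(b_1, \ldots, b_n) \in \{T,F\}^n$, where $b_i$ records whether, at level $i$, one descends into the left subtree ($b_i = T$) or the right subtree ($b_i = F$). By choice of the differing leaf, the leaf of $P$ reached by following $(b_1, \ldots, b_n)$ is not the same constant as that of $Q$.

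I would then construct an explicit static RVA $\mathbb{A}$ that realises any desired profile of truth values on the atomic propositions. Take $RV$ to be the set of functions $A \to B$ (together with $T_{RV}$ and $F_{RV}$ as the constant $T$- and $F$-functions as required), and set $y_a(v) = v(a)$ and $\dd a(v) = v$ for every $a \in A$. The axioms of an RVA and the static identity $y_a(\dd b(x)) = y_a(x)$ hold trivially, so $\mathbb{A}\in st$. Choose the valuation $H \in \mathbb{A}$ with $H(a_i) = b_i$ for each $i$.

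It then remains to verify that $P/H$ is exactly the leaf of $P$ labelled by $(b_1,\ldots,b_n)$, and similarly for $Q$. This follows by induction on the level of the basic form, using the semantic clause for conditional composition together with Lemma~\ref{static gen lemma}: after any sequence of derivatives the yield of $a_i$ is still $b_i$, so the evaluation of $P$ always descends according to the path $(b_1, \ldots, b_n)$ and terminates at the chosen leaf. Since the leaves of $P$ and $Q$ at this path are the distinct constants $T$ and $F$, we obtain $P/H \neq Q/H$, hence $P \not\equiv_{st} Q$, and Lemma~\ref{equiv congr lem} upgrades this to $P \neq_{st} Q$. The only subtle point is the path-tracking induction, but since the tree structure of static basic forms is completely rigid and the static valuations behave like classical ones on the $a_i$, this is essentially a bookkeeping exercise rather than a genuine obstacle.
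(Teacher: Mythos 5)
Your proposal is correct and follows essentially the same route as the paper's own proof: argue the contrapositive, observe that two distinct static basic forms share the rigid full-binary-tree shape and so must differ at a leaf, and pick the static valuation determined by the path to that leaf to separate them semantically. The paper leaves the RVA construction and the path-tracking induction implicit ("it is then trivial to construct a static valuation $H$"), whereas you spell them out; this is just a more detailed rendering of the same argument.
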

\begin{proof}
Since one direction is trivial, it suffices to prove that $P=_{st}Q$ implies $P\syn Q$. Assume $P\nsyn Q$ for static basic forms $P$ and $Q$. By definition of static basic forms, it follows that there is at least one leaf that differs in value for $P$ and $Q$. For example the leftmost leaf for $P$ has value $T$ and the leftmost leaf for $Q$ has value $F$. It is then trivial to construct a static valuation $H$ such that $P/H\ne Q/H$. In the example we just mentioned this valuation would assign true to all atomic propositions. Since there is a valuation $H$ such that $P/H\ne Q/H$, it follows that $P\not\equiv_{st}Q$. Hence, $P\ne_{st} Q$.
\end{proof}
Using the same reasoning as in the previous sections we obtain completeness for all closed terms.
\begin{thm}
For closed terms $P$ and $Q$,
\[
P=_{st}Q\qquad\Longrightarrow\qquad \text{CP}_{st}\vdash P=Q
\]
\end{thm}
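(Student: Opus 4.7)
The plan is to mirror exactly the completeness arguments for the previous three varieties: reduce the problem from arbitrary closed terms to static basic forms, where syntactic equality and $st$-congruence already coincide. Concretely, given $P =_{st} Q$, I would first invoke the preceding theorem (existence of a static basic form for every closed term) to obtain $P', Q' \in \BF_{st}$ with $\text{CP}_{st} \vdash P = P'$ and $\text{CP}_{st} \vdash Q = Q'$.

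Next I would appeal to soundness of $\text{CP}_{st}$ (the soundness theorem already established in this section) to conclude $P =_{st} P'$ and $Q =_{st} Q'$. Combining with the hypothesis $P =_{st} Q$ and using that $=_{st}$ is an equivalence (in fact a congruence), I get $P' =_{st} Q'$. Now the key lemma just proved for static basic forms converts this semantic statement into a syntactic one: $P' \syn Q'$. From $P' \syn Q'$ it is immediate that $\text{CP}_{st} \vdash P' = Q'$ (reflexivity), and chaining the provable equalities $P = P' = Q' = Q$ yields $\text{CP}_{st} \vdash P = Q$.

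There is essentially no obstacle at this stage, since all the work has been pushed into the supporting lemmas. The real content lies in Lemma~\ref{red lemma} (which guarantees static basic forms exist and is handled by the induction on $|A|$ together with \CPstat{} and \CPcontr) and in the basic-form completeness lemma (which is settled by reading off the distinguishing valuation from a differing leaf of the associated binary tree). Once both are in hand, the theorem is a two-line bookkeeping argument, structurally identical to Theorem~\ref{completeness thm} for the free reactive case.
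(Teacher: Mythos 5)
Your proof is correct and follows exactly the route the paper intends: the paper gives no explicit proof of this theorem, stating only that it follows ``using the same reasoning as in the previous sections,'' which is precisely the reduction to static basic forms via soundness and the syntactic-equality lemma that you spell out. Nothing further is needed.
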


\chapter{$\omega$-Completeness}
In this chapter we discuss $\omega$-completeness of the different axiomatizations we encountered thus far. The following definition of $\omega$-completeness is taken from \cite{fokkink}.
\begin{defn}
An axiomatization $\mathcal{E}$ over a signature $\Sigma$ is $\omega$-complete if an equation $s=t$ with $s,t\in\mathbb{T}(\Sigma)$ can be derived from $\mathcal{E}$ if $\sigma(s)=\sigma(t)$ can be derived from $\mathcal{E}$ for all closed substitutions $\sigma$.
\end{defn}
The set $\mathbb{T}(\Sigma)$ is the set of all terms over signature $\Sigma$. $\omega$-Completeness is also know as \emph{inductive completeness} since we do not need an additional induction theorem to prove that $s=t$ can be derived if $\sigma(s)=\sigma(t)$ can be derived for all closed substitutions $\sigma$.

An example of an axiomatization that is not $\omega$-complete is the following axiomatization of the natural numbers with addition and multiplication, taken from \cite{omega_bergstra}:
\begin{align*}
x + 0 &= x\\
x + S(y) &= S(x+y)\\
x\cdot 0 &= 0\\
x\cdot S(y) &= x + (x\cdot y)
\end{align*}
In this axiomatization every closed instance of e.g., $x+y=y+x$ can be derived. However, the theorem itself cannot be derived from the above axioms.

See \cite{omega} for a more thorough introduction to $\omega$-completeness.

\section{$\omega$-Completeness of CP}

We begin with proving $\omega$-completeness for CP. First, however, it is necessary to distinguish between a few unique cases based on the number of elements in the set $A$ of atomic propositions. As it turns out CP is not $\omega$-complete for $|A|<2$.

If the set $A$ is empty, and there are no atomic propositions, it follows that every closed closed substitution $\sigma$ replaces the variables by terms build up from $T$, $F$ and $\_\lef \_\rig \_$. Using axioms CP1 and CP2, this is the same as replacing the variables by either $T$ or $F$. If we now consider the equation $x\circ T=T$, we see that for every closed substitution $\sigma$, $\sigma(x\circ T)=\sigma(T)$ follows from CP1 and CP2. However, $x\circ T=T$ cannot be derived from CP. If it could be derived then it could also be derived in the case where $A\ne\emptyset$ because the derivations are independent of the number of atomic propositions in $A$. So if this is the case then by soundness $a\circ T=_{fr}T$, which is clearly not true.

Similarly, when there is only one atomic proposition i.e., $A=\{a\}$, we take the equation $x\circ a\circ T=a\circ x\circ T$. For every closed substitution $\sigma$, $\sigma(x\circ a\circ T)=\sigma(a\circ x\circ T)$, where we can assume without loss of generality that $\sigma$ replaces variables either by $T$, $F$ or a sequence of $a$'s i.e., $a\circ a\circ\ldots\circ a$.  However, $CP\nvdash x\circ a\circ T=a\circ x\circ T$.

Therefore, in the remainder of the discussion of $\omega$-completeness for CP, we assume that $A$ has at least two atomic propositions, usually referred to as $a$ and $b$.

Similar to the sections where we showed completeness for the various varieties, we define a set of basic forms. However, this time the basic forms can also be open terms.
\begin{defn}\label{open basic form def}
Let the set of open basic forms $\mathbb{T}$ be the smallest set such that $T, F\in\mathbb{T}$, and if $s, t\in\mathbb{T}$ then $s\lef a\rig t\in\mathbb{T}$ for all $a\in A$ and $s\lef x\rig t\in\mathbb{T}$ for all $x\in V$.
\end{defn}
The set $V$ is the set of variables. The terminology ``open basic form'' is a bit misleading because an open basic form can possibly be a closed term. This definition differs from the one for the set of closed basic forms \BF{} in that terms of the form $s\lef x\rig t$ are also included. In addition, observe that if we substitute all the variables by atomic propositions in an arbitrary open basic form, the resulting term will be a member of \BF.

\begin{lem}\label{open basic form lem}
For all terms $t$ there exists an open basic form $t'$ such that $CP\vdash t=t'$.
\end{lem}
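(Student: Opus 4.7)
The plan is to mimic the proof of Lemma~\ref{basic form theorem}, extending it to cover variables. The overall strategy is structural induction on $t$, with an inner structural induction on the antecedent once we descend into a conditional composition, using axiom CP4 to distribute whenever the antecedent is not already a simple symbol (atomic proposition or variable).

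First I would handle the base cases: if $t \syn T$ or $t \syn F$, then $t$ is itself an open basic form; if $t \syn a$ for some $a \in A$, then $CP \vdash t = T \lef a \rig F$, which lies in $\mathbb{T}$ by Definition~\ref{open basic form def}; if $t \syn x$ for some $x \in V$, then analogously $CP \vdash t = T \lef x \rig F$, which is also an open basic form (this is the genuinely new case compared with Lemma~\ref{basic form theorem}).

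For the inductive step, suppose $t \syn t_1 \lef t_2 \rig t_3$. By the outer induction hypothesis, there exist open basic forms $t_1', t_2', t_3'$ with $CP \vdash t_i = t_i'$, and by congruence $CP \vdash t = t_1' \lef t_2' \rig t_3'$. I would then perform an inner structural induction on $t_2'$. If $t_2' \syn T$ or $t_2' \syn F$, then CP1 or CP2 reduces the expression to $t_1'$ or $t_3'$, which is already in $\mathbb{T}$. If $t_2' \syn t_{21}' \lef a \rig t_{22}'$ for some $a \in A$, or $t_2' \syn t_{21}' \lef x \rig t_{22}'$ for some $x \in V$, I would apply CP4 to obtain
\[
t_1' \lef (t_{21}' \lef a \rig t_{22}') \rig t_3' = (t_1' \lef t_{21}' \rig t_3') \lef a \rig (t_1' \lef t_{22}' \rig t_3'),
\]
and then invoke the inner IH on each of the two outer consequents (whose middle antecedents $t_{21}'$ and $t_{22}'$ are structurally smaller subterms of $t_2'$) to rewrite them as open basic forms $V$ and $W$. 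Since $a$ or $x$ is a permitted antecedent by Definition~\ref{open basic form def}, the result $V \lef a \rig W$ (or $V \lef x \rig W$) lies in $\mathbb{T}$.

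There is no real obstacle here; the argument is essentially a verbatim extension of Lemma~\ref{basic form theorem}, the only subtlety being that Definition~\ref{open basic form def} explicitly allows variables as antecedents, so the variable case and the atomic proposition case in the inner induction are handled uniformly by a single application of CP4. The fact that CP4 is a schema valid for arbitrary terms (not merely atomic propositions) is what makes the distribution step go through without further work.
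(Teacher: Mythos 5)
Your proof is correct and follows exactly the route the paper intends: the paper's own proof of Lemma~\ref{open basic form lem} is just the one-line remark that it proceeds ``by structural induction, very similar to the proof of Lemma~\ref{basic form theorem},'' and your write-up is precisely that extension, with the only new content being the variable base case $x = T \lef x \rig F$ and the variable-antecedent case in the inner induction, both handled as you describe. Nothing further is needed.
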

\begin{proof}
Proof by structural induction, very similar to the proof of Theorem \ref{basic form theorem}.
\end{proof}

The following lemma tells us that when dealing with open basic forms, it suffices to use closed substitutions that map variables to atomic propositions instead of arbitrary closed terms.

\begin{lem}\label{substitution lemma}
For open basic forms $s$ and $t$, if for all closed substitutions $\sigma:V\to A$,\footnote{Note that with the notation $\sigma:V\to A$ we do not imply that this substitution only works on variables but that the substitution replaces the variables in a term with members of $A$.}
\[
\sigma(s)=_{fr}\sigma(t)
\]
Then for all closed substitutions $\sigma:V\to T(\sig)$,
\[
\sigma(s)=_{fr}\sigma(t)
\]
\end{lem}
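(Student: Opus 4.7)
The plan is induction on the complexity measure $|\sigma| = \sum_{x \in \mathrm{vars}(s,t)} c(\sigma(x))$, where $c$ counts conditional-composition nodes; by Lemma \ref{basic form theorem} I may assume each $\sigma(x)$ is already in closed basic form. The idea is to reduce any closed-term substitution to an atomic substitution via the CP axioms, at which point the premise of the lemma applies directly.

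In the inductive step, if some $\sigma(x)$ has the form $P_1 \lef a \rig P_2$ with $a \in A$, I apply axiom CP4 at each occurrence of $x$ as antecedent in $s$ and $t$, rewriting each pattern $s_1 \lef x \rig s_2$ as $(s_1 \lef P_1 \rig s_2) \lef a \rig (s_1 \lef P_2 \rig s_2)$. The resulting term is CP-equal to $\tau(s')$, where $s'$ is the open basic form produced by replacing each $x$-antecedent in $s$ by a pattern $y_1 \lef y_2 \rig y_3$ using fresh variables $y_i$, and $\tau$ sends $y_1\mapsto P_1$, $y_2\mapsto a$, $y_3\mapsto P_2$ while agreeing with $\sigma$ elsewhere. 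Then $|\tau| < |\sigma|$, so the induction hypothesis can be applied to $(s', t', \tau)$. For the base case, every $\sigma(x) \in \{T,F\}\cup A$: variables sent to $T$ or $F$ are eliminated by CP1 or CP2, yielding smaller open basic forms, and the remaining substitution maps into $A$, where the original premise applies.

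The principal obstacle is propagating the premise of the lemma to the reduced open basic forms so that the induction hypothesis is actually applicable. For the CP4 reduction, the premise on $(s', t')$ follows because $\tau^*(s')$ equals (in CP) $\sigma^*(s)$ for the substitution $\sigma^*$ sending $x$ to $\tau^*(y_1) \lef \tau^*(y_2) \rig \tau^*(y_3)$, a substitution of strictly smaller complexity to which the induction hypothesis recursively applies. For the constant case, the premise on the reduced open basic forms requires a semantic simulation: since the free variety places no constraints on atomic behavior, one can construct an RVA in which a chosen atom behaves exactly like $T$ (constantly $T$ yield, identity derivative), allowing evaluations of constant-valued substitutions to be reproduced by evaluations of atomic substitutions covered by the original premise. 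The standing assumption $|A| \geq 2$ guarantees sufficient atoms for the simulation to avoid collisions with those already appearing in the substitution's range.
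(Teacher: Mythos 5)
Your overall strategy is the reverse of the paper's: the paper proves the contrapositive, starting from a single closed substitution $\sigma$ with $\sigma(s)\ne_{fr}\sigma(t)$ and producing, by structural induction on $s$ and $t$ (using Lemma \ref{inv congr lemma} to descend into subterms and Lemma \ref{completeness lemma} at the leaves), a single atomic substitution $\tau$ with $\tau(s)\ne_{fr}\tau(t)$. Going forward instead, as you do, forces you to re-establish the \emph{premise} of the lemma for the modified pair $(s',t')$ before the induction hypothesis can be invoked, and that is exactly where the argument breaks.

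Concretely, there are two gaps. First, the reduction is circular at the bottom of the induction. The premise for $(s',t')$ at an atomic $\rho$ unfolds, via CP4, to the statement $\sigma^*(s)=_{fr}\sigma^*(t)$ for the substitution $\sigma^*$ sending $x$ to $\rho(y_1)\lef\rho(y_2)\rig\rho(y_3)$, which has complexity exactly $1$. When $|\sigma|=1$ --- for instance $\sigma(x)=b_1\lef b_2\rig b_3$ with every other variable sent to an atom --- the induction hypothesis does not cover $\sigma^*$, so complexity-one substitutions are never actually discharged: the whole induction rests on a case that is argued by appeal to itself. Second, the base case's simulation of $T$ and $F$ by atoms requires atoms that are fresh with respect to $s$, $t$ and the range of $\sigma$, but $A$ is a \emph{fixed finite} set: $|A|\ge 2$ in no way prevents $s$, $t$ and $\sigma$ from already using every element of $A$, in which case no simulating atom exists. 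Even granting freshness, the simulation as stated only yields $\sigma(s)\equiv_{fr}\sigma(t)$; upgrading this to $=_{fr}$ means quantifying over contexts, which may themselves mention the simulating atom and are therefore disturbed when its behaviour is fixed to mimic $T$. (The clean route would be syntactic --- the premise plus Lemma \ref{completeness lemma} gives $\rho(s)\syn\rho(t)$, after which one substitutes the fresh atoms back by $T$ and $F$ --- but this again presupposes freshness.) Both problems are avoided by the paper's contrapositive argument, which never has to transport the universally quantified premise across a rewriting of $s$ and $t$.
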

\begin{proof}
Assume there exists a closed substitution $\sigma:V\to T(\sig)$ such that $\sigma(s)\ne_{fr}\sigma(t)$ for open basic forms $s$ and $t$. We prove by induction on $s$ and $t$ that there exists a closed substitution $\tau:V\to A$ such that $\tau(s)\ne_{fr}\tau(t)$. We omit the cases where both $s$ and $t$ are closed terms, which are trivial, and the cases that follow by symmetry.

Suppose $s\in\{T,F\}$ and $t\syn t_1\lef x\rig t_2$ or $t\syn t_1\lef a\rig t_2$. Let $\tau$ be the substitution that maps all variables to atomic proposition $a$. Then $\tau(s),\tau(t)\in\BF$, and clearly $\tau(s)\nsyn\tau(t)$. By Lemma \ref{completeness lemma}, $\tau(s)\ne_{fr}\tau(t)$.

Suppose $s\syn s_1\lef a\rig s_2$ and $t\syn t_1\lef b\rig t_2$. Then it is trivial to see that no matter what closed substitution $\tau:V\to A$ we choose, $\tau(s)\ne_{fr}\tau(t)$. Similarly, we can reduce the case where $s\syn s_1\lef x\rig s_2$ and $t\syn t_1\lef y\rig t_2$, to this case by letting $\tau(x)=a$ and $\tau(y)=b$. Furthermore, the same argument can be used when one of the terms starts with a variable and the other with an atomic proposition.

Suppose $s\syn s_1\lef x\rig s_2$ and $t\syn t_1\lef x\rig t_2$. Since $\sigma(s)\ne_{fr}\sigma(t)$, we can assume without loss of generality that $\sigma(s_1)\ne_{fr}\sigma(t_1)$. By the induction hypothesis, it follows that there exists a closed substitution $\tau:V\to A$ such that $\tau(s_1)\ne_{fr}\tau(t_1)$. Consequently, by Lemma \ref{inv congr lemma}, $\tau(t)\ne_{fr}\tau(s)$. We can use the same argument for $s\syn s_1\lef a\rig s_2$ and $t\syn t_1\lef a\rig t_2$.
\end{proof}

Note that this lemma is trivially true when $A=\emptyset$. The above lemma cannot be proven for $|A|=1$. Take for example,
\begin{align*}
s &=(T\lef x\rig F)\lef a\rig(T\lef x\rig F)\\
t &=(T\lef a\rig F)\lef x\rig(T\lef a\rig F)
\end{align*}
where we assume that $A=\{a\}$. Then the only closed substitution $\tau:V\to A$ is the one that maps all variables to $a$. Hence $\tau(s)\syn\tau(t)$, and thus $\tau(s)=_{fr}\tau(t)$. However, if we pick closed substitution $\sigma:V\to T(\sig)$ such that $\sigma(x)=F$, then clearly $\sigma(s)\ne_{fr}\sigma(t)$.

\begin{lem}\label{omega lem}
For open basic forms $s$ and $t$, if for all closed substitutions $\sigma$, $\sigma(s)=_{fr}\sigma(t)$ then $s\syn t$.
\end{lem}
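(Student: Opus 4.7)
The plan is to reduce the problem to a purely syntactic statement via the two completeness results already proven for free reactive valuations, and then run a structural induction. First, since the hypothesis asserts $\sigma(s)=_{fr}\sigma(t)$ for \emph{every} closed substitution, it holds in particular for substitutions $\sigma:V\to A$ that map each variable to some atomic proposition. For such a $\sigma$, both $\sigma(s)$ and $\sigma(t)$ lie in $\BF$ (a trivial induction on the open basic form), so Lemma \ref{completeness lemma} applies and yields $\sigma(s)\syn\sigma(t)$. Hence it suffices to prove the following purely syntactic claim: if $s,t$ are open basic forms and $\sigma(s)\syn\sigma(t)$ for every $\sigma:V\to A$, then $s\syn t$.

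I would prove this by simultaneous structural induction on $s$ and $t$. For the base cases, if $s\syn T$ then $\sigma(s)\syn T$ for all $\sigma$, which forces $\sigma(t)\syn T$; but if $t$ were of the form $t_1\lef\alpha\rig t_2$ with $\alpha\in A\cup V$, then $\sigma(t)$ is a conditional composition (never $\syn T$), so $t\syn T$. The case $s\syn F$ is symmetric, and mixed constant cases like $s\syn T$, $t\syn F$ are ruled out by any single $\sigma$.

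For the inductive step, write $s\syn s_1\lef\alpha\rig s_2$ and $t\syn t_1\lef\beta\rig t_2$ with $\alpha,\beta\in A\cup V$. The head-symbol analysis splits into four subcases, each exploiting the hypothesis $|A|\ge 2$ (implicit throughout this section): if $\alpha\in A$ and $\beta\syn y\in V$, choose $\sigma$ with $\sigma(y)\ne\alpha$ so that $\sigma(s)$ and $\sigma(t)$ have distinct head atomic propositions, contradicting $\sigma(s)\syn\sigma(t)$; symmetrically when $\alpha\in V$ and $\beta\in A$; if $\alpha\syn x\in V$ and $\beta\syn y\in V$ with $x\nsyn y$, pick $\sigma(x)=a$ and $\sigma(y)=b$ with $a\ne b$; and if $\alpha,\beta\in A$ with $\alpha\ne\beta$, any $\sigma$ already separates them. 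The only surviving subcase is $\alpha\syn\beta$. In that subcase, for every $\sigma$ we have $\sigma(s)\syn\sigma(s_1)\lef c\rig\sigma(s_2)$ and $\sigma(t)\syn\sigma(t_1)\lef c\rig\sigma(t_2)$ with $c=\sigma(\alpha)=\sigma(\beta)\in A$, and since these closed basic forms are syntactically equal, $\sigma(s_1)\syn\sigma(t_1)$ and $\sigma(s_2)\syn\sigma(t_2)$ for every $\sigma:V\to A$. The induction hypothesis then gives $s_1\syn t_1$ and $s_2\syn t_2$, whence $s\syn t$.

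The only mildly delicate point is the head-symbol dichotomy: we must rule out a variable on one side opposite either a different variable or an atomic proposition on the other, and this is exactly where $|A|\ge 2$ enters — the very assumption whose failure was shown earlier to break $\omega$-completeness of CP. Apart from that, the proof is mechanical, with Lemmas \ref{completeness lemma} and \ref{substitution lemma} doing the heavy semantic lifting and the induction reducing everything to matching head symbols.
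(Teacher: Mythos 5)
Your proof is correct, but it takes a genuinely different route from the paper's. The paper argues by contraposition: assuming $s\nsyn t$, it constructs a closed substitution $\sigma$ with $\sigma(s)\ne_{fr}\sigma(t)$, and in the inductive step (matching head symbols) it must invoke the induction hypothesis to get some distinguishing substitution for $s_1,t_1$, then Lemma \ref{substitution lemma} to replace it by a substitution into $A$, and finally Lemma \ref{inv congr lemma} to lift $\tau(s_1)\ne_{fr}\tau(t_1)$ to $\tau(s)\ne_{fr}\tau(t)$. You instead spend the semantic content exactly once, up front: restricting to substitutions $\sigma:V\to A$ lands both images in $\BF$, where Lemma \ref{completeness lemma} converts $=_{fr}$ into $\syn$, and from then on the induction is purely syntactic --- decomposing $\sigma(s)\syn\sigma(t)$ into $\sigma(s_1)\syn\sigma(t_1)$ and $\sigma(s_2)\syn\sigma(t_2)$ is immediate, so Lemmas \ref{substitution lemma} and \ref{inv congr lemma} are not needed at all. (Your closing sentence credits Lemma \ref{substitution lemma} with ``heavy lifting,'' but your argument never actually uses it; that is a slip in the commentary, not in the proof.) What your version buys is economy and a cleaner separation of the semantic and syntactic layers; what the paper's version buys is an explicit distinguishing substitution in the contrapositive form and a structure that runs parallel to the closed-term completeness proof. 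Both arguments need $|A|\ge 2$ at exactly the same place, namely to separate a variable head from a different variable or from a fixed atomic proposition.
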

\begin{proof}
We prove the contraposition i.e., given that $s\nsyn t$ we show that there exists a closed substitution $\sigma$ such that $\sigma(s)\ne_{fr}\sigma(t)$. Note that if $s$ and $t$ are closed terms then $s, t\in\BF$, the set of closed basic forms. Since we already showed in Lemma \ref{completeness lemma} that for terms in \BF{} $fr$-congruence and syntactical equality coincide, we are done. We proceed by induction on $s$ and $t$, omitting the cases that follow by symmetry.

Suppose one of the following cases,
\begin{itemize}
\item $s\in\{T,F\}$ and $t\syn t_1\lef a\rig t_2$
\item $s\in\{T,F\}$ and $t\syn t_1\lef x\rig t_2$
\item $s\syn s_1\lef a\rig s_2$ and $t\syn t_1\lef b\rig t_2$
\item $s\syn s_1\lef x\rig s_2$ and $t\syn t_1\lef b\rig t_2$
\item $s\syn s_1\lef x\rig s_2$ and $t\syn t_1\lef y\rig t_2$
\end{itemize}
Let $\sigma:V\to A$ be a closed substitution such that $\sigma(x)=a$ and $\sigma(y)=b$. Regardless of which case we pick, it follows that $\sigma(s),\sigma(t)\in\BF$. Furthermore, $\sigma(s)\nsyn\sigma(t)$. Consequently, $\sigma(s)\ne_{fr}\sigma(t)$ by Lemma \ref{completeness lemma}.

Suppose $s\syn s_1\lef x\rig s_2$ and $t\syn t_1\lef x\rig t_2$. Since $s\nsyn t$, we can assume without loss of generality that $s_1\nsyn t_1$. By the induction hypothesis, there exists a closed substitution $\sigma$ such that $\sigma(s_1)\ne_{fr}\sigma(t_1)$. Consequently, by Lemma \ref{substitution lemma}, there exists a closed substitution $\tau:V\to A$ such that $\tau(s_1)\ne_{fr}\tau(t_1)$. Hence, by Lemma \ref{inv congr lemma}, $\tau(s)\ne_{fr}\tau(t)$. Using a similar argument, we can prove this for $s\syn s_1\lef a\rig s_2$ and $t\syn t_1\lef a\rig t_2$.
\end{proof}

The following theorem shows that CP is $\omega$-complete.

\begin{thm}
Let $s$ and $t$ be open terms such that for all closed substitutions $\sigma$, $\CP\vdash\sigma(s)=\sigma(t)$, then $\CP\vdash s=t$.
\end{thm}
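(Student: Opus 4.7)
The plan is to reduce the claim to the two preceding lemmas about open basic forms. First I would apply Lemma \ref{open basic form lem} to both $s$ and $t$, obtaining open basic forms $s'$ and $t'$ with $\CP\vdash s=s'$ and $\CP\vdash t=t'$. Since provable equality is preserved under substitution, every closed substitution $\sigma$ satisfies $\CP\vdash\sigma(s)=\sigma(s')$ and $\CP\vdash\sigma(t)=\sigma(t')$, so the hypothesis $\CP\vdash\sigma(s)=\sigma(t)$ upgrades to $\CP\vdash\sigma(s')=\sigma(t')$ for every $\sigma$.

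Next I would use soundness (Theorem \ref{soundness cp}), which applies because $\sigma(s')$ and $\sigma(t')$ are closed terms: from $\CP\vdash\sigma(s')=\sigma(t')$ we obtain $\sigma(s')=_{fr}\sigma(t')$ for every closed substitution $\sigma$. This puts us exactly into the hypothesis of Lemma \ref{omega lem}, which, applied to the open basic forms $s'$ and $t'$, yields $s'\syn t'$.

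From $s'\syn t'$ it follows that $\CP\vdash s'=t'$ (syntactic equality is stronger than provable equality), and chaining with the earlier $\CP\vdash s=s'$ and $\CP\vdash t=t'$ gives $\CP\vdash s=t$, as desired.

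The main obstacle is not in this theorem itself but in the supporting machinery that has already been established: namely Lemma \ref{omega lem} (which reduces semantic equality of open basic forms under all closed substitutions to syntactic equality) and Lemma \ref{substitution lemma} on which it depends. Both of these crucially require $|A|\ge 2$, and the examples given just before the theorem show that the statement genuinely fails when $|A|<2$; the assumption $|A|\ge 2$ made at the start of this section is therefore essential and silently used here through the appeal to Lemma \ref{omega lem}.
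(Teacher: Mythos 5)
Your proposal is correct and follows essentially the same route as the paper's own proof: reduce to open basic forms via Lemma \ref{open basic form lem}, apply soundness to pass to $=_{fr}$, invoke Lemma \ref{omega lem} to get syntactic equality, and chain back. Your explicit note that the standing assumption $|A|\ge 2$ enters silently through Lemma \ref{omega lem} is a correct and worthwhile observation that the paper leaves implicit.
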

\begin{proof}
Assume that $\forall\sigma:\CP\vdash\sigma(s)=\sigma(t)$. By Lemma \ref{open basic form lem}, there exist open basic forms $s'$ and $t'$ such that $\CP\vdash s=s',t=t'$. Hence, $\forall\sigma:\CP\vdash\sigma(s')=\sigma(t')$. By soundness, it follows that $\forall\sigma:\sigma(s')=_{fr}\sigma(t')$. Consequently, by Lemma \ref{omega lem}, $s'\syn t'$, and thus $\CP\vdash s'=t'$. It follows that $\CP\vdash s=t$.
\end{proof}

We were not able to prove $\omega$-completeness for $\text{CP}_{rp}$ and $\text{CP}_{cr}$. Instead we note that using the same examples as before, it follows that both $\text{CP}_{rp}$ and $\text{CP}_{cr}$ are not $\omega$-complete for $|A|<2$. Furthermore, proving $\omega$-completeness for the case $|A|\ge 2$ suggests that we adjust the set of open basic forms accordingly and then prove the same lemmas as before only this time using the altered set of open basic forms. 

\section{$\omega$-Completeness of $\text{CP}_{st}$}

The proof of $\omega$-completeness for the axiomatization $\text{CP}_{st}$ is quite different from the previous proof. We will use a back-and-forth translation between $\text{CP}_{st}$ and an axiomatization of boolean algebra of which we know that it is $\omega$-complete, to show $\omega$-completeness of $\text{CP}_{st}$.

In \cite{omega_bergstra} a proof of $\omega$-completeness for the axiomatization of an $n$-valued Post algebra is given. If we take $n=2$ the axiomatization is that of a Boolean algebra. We obtain the following axiomatization by taking $n=2$, and removing some of the redundant axioms.
\begin{align*}
x\vee y &= y\vee x\\
x\wedge y &= y\wedge x\\
x\vee(y\vee z) &= (x\vee y)\vee z\\
x\wedge(x\vee y) &= x\\
x\vee(x\wedge y) &= x\\
x\vee(y\wedge z) &= (x\vee y)\wedge (x\vee z)\\
F\vee x &= x\\
x\wedge T &= x\\
x\wedge\neg x &= F\\
\neg x\vee x &= T\\
\neg(x\wedge y) &= \neg x\vee\neg y
\end{align*}
We call this axiomatization BA. The signature of BA consists of $T$, $F$, $\neg$, $\vee$ and $\wedge$. Expanding the signature with a set of atomic propositions $A$ does not affect the $\omega$-completeness of this axiomatization. We call the resulting signature $\Sigma_{BA}$. We use the following two translations between $\sig$ and $\Sigma_{BA}$, starting with the translation of $\sig$ to $\Sigma_{BA}$.
\begin{align*}
T' &= T\\
F' &= F\\
a' &= a\\
x' &= x\\
(t\lef r\rig s)' &= (\neg r'\vee t')\wedge(r'\vee s')
\end{align*}
Note that $(\neg r'\vee t')\wedge(r'\vee s')=(r'\to t')\wedge(\neg r'\to s')$ which is perhaps more intuitive. The translation from $\Sigma_{BA}$ to $\sig$ looks as follows.
\begin{align*}
T^* &= T\\
F^* &= F\\
a^* &= a\\
x^* &= x\\
(\neg t)^* &= F\lef t^*\rig T\\
(t\vee r)^* &= T\lef t^*\rig r^*\\
(t\wedge r)^* &= r^*\lef t^*\rig F
\end{align*}
Note that these are translations over all terms including open terms. The next two lemmas show that the translations are sound i.e., if two terms are provably equal in either $\text{CP}_{st}$ or BA then their respective translations are also provably equal.
\begin{lem}\label{CPst in BA lem}
For all terms $s$ and $t$,
\[
\text{CP}_{st}\vdash s=t\qquad\Longrightarrow\qquad\text{BA}\vdash s'=t'
\]
\end{lem}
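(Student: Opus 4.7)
The plan is a straightforward induction on the length of the $\text{CP}_{st}$-derivation of $s=t$. Since the translation $(\cdot)'$ commutes with the term-forming operations (once we observe that for the basic equality rules reflexivity, symmetry, transitivity, and congruence, the translated rule instances are valid BA-derivation steps because $(\cdot)'$ is defined compositionally), the essential work is to show that for every axiom of $\text{CP}_{st}$, the translations of its two sides are provably equal in BA. The axioms to check are CP1 through CP4 together with CPstat and CPcontr.

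First I would handle CP1, CP2, and CP3, which are the easy cases. For CP1, $(x\lef T\rig y)' = (\neg T\vee x)\wedge(T\vee y) = (F\vee x)\wedge T = x$, using $\neg T = F$ (which follows from $T\wedge\neg T = F$ and $\neg T\vee T = T$ together with the lattice axioms), absorption-style identities, and the unit laws already present in BA. CP2 is entirely symmetric. For CP3, $(T\lef x\rig F)' = (\neg x\vee T)\wedge(x\vee F) = T\wedge x = x$. These three should be short chains of BA-rewrites using the complementation, unit, and commutativity axioms.

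The harder cases are CP4 and CPstat, both of which involve nested conditionals and so translate into non-trivial boolean polynomials. For CP4 the left-hand side becomes
\[
(\neg((\neg z'\vee y')\wedge(z'\vee u'))\vee x')\wedge(((\neg z'\vee y')\wedge(z'\vee u'))\vee v'),
\]
and the right-hand side is a similar nested expression. Using De Morgan's law (one of the BA axioms) I would rewrite $\neg((\neg z'\vee y')\wedge(z'\vee u'))$ as $(z'\wedge\neg y')\vee(\neg z'\wedge\neg u')$, then expand both sides by distributivity and apply complementation $x\wedge\neg x=F$ and $x\vee\neg x=T$ to eliminate the impossible $z'/\neg z'$ combinations. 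Both sides should collapse to a common normal form that distinguishes the four cases on $z'$ and $y'$ (resp.\ $z'$ and $u'$). A cleaner alternative is to introduce an abbreviation $\mathrm{ite}(r,t,s) := (\neg r\vee t)\wedge(r\vee s)$ and prove a handful of generic BA-lemmas about it (such as $\mathrm{ite}(r,t,s) = (r\wedge t)\vee(\neg r\wedge s)$ and the two distributive laws of $\mathrm{ite}$ over itself), after which CP4 and CPstat reduce to routine case analysis.

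For CPcontr, the translation of $(x\lef y\rig z)\lef y\rig u$ has $y'$ appearing twice as a condition, so after rewriting both occurrences of the outer $\mathrm{ite}$ in the $(r\wedge t)\vee(\neg r\wedge s)$ form one can collapse $y'\wedge(\cdots\wedge y'\wedge\cdots)$ via idempotence/absorption to the translation of $x\lef y\rig u$. The main obstacle throughout is bookkeeping: keeping the boolean expressions small enough that the reductions are visible, which is why introducing the $\mathrm{ite}$-abbreviation and proving its general properties once is preferable to expanding each axiom directly. Once all six axioms are verified, the induction on the derivation completes the proof.
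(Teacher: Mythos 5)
Your proposal is correct and follows the paper's overall structure: reduce the claim to checking that the translations of the six axioms CP1--CP4, CPstat and CPcontr are derivable in BA (with reflexivity, symmetry, transitivity and congruence handled by the compositionality of the translation), and your explicit computations for CP1--CP3 agree with the paper's. Where you genuinely diverge is in the treatment of the three hard axioms. The paper disposes of CP4, CPstat and CPcontr in a single sentence: their translations are Boolean tautologies, verifiable by truth table, and since BA is a sound and complete axiomatization of Boolean algebra they are therefore derivable. You instead propose a purely syntactic derivation inside BA, rewriting $(\neg r'\vee t')\wedge(r'\vee s')$ into the form $(r'\wedge t')\vee(\neg r'\wedge s')$ and then applying De Morgan, distributivity and complementation, ideally packaged as a few reusable lemmas about an $\mathrm{ite}$ abbreviation. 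Both routes work. The paper's is far shorter but uses the completeness of BA as a black box (a metatheorem it imports from the literature anyway, since it later needs $\omega$-completeness of BA); yours is self-contained at the equational level but requires substantial bookkeeping, and you would additionally have to derive a number of auxiliary identities that are consequences of BA rather than axioms of it (associativity of $\wedge$, the dual distributive law, idempotence, $\neg T=F$, double negation). Neither of these is a gap in principle; your sketch is a legitimate, more laborious alternative to the paper's semantic shortcut.
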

\begin{proof}
It suffices to show that the translations of axioms CP1-4, CPstat and CPcontr can be derived in BA.
\begin{align*}
(x\lef T\rig y)' &= (\neg T\vee x)\wedge(T\vee y)\\
&=x'\\\\
(x\lef F\rig y)' &= (\neg F\vee x)\wedge(F\vee y)\\
&=y'\\\\
(T\lef x\rig F)' &= (\neg x\vee T)\wedge(x\vee F)\\
&=x'
\end{align*}
Using a truth table we can check that the translations of CP4, CPstat and CPcontr are correct because BA is both sound and complete.
\end{proof}
\begin{lem}\label{BA in CPst lem}
For all terms $s$ and $t$,
\[
\text{BA}\vdash s=t\qquad\Longrightarrow\qquad\text{CP}_{st}\vdash s^*=t^*
\]
\end{lem}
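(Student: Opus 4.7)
The plan is to mirror the strategy of the preceding Lemma \ref{CPst in BA lem}, arguing by induction on the length of a BA-derivation of $s=t$. First I would verify two routine preservation properties of the translation $(\cdot)^*$: that it commutes with substitution of variables, and that it preserves the congruence structure, i.e.\ from $\text{CP}_{st}\vdash s^*=t^*$ one may freely build up to $\text{CP}_{st}\vdash (C[s])^* = (C[t])^*$ for any context $C$, because $(\cdot)^*$ just rewrites each BA connective into a fixed pattern of conditional compositions, to which the congruence of $=$ in $\text{CP}_{st}$ applies componentwise. Together these reductions show that it is enough to verify, for every axiom of BA, that its translation is derivable in $\text{CP}_{st}$.

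Next I would dispatch the eleven axioms one at a time. The elementary cases are essentially immediate from CP: $(F\vee x)^* = T\lef F\rig x = x$ by CP2, $(x\wedge T)^* = T\lef x\rig F = x$ by CP3, $(\neg x\vee x)^* = T\lef (F\lef x\rig T)\rig x$, which unfolds to $T$ using CP4 plus CP1--CP2, and symmetrically for $x\wedge\neg x = F$. For the commutativity, associativity, absorption, distributivity, and De Morgan axioms, the most economical route is to observe that each translated equation is valid under every static valuation, by a short semantic calculation using Lemma \ref{static gen lemma} together with the boolean interpretation of the connectives on $\{T,F\}$, and then to invoke the completeness theorem for $\text{CP}_{st}$ proved in the previous section, which yields the required derivation. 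Alternatively, each axiom can be derived by hand: commutativity $T\lef x\rig y = T\lef y\rig x$ follows from CPstat applied to $(T\lef x\rig y)\lef y\rig(T\lef x\rig y)$ combined with CPcontr and CP1--CP3, and De Morgan's law $(\neg(x\wedge y))^* = (\neg x\vee \neg y)^*$, i.e.\ $F\lef(F\lef x\rig y)\rig T = T\lef(F\lef x\rig T)\rig(F\lef y\rig T)$, follows by a CP4-expansion of the left-hand side combined with CP1 and CP2.

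The main obstacle is purely bookkeeping: there are eleven axioms, and the CP4-driven expansions for distributivity and De Morgan's law produce several subcases that must be collapsed using CPstat and CPcontr. However, no new conceptual idea is needed beyond those already deployed in proving completeness of $\text{CP}_{st}$; in fact the cleanest presentation is to do the trivial cases by direct derivation and then to note that for the remaining axioms the translated equation is a classical boolean tautology valid in every static RVA, so completeness of $\text{CP}_{st}$ provides the required $\text{CP}_{st}$-derivation. Once every axiom translation has been handled, the inductive step for a BA-derivation is then immediate from the preservation properties established at the outset, completing the proof.
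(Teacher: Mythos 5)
Your reduction to checking the eleven BA axioms is the right first move and matches the paper, but the route you designate as ``most economical'' and adopt in your closing paragraph has a genuine gap. The completeness theorem for $\text{CP}_{st}$ proved in Chapter~2 is a statement about \emph{closed} terms $P$ and $Q$, whereas the translated BA axioms, e.g.\ $T\lef x\rig y = T\lef y\rig x$, are open equations. Semantic validity of all closed instances plus closed-term completeness only yields $\text{CP}_{st}\vdash\sigma(s^*)=\sigma(t^*)$ for every closed substitution $\sigma$; concluding $\text{CP}_{st}\vdash s^*=t^*$ from that is precisely $\omega$-completeness of $\text{CP}_{st}$ --- the very theorem this lemma is being used to prove --- so the argument is circular. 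The gap could in principle be bridged by substituting \emph{distinct fresh} atomic propositions for the variables, invoking closed-term completeness, and then lifting the resulting derivation back to variables (legitimate here because no axiom of $\text{CP}_{st}$ mentions a specific atom), but you do not state this lifting step, and it would in any case require $|A|$ to be at least the number of variables in the axiom (three, for distributivity), an assumption the paper never makes for the static case.

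The paper takes exactly the route you relegate to an ``alternative'': it derives the translation of each BA axiom directly in $\text{CP}_{st}$ by explicit equational manipulation with CP1--4, \CPstat{} and \CPcontr{} (and their derived symmetric forms), omitting only the genuinely trivial ones. Your proposal only sketches two of these derivations, and the De Morgan sketch contains a slip: $(\neg(x\wedge y))^*$ is $F\lef(y\lef x\rig F)\rig T$, not $F\lef(F\lef x\rig y)\rig T$. To make the proof complete along sound lines you would need to carry out the remaining hand derivations (associativity, absorption, distributivity) rather than appeal to completeness.
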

\begin{proof}
We just need to check that the translations of axioms of BA are derivable in $\text{CP}_{st}$. We omit the trivial derivations.
\begin{align*}
(x\vee y)^* &= T\lef x\rig y\\
&= T\lef x\rig (T\lef y\rig F)\\
&= (T\lef x\rig T)\lef y\rig(T\lef x\rig F)\\
&= T\lef y\rig x\\
&= (y\lor x)^*\\\\
(x\land y)^*
&= y\lef x\rig F\\
&= (T\lef y\rig F)\lef x\rig F\\
&= (T\lef x\rig F)\lef y\rig (F\lef x\rig F)\\
&= x\lef y\rig F\\
&= (y\land x)^*\\\\
(x\lor(y\lor z))^*
&= T\lef x\rig(T\lef y\rig z)\\
&= (T\lef t\rig z)\lef x\rig (T\lef y\rig z)\\
&= T\lef(T\lef x\rig y)\rig z\\
&= ((x\lor y)\lor z)^*\\\\
(x\land (x\lor y))^*
&= (T\lef x\rig y)\lef x\rig F\\
&= T\lef x\rig F\\
&= x^*\\\\
(x\lor(x\land y))^*
&= T\lef x\rig(y\lef x\rig F)\\
&= T\lef x\rig F\\
&= x^*\\\\
(x\vee(y\wedge z))^*
&=T\lef x\rig(z\lef y\rig F)\\
&=(T\lef x\rig z)\lef y\rig(T\lef x\rig F)\\
&=(T\lef x\rig (T\lef x\rig z))\lef y\rig(T\lef x\rig F)\\
&=T\lef x\rig((T\lef x\rig z)\lef y\rig F)\\
&=(T\lef x\rig z)\lef x\rig((T\lef x\rig z)\lef y\rig F)\\
&=((T\lef x\rig z)\lef T\rig F)\lef x\rig((T\lef x\rig z)\lef y\rig F)\\
&=(T\lef x\rig z)\lef(T\lef x\rig y)\rig F\\
&=((x\vee y)\wedge(x\vee z))^*\\
\end{align*}

\begin{align*}
(x\wedge\neg x)^*
&=(F\lef x\rig T)\lef x\rig F\\
&=F\lef x\rig F\\
&=F^*\\\\
(\neg x\vee x)^*
&= T\lef (F\lef x\rig T)\rig x\\
&= x\lef x\rig T\\
&= (T\lef x\rig F)\lef x\rig T\\
&= T\lef x\rig T\\
&= T^*\\\\
(\neg(x\wedge y))^*
&=F\lef(y\lef x\rig F)\rig T\\
&=(F\lef y\rig T)\lef x\rig T\\
&=T\lef(F\lef x\rig T)\rig(F\lef y\rig T)\\
&=(\neg x\vee\neg y)^*
\end{align*}
\end{proof}

The following two lemmas show that the translations are invariant for each logic i.e., if a term $t$ is translated first to one logic and then back to the original it is still provably equal to $t$.
\begin{lem}\label{trans invariance CPst lem}
For all terms $s$, 
\[
\text{CP}_{st}\vdash(s')^*=s
\]
\end{lem}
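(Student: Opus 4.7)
The plan is to prove this by structural induction on $s$. The base cases, where $s\in\{T,F\}$, $s\syn a$ for $a\in A$, or $s\syn x$ for $x\in V$, follow immediately from the definitions of the two translations, since both $'$ and $^*$ act as the identity on these.

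The only nontrivial case is $s\syn t\lef r\rig u$. By the induction hypothesis, $\text{CP}_{st}\vdash (t')^*=t$, $\text{CP}_{st}\vdash (r')^*=r$, and $\text{CP}_{st}\vdash (u')^*=u$. Unfolding the definitions gives
\[
((t\lef r\rig u)')^* = ((\neg r'\vee t')\wedge (r'\vee u'))^* = (T\lef (r')^*\rig (u')^*)\lef (T\lef (F\lef (r')^*\rig T)\rig (t')^*)\rig F,
\]
and by the induction hypothesis plus congruence this is provably equal to
\[
(T\lef r\rig u)\lef (T\lef (F\lef r\rig T)\rig t)\rig F.
\]
So it remains to prove, purely syntactically in $\text{CP}_{st}$, that this expression equals $t\lef r\rig u$.

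For this I would first recall the identity $T\lef(F\lef r\rig T)\rig t = t\lef r\rig T$, which was already derived in the free reactive valuation section from CP4 (it is just the ``double-negation/De Morgan'' calculation specialised with $T$ and $t$). Applying it inside the antecedent, the expression becomes $(T\lef r\rig u)\lef (t\lef r\rig T)\rig F$. Next I would apply CP4 to the outer conditional (treating $t\lef r\rig T$ as $y\lef z\rig u$ in the axiom, with consequents $T\lef r\rig u$ and $F$), yielding
\[
((T\lef r\rig u)\lef t\rig F)\lef r\rig ((T\lef r\rig u)\lef T\rig F),
\]
and simplify the right-hand branch to $T\lef r\rig u$ using CP1. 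Then I would transform the left branch with CPstat into $(T\lef t\rig F)\lef r\rig (u\lef t\rig F)$, and use CP3 to rewrite $T\lef t\rig F$ as $t$. Putting everything together gives
\[
(t\lef r\rig (u\lef t\rig F))\lef r\rig (T\lef r\rig u),
\]
to which CPcontr applies (contracting the outer $r$ on the left), yielding $t\lef r\rig (T\lef r\rig u)$, and a final application of $\CPcontr'$ reduces the inner $T\lef r\rig u$ under the same condition $r$ to $u$, producing $t\lef r\rig u$ as required.

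The main obstacle is not conceptual but bookkeeping: the expansion of $((t\lef r\rig u)')^*$ has the antecedent buried inside a doubly-nested conditional, so one must choose the right sequence of CP4/CPstat/CPcontr rewrites to peel it open, and then apply the two contraction axioms in just the right order to collapse the duplicated occurrences of $r$. Once the right rewriting order is identified, each individual step is a direct instance of an axiom (or the already-derived identity $T\lef(F\lef r\rig T)\rig t = t\lef r\rig T$), so no further lemmas beyond those proved in the preceding sections are needed.
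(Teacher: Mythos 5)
Your proof is correct and follows essentially the same route as the paper's: structural induction, unfolding both translations, applying the derived identity $T\lef(F\lef r\rig T)\rig t=t\lef r\rig T$, then CP4, CP1, \CPstat, CP3 and the two contraction axioms to collapse the duplicated occurrences of the antecedent. The only difference is the order in which the two contractions are performed (the paper contracts the right branch with $\CPcontr'$ before restructuring the left branch), which is immaterial.
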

\begin{proof}
Proof by structural induction on $s$. If $s\in\{T,F\}\cup A\cup V$ then it is trivially true. Suppose $s\syn s_1\lef s_2\rig s_3$. By the induction hypothesis, it follows that
\[
\text{CP}_{st}\vdash(s_1')^*=s_1,\quad(s_2')^*=s_2,\quad(s_3')^*=s_3
\]
Then
\begin{align*}
((s_1\lef s_2\rig s_3)')^*
&=((\neg s_2'\vee s_1')\wedge(s_2'\vee s_3'))^*\\
&=(T\lef (s_2')^*\rig(s_3')^*)\lef(T\lef(F\lef(s_2')^*\rig T)\rig (s_1')^*)\rig F\\
&=_{IH}(T\lef s_2\rig s_3)\lef(T\lef(F\lef s_2\rig T)\rig s_1)\rig F\\
&=(T\lef s_2\rig s_3)\lef(s_1\lef s_2\rig T)\rig F\\
&=((T\lef s_2\rig s_3)\lef s_1\rig F)\lef s_2\rig ((T\lef s_2\rig s_3)\lef T \rig F)\\
&=((T\lef s_2\rig s_3)\lef s_1\rig F)\lef s_2\rig (T\lef s_2\rig s_3)\\
&=((T\lef s_2\rig s_3)\lef s_1\rig F)\lef s_2\rig s_3\\
&=((T\lef s_1\rig F)\lef s_2\rig(s_3\lef s_1\rig F)\lef s_2\rig s_3\\
&=(T\lef s_1\rig F)\lef s_2\rig s_3\\
&=s_1\lef s_2\rig s_3
\end{align*}
\end{proof}

\begin{lem}\label{trans invariance BA lem}
For all terms $s$, 
\[
\text{BA}\vdash(s^*)'=s
\]
\end{lem}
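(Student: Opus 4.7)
The plan is to proceed by structural induction on $s$, exactly parallel to the proof of Lemma~\ref{trans invariance CPst lem}. The base cases $s \in \{T, F\} \cup A \cup V$ are immediate because both translations $(\cdot)^*$ and $(\cdot)'$ act as the identity on these symbols, so $(s^*)' \syn s$. For the inductive step I would split into three cases corresponding to the non-trivial connectives of the signature $\Sigma_{BA}$, namely $\neg$, $\vee$, and $\wedge$. In each case I first unfold the definition of $(\cdot)^*$ to rewrite $s^*$ as a closed term in $\sig$, then apply $(\cdot)'$ by definition, and finally use the induction hypothesis to replace each subterm $(t^*)'$ by $t$. What remains is to verify in BA that the resulting expression equals $s$.

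Concretely, for $s \syn \neg t$ the reduction yields $(\neg t \vee F) \wedge (t \vee T)$, which collapses to $\neg t$ via $F \vee x = x$ (with commutativity of $\vee$), $\neg x \vee x = T$, and $x \wedge T = x$. For $s \syn t \vee r$, the reduction yields $(\neg t \vee T) \wedge (t \vee r)$, which simplifies to $t \vee r$ using the same group of axioms. For $s \syn t \wedge r$, the reduction yields $(\neg t \vee r) \wedge (t \vee F)$, which by the identity axiom becomes $(\neg t \vee r) \wedge t$, and the remaining task is to show that this equals $t \wedge r$ in BA.

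The main obstacle will be this final identity, which is essentially the distributivity of $\wedge$ over $\vee$ applied to $t \wedge (\neg t \vee r) = (t \wedge \neg t) \vee (t \wedge r) = F \vee (t \wedge r) = t \wedge r$. Unfortunately distributivity of $\wedge$ over $\vee$ is not listed explicitly among the BA axioms, though it is derivable from them (via absorption, De~Morgan, and the dual distributivity). Rather than constructing a lengthy derivation from the axioms, I would adopt the strategy used by the author in the proof of Lemma~\ref{CPst in BA lem} for the less trivial axioms: appeal to the soundness and completeness of BA with respect to the standard two-valued Boolean semantics, so that the required identity reduces to a routine truth-table check.
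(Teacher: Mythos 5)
Your proposal is correct and follows essentially the same route as the paper: structural induction on $s$ with trivial base cases and the three connective cases, unfolding both translations and applying the induction hypothesis, then closing each case with the BA axioms. The only point of divergence is in the $\wedge$ case, where the paper silently uses the distributivity of $\wedge$ over $\vee$ (writing $(\neg x\vee y)\wedge x=(x\wedge\neg x)\vee(x\wedge y)$ without comment), whereas you correctly observe that this law is not an explicit BA axiom and propose to justify it either by derivation or by the soundness-and-completeness appeal the author already uses in Lemma~\ref{CPst in BA lem}; that is, if anything, a more careful treatment of the same argument.
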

\begin{proof}
Proof by structural induction on $s$. If $s\in\{T,F\}\cup A\cup V$ then it is trivially true. Suppose $s\syn\neg s_1$. Then
\begin{align*}
((\neg s_1)^*)' &= (F\lef s_1^*\rig T)'\\
&=(\neg(s_1^*)'\vee F)\wedge((s_1^*)'\vee T)\\
&=\neg(s_1^*)'\\
&=_{IH}\neg s_1
\end{align*}
Suppose $s\syn s_1\vee s_2$. Then
\begin{align*}
((s_1\vee s_2)^*)' &= (T\lef s_1^*\rig s_2^*)'\\
&=(\neg(s_1^*)'\vee T)\wedge((s_1^*)'\vee(s_2^*)')\\
&=(s_1^*)'\vee(s_2^*)'\\
&=_{IH}s_1\vee s_2
\end{align*}
Suppose $s\syn s_1\wedge s_2$. Then
\begin{align*}
((s_1\wedge s_2)^*)' &= (s_2^*\lef s_1^*\rig F)'\\
&= (\neg(s_1^*)'\vee(s_2^*)')\wedge((s_1^*)'\vee F)\\
&= (\neg(s_1^*)'\vee(s_2^*)')\wedge(s_1^*)'\\
&= ((s_1^*)'\wedge\neg(s_1^*)')\vee((s_1^*)'\wedge(s_2^*)')\\
&= (s_1^*)'\wedge(s_2^*)'\\
&=_{IH} s_1\wedge s_2
\end{align*}
\end{proof}

The last lemma before proving $\omega$-completeness of $\text{CP}_{st}$ is a variation of Lemma \ref{substitution lemma}.
\begin{lem}\label{trans subst lem}
If for all closed substitutions $\sigma:V\to T(\sig)$,
\[
\text{BA}\vdash\sigma(s)'=\sigma(t)'
\]
Then for all closed substitutions $\tau:V\to T(\Sigma_\text{BA})$,
\[
\text{BA}\vdash\tau(s')=\tau(t')
\]
\end{lem}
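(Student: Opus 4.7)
The plan is to reduce closed substitutions over $\Sigma_{BA}$ to closed substitutions over $\sig$ using the back-translation $^*$, and then invoke the hypothesis. Given an arbitrary $\tau:V\to T(\Sigma_{BA})$, I would define a companion closed substitution $\sigma:V\to T(\sig)$ by setting $\sigma(x)=(\tau(x))^*$ for every variable $x$. This makes sense because the translation $^*$ sends $\Sigma_{BA}$-terms to $\sig$-terms, so $\sigma$ is indeed closed and lands in $T(\sig)$.

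The heart of the argument is to show that for every term $r$ over $\sig$,
\[
\text{BA}\vdash(\sigma(r))'=\tau(r').
\]
I would prove this by a straightforward structural induction on $r$. The constants $T$ and $F$ and the atomic propositions are fixed by both translations, so those cases are immediate. For the variable case, $(\sigma(x))'=((\tau(x))^*)'$, which equals $\tau(x)=\tau(x')$ in BA by Lemma~\ref{trans invariance BA lem}. For the conditional-composition case $r\syn r_1\lef r_2\rig r_3$, I expand $(\sigma(r))'$ using the definition of $'$ to get $(\neg(\sigma(r_2))'\vee(\sigma(r_1))')\wedge((\sigma(r_2))'\vee(\sigma(r_3))')$, apply the induction hypothesis to each subterm to replace $(\sigma(r_i))'$ by $\tau(r_i')$, and recognise the result as $\tau\bigl((\neg r_2'\vee r_1')\wedge(r_2'\vee r_3')\bigr)=\tau(r')$.

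With this intertwining identity in hand, the conclusion is essentially bookkeeping. Instantiating it at $r\syn s$ and $r\syn t$ gives $\text{BA}\vdash(\sigma(s))'=\tau(s')$ and $\text{BA}\vdash(\sigma(t))'=\tau(t')$. Since $\sigma$ is a closed substitution $V\to T(\sig)$, the hypothesis of the lemma yields $\text{BA}\vdash\sigma(s)'=\sigma(t)'$. Chaining these three equalities via transitivity of $=$ in BA produces $\text{BA}\vdash\tau(s')=\tau(t')$, which is exactly what we want.

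I do not expect any real obstacle here: the only delicate point is remembering that $(\sigma(s))'$ and $\tau(s')$ are \emph{not} literally the same term (substitution and translation do not strictly commute on the nose), which is precisely why Lemma~\ref{trans invariance BA lem} is needed to bridge them in the variable case. Once that key identity is established by induction, the rest of the argument is purely formal and parallels the structure of Lemma~\ref{substitution lemma}.
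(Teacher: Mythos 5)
Your proposal is correct and follows essentially the same route as the paper: both define $\sigma(x)=\tau(x)^*$, bridge the variable case with Lemma~\ref{trans invariance BA lem}, and conclude by transitivity from the hypothesis. The only difference is organizational --- the paper splits your key identity into a purely syntactic commutation $\sigma(s)'\syn\sigma'(s')$ plus a pointwise provable equality $\tau(x)=\sigma'(x)$, whereas you prove $\text{BA}\vdash(\sigma(r))'=\tau(r')$ in a single induction --- which changes nothing of substance.
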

\begin{proof}
Assume that $\forall\sigma:\text{BA}\vdash\sigma(s)'=\sigma(t)'$ with $\sigma:V\to T(\sig)$ closed substitutions. Let $\sigma'(x)=_{def}\sigma(x)'$. Since for all $x\in V$, $x'=x$, it follows that $\sigma(s)'\syn\sigma'(s')$. Hence,
\[
\forall\sigma:\text{BA}\vdash\sigma'(s')=\sigma'(t')\qquad(*)
\]
Let $\tau:V\to T(\Sigma_\text{BA})$ be a closed substitution. Furthermore, define $\sigma:V\to\ T(\sig)$ to be the substitution such that $\sigma(x)=\tau(x)^*$ for all $x\in V$. By Lemma \ref{trans invariance BA lem}, it follows that $\text{BA}\vdash\tau(x)=(\tau(x)^*)'=\sigma(x)'=\sigma'(x)$ for all $x$. Hence, $\text{BA}\vdash\tau(s')=\sigma'(s')$ and $\text{BA}\vdash\tau(t')=\sigma'(t')$. Consequently, by (*),
\[
\text{BA}\vdash\tau(s')=\sigma'(s')=\sigma'(t')=\tau(t')
\]
Thus, $\forall\tau:\text{BA}\vdash\tau(s')=\tau(t')$.
\end{proof}
Having done the groundwork, it is now possible to prove that $\text{CP}_{st}$ is $\omega$-complete.
\begin{thm}
Let $s$ and $t$ be terms over $\sig$ such that for all closed substitutions $\sigma$, $\text{CP}_{st}\vdash\sigma(s)=\sigma(t)$, then $\text{CP}_{st}\vdash s=t$.
\end{thm}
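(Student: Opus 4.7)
The plan is to leverage the two-way translation between $\text{CP}_{st}$ and BA together with the $\omega$-completeness of BA (from \cite{omega_bergstra}, with $n=2$), which was the reason for setting up the translation in the first place. The preceding lemmas have been arranged so that everything slots together cleanly; the proof should be just a chain of applications.

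First I would assume $\forall\sigma:\text{CP}_{st}\vdash\sigma(s)=\sigma(t)$ and transport this hypothesis into BA. Applying Lemma \ref{CPst in BA lem} term-by-term gives $\forall\sigma:V\to T(\sig):\text{BA}\vdash\sigma(s)'=\sigma(t)'$. This is almost, but not quite, the hypothesis needed to invoke $\omega$-completeness of BA on $s'$ and $t'$: we need closed substitutions over $\Sigma_{BA}$, not substitutions over $\sig$ composed with translation. This is precisely the gap that Lemma \ref{trans subst lem} bridges, so applying it yields $\forall\tau:V\to T(\Sigma_{BA}):\text{BA}\vdash\tau(s')=\tau(t')$.

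Next, since BA is $\omega$-complete, we may conclude $\text{BA}\vdash s'=t'$. Pushing this equality back along the other translation via Lemma \ref{BA in CPst lem} gives $\text{CP}_{st}\vdash(s')^*=(t')^*$. Finally, Lemma \ref{trans invariance CPst lem} tells us that $(s')^*=s$ and $(t')^*=t$ are provable in $\text{CP}_{st}$, so transitivity of $=$ yields $\text{CP}_{st}\vdash s=t$, as desired.

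There is no real obstacle here, because the hard work has already been done: the nontrivial step is Lemma \ref{trans subst lem}, which handles the fact that substitutions on the $\sig$-side are not literally the same objects as substitutions on the $\Sigma_{BA}$-side, and the even more fundamental nontrivial input is the $\omega$-completeness of BA itself, which is cited. The present theorem is therefore essentially a bookkeeping exercise that composes the five preceding lemmas in the correct order. The only thing to be careful about is making sure that the hypothesis passed into $\omega$-completeness of BA really is ``provable for all closed BA-substitutions'' rather than merely ``provable for all translated closed $\sig$-substitutions,'' which is exactly what Lemma \ref{trans subst lem} was engineered to guarantee.
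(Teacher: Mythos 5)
Your proposal is correct and follows exactly the same chain as the paper's proof: Lemma \ref{CPst in BA lem}, then Lemma \ref{trans subst lem}, then $\omega$-completeness of BA, then Lemma \ref{BA in CPst lem}, then Lemma \ref{trans invariance CPst lem}. Your remark about why Lemma \ref{trans subst lem} is the step that makes the appeal to BA's $\omega$-completeness legitimate is exactly the right point of care.
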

\begin{proof}
Assume that
\[
\forall\sigma:\text{CP}_{st}\vdash\sigma(s)=\sigma(t)
\]
where $\sigma$ is closed. By Lemma \ref{CPst in BA lem},
\[
\forall\sigma:\text{BA}\vdash\sigma(s)'=\sigma(t)'
\]
By Lemma \ref{trans subst lem},
\[
\forall\tau:\text{BA}\vdash\tau(s')=\tau(t')
\]
where $\tau$ is closed. By $\omega$-completeness of BA,
\[
\text{BA}\vdash s'=t'
\]
By Lemma \ref{BA in CPst lem},
\[
\text{CP}_{st}\vdash (s')^*=(t')^*
\]
By Lemma \ref{trans invariance CPst lem}
\[
\text{CP}_{st}\vdash s=t
\]
\end{proof}

In this chapter we have shown $\omega$-completeness for $\text{CP}$ and $\text{CP}_{st}$. $\omega$-Completeness of the remaining two axiomatizations, $\text{CP}_{rp}$ and $\text{CP}_{cr}$, remains an open issue with the exception of the case where there are less than two atomic propositions in which neither $\text{CP}$, $\text{CP}_{rp}$ nor $\text{CP}_{cr}$ are $\omega$-complete.

\chapter{Independence of the axioms}
In this chapter we prove that the axioms are independent from each other. An axiom is independent with respect to a set of axioms if it cannot be derived from the other axioms e.g., \CP1 is independent in CP if $\CP2,\CP3,\CP4\nvdash \CP1$. This is a nice property for a set of axioms to have, as it shows that there are no redundant axioms.

The standard strategy for proving that an axiom is independent consists of constructing a model such that every axiom except the one we are trying to prove independence for, is true in this model. In other words, if we want to prove that CP1 is independent in CP, we show there is a model $\mathcal{M}$ and interpretation $\phi$ such that
\begin{itemize}
\item[(1)] $\text{CP2-4}\vdash s=t$ implies that $\mathcal{M}\models \phi(s)=\phi(t)$
\item[(2)] $\mathcal{M}\not\models\phi(\text{CP1})$
\end{itemize}
Hence, by contraposition of (1) it follows that $\text{CP2-4}\nvdash\text{CP1}$. Note that this only applies if $\mathcal{M}$ is a model of equational logic i.e., reflexivity, symmetry, transitivity and congruence are all true in $\mathcal{M}$.

In the following sections independence of axioms is shown for the different varieties of RVAs.

\section{Independence of CP}
Recall that $fr$-congruence is axiomatized by the axioms in CP, listed here again for the reader's convenience:
\[
\begin{array}{lrcl}
(\CP1) & x\lef T\rig y & = & x\\
(\CP2) & x\lef F\rig y & = & y\\
(\CP3) & T\lef x\rig F & = & x\\
(\CP4) & x\lef(y\lef z\rig u)\rig v & = & (x\lef y\rig v)\lef z\rig(x\lef u\rig v)\\
\end{array}
\]
We start by proving independence of \CP1. In order to do this we need to construct a model such that \CP2, \CP3 and \CP4 are true in this model but \CP1 is not. Take a look at the following model:
\begin{align*}
\phi_1(T) &= T\\
\phi_1(F) &= F\\
\phi_1(a) &= T\\
\phi_1(P\lef Q\rig R) &= \phi_1(Q)\vee\phi_1(R)
\end{align*}
for all $a\in A$ with $\vee$ as in sentential logic and as domain $\{T,F\}$. Observe that this is a model of equational logic, in particular congruence is true. The next step in proving independence for \CP1 is to show that \CP1 does not hold under this interpretation. We do this by giving a counterexample i.e., we take a specific instantiation of this axiom and show that the left-hand and the right-hand side of \CP1 are not equal. If CP1 were true in this model then $\phi_1(F\lef T\rig F)=\phi_1(F)$. However, $\phi_1(F\lef T\rig F)=T$, and hence unequal to $\phi_1(F)=F$. Therefore, CP1 does not hold in this model. Now we have to check whether CP2-4 do hold:
\begin{align*}
\phi_1(P\lef F\rig Q) &= F\vee\phi_1(Q)\\
&= \phi_1(Q)\\\\
\phi_1(T\lef P\rig F) &= \phi_1(P)\vee F\\
&=\phi_1(P)\\\\
\phi_1(P\lef(Q\lef R\rig S)\rig V)
&= \phi_1(Q\lef R\rig S)\vee\phi_1(V)\\
&= (\phi_1(R)\vee\phi_1(S))\vee\phi_1(V)\\
&= \phi_1(R)\vee(\phi_1(S)\vee\phi_1(V))\\
&= \phi_1(R)\vee(\phi_1(P\lef S\rig V))\\
&= \phi_1((P\lef Q\rig V)\lef R\rig(P\lef S\rig V))
\end{align*}
Since CP2-4 are true in this model but CP1 is not, we can conclude that CP1 is independent with respect to CP.

Proving independence for the remaining axioms requires that we repeat these steps for each axiom. So let us continue with proving independence of axiom \CP2. This time we construct a model such that it models \CP1, \CP3 and \CP4 but not \CP2.
\begin{align*}
\phi_2(T) &= T\\
\phi_2(F) &= F\\
\phi_2(a) &= T\\
\phi_2(P\lef Q\rig R) &= \phi_2(P)\wedge\phi_2(Q)
\end{align*}
for all $a\in A$. Then we check whether \CP2 does not hold. Since $\phi_2(T\lef F\rig T)=F\ne \phi_2(T)$, this is true. It is easy to see that \CP1 and \CP3 are true in this model. That leaves us with checking \CP4:
\begin{align*}
\phi_2(P\lef(Q\lef R\rig S)\rig V)
&=\phi_2(P)\wedge\phi_2(Q\lef R\rig S)\\
&=\phi_2(P)\wedge(\phi_2(Q)\wedge\phi_2(R))\\
&=(\phi_2(P)\wedge\phi_2(Q))\wedge\phi_2(R)\\
&=\phi_2(P\lef Q\rig V)\wedge\phi_2(R)\\
&=\phi_2((P\lef Q\rig V)\lef R\rig(P\lef S\rig V))
\end{align*}
So \CP4 also holds under this interpretation, and thus we know that \CP2 is independent.

The previous two models looked quite similar, in particular the models share the same domain i.e., $\{T,F\}$. In proving the independence of \CP3 we will show that this is not always the case. The model we will be constructing here has a finite set of natural numbers as domain, and as a result differs quite a bit from the standard semantics.

The construction of this model requires that we first enumerate the atomic propositions in the set $A$:
\[
a_1,a_2,\ldots,a_n
\]
Using this enumeration we can define our model: 
\begin{align*}
\phi_3(T) &= 0\\
\phi_3(F) &= n+1\\
\phi_3(a_i) &= i\\
\phi_3(P\lef Q\rig R) &= \begin{cases}
\phi_3(P) & \text{if $\phi_3(Q)\le 1$}\\
\phi_3(R) & \text{if $\phi_3(Q)> 1$}
\end{cases}
\end{align*}
Note that congruence is trivially true in this model. By definition there is at least one atomic proposition i.e., $A\ne\emptyset$. So, we can always assume that $a_1$ exists. Observe that $\phi_3(T\lef a_1\rig F)=\phi_3(T)\ne\phi_3(a_1)$. Hence, \CP3 does not follow. Checking to see that \CP1 and \CP2 are true is trivial. That leaves \CP4:
\begin{align*}
\phi_3(P\lef(Q\lef R\rig S)\rig U)
&=\begin{cases}
\phi_3(P)&\text{if $\phi_3(Q\lef R\rig S)\le 1$}\\
\phi_3(U)&\text{if $\phi_3(Q\lef R\rig S)>1$}
\end{cases}\\
&=\begin{cases}
\phi_3(P)&\text{if $\phi_3(R)\le 1$ and $\phi_3(Q)\le 1$}\\
\phi_3(P)&\text{if $\phi_3(R)> 1$ and $\phi_3(S)\le 1$}\\
\phi_3(U)&\text{if $\phi_3(R)\le 1$ and $\phi_3(Q)>1$}\\
\phi_3(U)&\text{if $\phi_3(R)> 1$ and $\phi_3(S)>1$}\\
\end{cases}\\
&=\begin{cases}
\phi_3(P\lef Q\rig U)&\text{if $\phi_3(R)\le 1$ and $\phi_3(Q)\le 1$}\\
\phi_3(P\lef S\rig U)&\text{if $\phi_3(R)> 1$ and $\phi_3(S)\le 1$}\\
\phi_3(P\lef Q\rig U)&\text{if $\phi_3(R)\le 1$ and $\phi_3(Q)>1$}\\
\phi_3(P\lef S\rig U)&\text{if $\phi_3(R)> 1$ and $\phi_3(S)>1$}\\
\end{cases}\\
&=\begin{cases}
\phi_3(P\lef Q\rig U)&\text{if $\phi_3(R)\le 1$}\\
\phi_3(P\lef S\rig U)&\text{if $\phi_3(R)> 1$}\\
\end{cases}\\
&=\phi_3((P\lef Q\rig U)\lef R\rig(P\lef S\rig U))
\end{align*}
So \CP3 is also independent.

Let $A=\{a_1,\ldots,a_n\}$. Take the following interpretation:
\begin{align*}
\phi_4(T) &=1\\
\phi_4(F) &=0\\
\phi_4(a_i) &=i+1\\
\phi_4(P\lef Q\rig R) &=\begin{cases}
\phi_4(P) & \text{if $\phi_4(Q)=1$}\\
\phi_4(R) & \text{if $\phi_4(Q)=0$}\\
\phi_4(P)\cdot\phi_4(Q) & \text{otherwise}
\end{cases}
\end{align*}
Clearly, congruence and CP1-3 are true in this model. Furthermore, it follows that $\phi_4(F\lef a_1\rig T)=\phi_4(F)\phi(a_1)=0$. Hence,
\begin{align*}
\phi_4(T\lef(F\lef a_1\rig T)\rig T)
&=\phi_4(T)\\
&=1
\end{align*}
However,
\begin{align*}
\phi_4((T\lef F\rig T)\lef a_1\rig (T\lef T\rig T))
&=\phi_4(T\lef F\rig T)\phi_4(a_1)\\
&=2
\end{align*}
So CP4 is not true using this interpretation. Hence axiom CP4 is also independent with respect to CP. Since this is the last axiom in CP, we have now shown independence for all the axioms in CP. Consequently, there are no redundant axioms in CP. In the next section we will be looking at an extension of CP i.e., the axiomatization of $rp$-congruence.

\section{Independence of $\text{CP}_{rp}$}
The axiomatization of $rp$-congruence is an extension of CP with the following axioms:
\[
\begin{array}{lrcl}
(\CPrp1) & (x\lef a\rig y)\lef a\rig z & = & (x\lef a\rig x)\lef a\rig z\\
(\CPrp2) & x\lef a\rig (y\lef a\rig z) & = & x\lef a\rig(z\lef a\rig z)
\end{array}
\]
Note that these are actually axiom schemes i.e., for each $a\in A$ we have an axiom CPrp1 and CPrp2. Since we have a new set of axioms, we are required, in addition to proving the independence of the two new axioms \CPrp1 and \CPrp2, to prove that CP1-4 is independent with respect to this new set of axioms. Fortunately, it is possible to reuse the models used in the previous section i.e., for the cases CP1-4 the same models are taken. It then suffices for these cases to show that \CPrp1 and \CPrp2 are true in these models.

We start by taking the same model as we did in the previous section for proving the independence of axiom \CP1, and then checking if it models \CPrp1 and \CPrp2.
\begin{align*}
\phi_1((P\lef a\rig Q)\lef a\rig R)
&=\phi_1(a)\vee\phi(R)\\
&=\phi_1((P\lef a\rig P)\lef a\rig R)\\\\
\phi_1(P\lef a\rig(Q\lef a\rig R))
&=\phi_1(a)\vee\phi_1(Q\lef a\rig R)\\
&=\phi_1(a)\vee(\phi_1(a)\vee\phi_1(R))\\
&=\phi_1(a)\vee\phi_1(R\lef a\rig R)\\
&=\phi_1(P\lef a\rig(R\lef a\rig R))
\end{align*}

Repeat this procedure for the models given for the independence-models of \CP2, \CP3 and \CP4.
\begin{align*}
\phi_2((P\lef a\rig Q)\lef a\rig R)
&=\phi_2(P\lef a\rig Q)\wedge\phi_2(a)\\
&=(\phi_2(P)\wedge\phi_2(a))\wedge\phi_2(a)\\
&=\phi_2(P\lef a\rig P)\wedge\phi_2(a)\\
&=\phi_2((P\lef a\rig P)\lef a\rig R)\\\\
\phi_2(P\lef a\rig(Q\lef a\rig R))
&=\phi_2(P)\wedge\phi_2(a)\\
&=\phi_2(P\lef a\rig(R\lef a\rig R))
\end{align*}

We only show that \CPrp1 holds in the model given for case \CP3 because the proof that \CPrp2 is true in this model is symmetric to that of \CPrp1.
\begin{align*}
\phi_3((P\lef a\rig Q)\lef a\rig R)
&=\begin{cases}
\phi_3(P\lef a\rig Q) & \text{if $\phi_3(a)\le 1$}\\
\phi_3(R) & \text{if $\phi_3(a)> 1$}
\end{cases}\\
&=\begin{cases}
\phi_3(P) & \text{if $\phi_3(a)\le 1$}\\
\phi_3(R) & \text{if $\phi_3(a)> 1$}
\end{cases}\\
&=\begin{cases}
\phi_3(P\lef a\rig P) & \text{if $\phi_3(a)\le 1$}\\
\phi_3(R) & \text{if $\phi_3(a)> 1$}
\end{cases}\\
&=\phi_3((P\lef a\rig P)\lef a\rig R)
\end{align*}

The following shows that CP4 is also independent in $\text{CP}_{rp}$.
\begin{align*}
\phi_4((P\lef a\rig Q)\lef a\rig R)
&= \phi_4(P\lef a\rig Q)\phi_4(a)\\
&= \phi_4(P)\phi_4(a)\phi_4(a)\\
&= \phi_4(P\lef a\rig P)\phi_4(a)\\
&= \phi_4((P\lef a\rig P)\lef a\rig R)\\\\
\phi_4(P\lef a\rig(Q\lef a\rig R))
&= \phi_4(P)\phi(a)\\
&= \phi_4(P\lef a\rig(Q\lef a\rig R))
\end{align*}

The model for showing independence of CPrp1 is based on the reactive valuation variety that satisfies
\[
y_a(x)=F\qquad\Longrightarrow\qquad y_a(\dd a(x))=F
\]
We call this variety $rp1$. By definition, this is a subvariety of the variety with free reactive valuations. Thus, by soundness of free reactive valuations, it follows that the resulting congruence $=_{rp1}$ (constructed similarly as $=_{fr}$, $=_{rp}$, etc.) is a model for CP. Hence, we do not have to check whether CP1-4 are true in this model.

If $a/H=T$, it is possible that $a/H\ne a/\dd a(H)$ for some $\mathbb{A}\in rp1$ and $H\in\mathbb{A}$. Consequently,
\begin{align*}
((T\lef a\rig F)\lef a\rig F)/H
&=\begin{cases}
T & \text{if $a/H=T$ and $a/\dd a(H)=T$}\\
F & \text{if $a/H=T$ and $a/\dd a(H)=F$}\\
F & \text{if $a/H=F$}
\end{cases}\\
&\ne\begin{cases}
T & \text{if $a/H=T$ and $a/\dd a(H)=T$}\\
T & \text{if $a/H=T$ and $a/\dd a(H)=F$}\\
F & \text{if $a/H=F$}
\end{cases}\\
&=((T\lef a \rig T)\lef a\rig F)/H
\end{align*}
Thus, CPrp1 is not true. However, CPrp2 is.
\begin{align*}
(P\lef a\rig(Q\lef a\rig R))/H
&=\begin{cases}
P/\dd a(H) & \text{if $a/H=T$}\\
(Q\lef a\rig R)/\dd a(H) & \text{if $a/H=F$}\\
\end{cases}\\
&=\begin{cases}
P/\dd a(H) & \text{if $a/H=T$}\\
Q/\dd a(\dd a(H)) & \text{if $a/H=F$ and $a/\dd a(H)=T$}\\
R/\dd a(\dd a(H)) & \text{if $a/H=F$ and $a/\dd a(H)=F$}
\end{cases}\\
&=\begin{cases}
P/\dd a(H) & \text{if $a/H=T$}\\
Q/\dd a(\dd a(H)) & \text{if $a/H=F$ and $F=T$}\\
R/\dd a(\dd a(H)) & \text{if $a/H=F$ and $F=F$}
\end{cases}\\
&=\begin{cases}
P/\dd a(H) & \text{if $a/H=T$}\\
R/\dd a(\dd a(H)) & \text{if $a/H=F$ and $F=T$}\\
R/\dd a(\dd a(H)) & \text{if $a/H=F$ and $F=F$}
\end{cases}\\
&=\begin{cases}
P/\dd a(H) & \text{if $a/H=T$}\\
R/\dd a(\dd a(H)) & \text{if $a/H=F$ and $a/\dd a(H)=T$}\\
R/\dd a(\dd a(H)) & \text{if $a/H=F$ and $a/\dd a(H)=F$}
\end{cases}\\
&=(P\lef a\rig(R\lef a\rig R)/H
\end{align*}
\begin{align*}
\dd{(P\lef a\rig(Q\lef a\rig R))}(H)
&=\begin{cases}
\dd P(\dd a(H)) & \text{if $a/H=T$}\\
\dd Q(\dd a(\dd a(H))) & \text{if $a/H=F$ and $a/\dd a(H)=T$}\\
\dd R(\dd a(\dd a(H))) & \text{if $a/H=F$ and $a/\dd a(H)=F$}\\
\end{cases}\\
&=\begin{cases}
\dd P(\dd a(H)) & \text{if $a/H=T$}\\
\dd R(\dd a(\dd a(H))) & \text{if $a/H=F$ and $a/\dd a(H)=T$}\\
\dd R(\dd a(\dd a(H))) & \text{if $a/H=F$ and $a/\dd a(H)=F$}\\
\end{cases}\\
&=\dd{(P\lef a\rig(R\lef a\rig R))}(H)
\end{align*}
Note that both $a/H=F$ and $a/\dd a(H)=T$ is impossible. Hence, we can replace $\dd Q(\dd a(\dd a(H)))$ with $\dd R(\dd a(\dd a(H)))$ in the above derivation.

The proof for showing independence of CPrp2 is symmetric to the one in CPrp1, using the reactive valuation variety that satisfies
\[
y_a(x)=T\qquad\Longrightarrow\qquad y_a(\dd a(x))=T
\]
We call this variety $rp2$ and we will use this variety in the next section.

\section{Independence of $\text{CP}_{cr}$}
The axiomatization of $cr$-congruence consists of CP and the following axioms
\[
\begin{array}{lrcl}
(\CPcr1) & (x\lef a\rig y)\lef a\rig z & = & x\lef a\rig z\\
(\CPcr2) & x\lef a\rig (y\lef a\rig z) & = & x\lef a\rig z
\end{array}
\]
Like in the previous section, it is possible to reuse the models given for CP, and just show that \CPcr1 and \CPcr2 are true in these models. Starting with \CP1:
\begin{align*}
\phi_1((P\lef a\rig Q)\lef a\rig R)
&=\phi_1(a)\vee\phi_1(R)\\
&=\phi_1(P\lef a\rig R)\\\\
\phi_1(P\lef a\rig(Q\lef a\rig R))
&=\phi_1(a)\vee\phi_1(Q\lef a\rig R)\\
&=\phi_1(a)\vee(\phi_1(a)\vee\phi_1(R))\\
&=\phi_1(a)\vee\phi_1(R)\\
&=\phi_1(P\lef a\rig R)
\end{align*}
For \CP2:
\begin{align*}
\phi_2((P\lef a\rig Q)\lef a\rig R)
&=\phi_2(P\lef a\rig Q)\wedge\phi_2(a)\\
&=(\phi_2(P)\wedge\phi_2(a))\wedge\phi_2(a)\\
&=\phi_2(P)\wedge\phi_2(a)\\
&=\phi_2(P\lef a\rig R)\\\\
\phi_2(P\lef a\rig(Q\lef a\rig R))
&=\phi_2(P)\wedge\phi_2(a)\\
&=\phi_2(P\lef a\rig R)
\end{align*}

Similar to the previous section we omit the proof that \CPcr2 is true in the independence-model for \CP3 as it is symmetric to that of \CPcr1.
\begin{align*}
\phi_3((P\lef a\rig Q)\lef a\rig R)
&=\begin{cases}
\phi_3(P\lef a\rig Q) & \text{if $\phi_3(a)\le 1$}\\
\phi_3(R) & \text{if $\phi_3(a)>1$}
\end{cases}\\
&=\begin{cases}
\phi_3(P) & \text{if $\phi_3(a)\le 1$}\\
\phi_3(R) & \text{if $\phi_3(a)>1$}
\end{cases}\\
&=\phi_3(P\lef a\rig R)
\end{align*}

Unfortunately, we have not been able to find a model that demonstrates the independence of CP4 in $\text{CP}_{cr}$. The model we used in previous sections does not work in this variety. For example, we have the following.
\begin{align*}
\phi_4((T\lef a_1\rig F)\lef a_1\rig T)
&= 4\\
&\ne 2\\
&=\phi_4(T\lef a\rig T)
\end{align*}
Hence, CPcr1 is not true in the resulting model, and thus $\phi_4$ does not suffice. The question whether CP4 is independent in $\text{CP}_{cr}$ remains therefore open.

In order to construct the model that shows independence of CPcr1, we take the variety of all algebras from variety $rp1$ that satisfy
\[
\dd a(\dd a(x)) =\dd a(x)
\]
for all $a\in A$. We call this variety $cr1$. It follows that there is an $\mathbb{A}\in cr1$ and $H\in\mathbb{A}$ such that 
\begin{align*}
((T\lef a\rig F)\lef a\rig T)/H
&=\begin{cases}
T/H & \text{if $a/H=T$ and $a/\dd a(H)=T$}\\
F/H & \text{if $a/H=T$ and $a/\dd a(H)=F$}\\
F/H & \text{if $a/H=F$}
\end{cases}\\
&\ne\begin{cases}
T/H & \text{if $a/H=T$}\\
F/H & \text{if $a/H=F$}
\end{cases}\\
&=(T\lef a\rig T)
\end{align*}
Checking CPcr2:
\begin{align*}
(P\lef a\rig(Q\lef a\rig R))/H
&=\begin{cases}
P/\dd a(H) & \text{if $a/H=T$}\\
Q/\dd a(\dd a(H)) & \text{if $a/H=F$ and $a/\dd a(H)=T$}\\
R/\dd a(\dd a(H)) & \text{if $a/H=F$ and $a/\dd a(H)=F$}
\end{cases}\\
&=\begin{cases}
P/\dd a(H) & \text{if $a/H=T$}\\
R/\dd a(H) & \text{if $a/H=F$}
\end{cases}\\
&=(P\lef a\rig R)/H
\end{align*}
\begin{align*}
\dd{(P\lef a\rig(Q\lef a\rig R))}(H)
&=\begin{cases}
\dd P(\dd a(H)) & \text{if $a/H=T$}\\
\dd Q(\dd a(\dd a(H))) & \text{if $a/H=F$ and $a/\dd a(H)=T$}\\
\dd R(\dd a(\dd a(H))) & \text{if $a/H=F$ and $a/\dd a(H)=F$}
\end{cases}\\
&=\begin{cases}
\dd P(\dd a(H)) & \text{if $a/H=T$}\\
\dd R(\dd a(\dd a(H))) & \text{if $a/H=F$}
\end{cases}\\
&=\begin{cases}
\dd P(\dd a(H)) & \text{if $a/H=T$}\\
\dd R(\dd a(H)) & \text{if $a/H=F$}
\end{cases}\\
&=\dd{(P\lef a\rig R)}(H)
\end{align*}

A proof of independence for CPcr2, starts by taking the variety of all algebras from variety $rp2$ (see previous section) that satisfy
\[
\dd a(\dd a(x)) = \dd a(x)
\]
for $a\in A$. The proof for showing independence of CPcr2 using this variety is symmetric to the previous proof of independence for CPcr1.

\section{Independence of $\text{CP}_{st}$}
In this section we show independence of the axioms CP2, CP3, CPstat and CPcontr.
\[
\begin{array}{lrcl}
(\CPstat) & (x\lef y\rig z)\lef u\rig v & = & (x\lef u\rig v)\lef y\rig(z\lef u\rig v)\\
(\CPcontr) & (x\lef y\rig z)\lef y\rig u &= & x\lef y\rig u
\end{array}
\]
The models we used in the previous sections to show independence of CP1 and CP4 cannot be used here because CPstat and CPcontr are not both true in these models. We give two counterexamples to show this. The first counterexample shows that CPstat is not true in the independence-model for CP1. By CPstat the terms $(T\lef T\rig T)\rig F\rig F$ and $(T\lef F\rig F)\lef T\rig (T\lef F\rig F)$ should be equal. However, this is not the case.
\begin{align*}
\phi_1((T\lef T\rig T)\lef F\rig F)
&=\phi_1(F)\vee\phi_1(F)\\
&= F
\end{align*} 
\begin{align*}
\phi_1((T\lef F\rig F)\lef T\rig(T\lef F\rig F))
&=\phi_1(T)\vee\phi_1(F)\vee\phi_1(F)\\
&=T
\end{align*}
The second counterexample shows that CPcontr is not true in the model we used for showing independence of CP4.
\begin{align*}
\phi_4(T\lef a_1\rig F)\lef a_1\rig T)
&= 4\\
&\ne 2\\
&=\phi_4(T\lef a_1\rig T)
\end{align*}
Proving independence for both CP1 and CP4 in $\text{CP}_{st}$ remains an open issue.

We can use the same models we used in the previous section for showing independence of CP2 and CP3.
\begin{align*}
\phi_2((P\lef Q\rig R)\lef S\rig V)
&=\phi_2(P)\wedge\phi_2(Q)\wedge\phi_2(S)\\
&=\phi_2(P)\wedge\phi_2(S)\wedge\phi_2(Q)\\
&=\phi_2((P\lef S\rig V)\lef Q\rig(R\lef S\rig V))\\\\
\phi_2((P\lef Q\rig R)\lef Q\rig S)
&=\phi_2(P)\wedge\phi_2(Q)\wedge\phi_2(Q)\\
&=\phi_2(P)\wedge\phi_2(Q)\\
&=\phi_2(P\lef Q\rig S)
\end{align*}

\begin{align*}
\phi_3((P\lef Q\rig R)\lef S\rig V)
&=\begin{cases}
\phi_3(P\lef Q\rig R) & \text{if $\phi_3(S)\le 1$}\\
\phi_3(V) & \text{if $\phi_3(S)>1$}
\end{cases}\\
&=\begin{cases}
\phi_3(P) & \text{if $\phi_3(S)\le 1$ and $\phi_3(Q)\le 1$}\\
\phi_3(R) & \text{if $\phi_3(S)\le 1$ and $\phi_3(Q)>1$}\\
\phi_3(V) & \text{if $\phi_3(S)>1$}
\end{cases}\\
&=\begin{cases}
\phi_3(P) & \text{if $\phi_3(S)\le 1$ and $\phi_3(Q)\le 1$}\\
\phi_3(R) & \text{if $\phi_3(S)\le 1$ and $\phi_3(Q)>1$}\\
\phi_3(V) & \text{if $\phi_3(S)>1$ and $\phi_3(Q)\le 1$}\\
\phi_3(V) & \text{if $\phi_3(S)>1$ and $\phi_3(Q)>1$}
\end{cases}\\
&=\begin{cases}
\phi_3(P\lef S\rig V) & \text{if $\phi_3(Q)\le 1$}\\
\phi_3(R\lef S\rig V) & \text{if $\phi_3(Q)>1$}
\end{cases}\\\\
\phi_3((P\lef Q\rig R)\lef Q\rig S)
&=\begin{cases}
\phi_3(P\lef Q\rig R) & \text{if $\phi_3(Q)\le 1$}\\
\phi_3(S) & \text{if $\phi_3(Q)>1$}
\end{cases}\\
&=\begin{cases}
\phi_3(P) & \text{if $\phi_3(Q)\le 1$}\\
\phi_3(S) & \text{if $\phi_3(Q)>1$}
\end{cases}\\
&=\phi_3(P\lef Q\rig S)
\end{align*}

Showing independence of CPstat requires that we define the following subvariety of $fr$ which consists of all RVAs that satisfy
\[
Q/\dd Q(x)=Q/x
\]
and
\[
\dd P(\dd P(x)) = \dd P(x)
\]
Note that this is a generalization of the variety $cr$. We show that CPstat is not true in this variety by constructing an algebra $\mathbb{A}$ and valuation $H\in\mathbb{A}$ such that $a/H=F$, $b/H=T$, $a/\dd b(H)=T$ and $b/\dd a(H)=T$. Then
\begin{align*}
((F\lef a\rig T)\lef b\rig F)/H
&=\begin{cases}
F & \text{if $b/H=T$ and $a/\dd b(H)=T$}\\
T & \text{if $b/H=T$ and $a/\dd b(H)=F$}\\
F & \text{if $b/H=F$}
\end{cases}\\
&=F\\
&\ne T\\
&=\begin{cases}
F & \text{if $a/H=T$ and $b/\dd a(H)=T$}\\
F & \text{if $a/H=T$ and $b/\dd a(H)=F$}\\
T & \text{if $a/H=F$ and $b/\dd a(H)=T$}\\
F & \text{if $a/H=F$ and $b/\dd a(H)=F$}
\end{cases}\\
&=((F\lef b\rig F)\lef a\rig(T\lef b\rig F))/H
\end{align*}
Hence, CPstat is not true. Since this is a subvariety of $fr$, by Theorem \ref{soundness cp} it suffices to prove that CPcontr is true in this model.
\begin{align*}
((P\lef Q\rig R)\lef Q\rig S)/H
&=\begin{cases}
P/\dd Q(\dd Q(H)) & \text{if $Q/H=T$ and $Q/\dd Q(H)=T$}\\
R/\dd Q(\dd Q(H)) & \text{if $Q/H=T$ and $Q/\dd Q(H)=F$}\\
S/\dd Q(H) & \text{if $Q/H=F$}
\end{cases}\\
&=\begin{cases}
P/\dd Q(H) & \text{if $Q/H=T$ and $Q/H=T$}\\
R/\dd Q(H) & \text{if $Q/H=T$ and $Q/H=F$}\\
S/\dd Q(H) & \text{if $Q/H=F$}
\end{cases}\\
&=\begin{cases}
P/\dd Q(H) & \text{if $Q/H=T$}\\
S/\dd Q(H) & \text{if $Q/H=F$}
\end{cases}\\
&=(P\lef Q\rig S)/H\\\\
\dd{((P\lef Q\rig R)\lef Q\rig S)}(H)
&=\begin{cases}
\dd P(\dd Q(\dd Q(H))) & \text{if $Q/H=T$ and $Q/\dd Q(H)=T$}\\
\dd R(\dd Q(\dd Q(H))) & \text{if $Q/H=T$ and $Q/\dd Q(H)=F$}\\
\dd S(\dd Q(H)) & \text{if $Q/H=F$}
\end{cases}\\
&=\begin{cases}
\dd P(\dd Q(H)) & \text{if $Q/H=T$}\\
\dd S(\dd Q(H)) & \text{if $Q/H=F$}
\end{cases}\\
&=\dd{(P\lef Q\rig S)}(H)
\end{align*}
So by Proposition \ref{equiv congr prop} CPcontr is true in this model. Since CP and CPcontr are true and CPstat is not, it follows that CPstat is independent in $\text{CP}_{st}$.

The model that shows independence of CPcontr has the integers as its domain. Similar to the independence-model for CP3, we assume that the set $A$ is enumerated i.e., $A=\{a_1,a_2,\ldots,a_n\}$.
\begin{align*}
\phi(T) &= 1\\
\phi(F) &= 0\\
\phi(a_i) &= i+1\\
\phi(P\lef Q\rig R) &= \phi(Q)\cdot\phi(P)+(1-\phi(Q))\cdot\phi(R)
\end{align*}
CPcontr is not true in this model:
\begin{align*}
\phi((T\lef a_1\rig F)\lef a_1\rig F)
&=\phi(a_1)\phi(T\lef a_1\rig F)+(1-\phi(a_1))\phi(F)\\
&=2(\phi(a_1)\phi(T)+(1-\phi(a_1))\phi(F))\\
&=4\\
&\ne 2\\
&=\phi(a_1)\phi(T)+(1-\phi(a_1))\phi(F)\\
&=\phi(T\lef a_1\rig F)
\end{align*}

The following derivations show that this is a model for CP1-4 and CPstat. The first three derivations are easy.
\begin{align*}
\phi(P\lef T\rig Q) &= \phi(T)\phi(P) + (1-\phi(T))\phi(Q)\\
&= 1\cdot\phi(P) + 0\cdot\phi(Q)\\
&= \phi(P)
\end{align*}

\begin{align*}
\phi(P\lef F\rig Q) &= \phi(F)\phi(P) + (1-\phi(F))\phi(Q)\\
&= 0\cdot\phi(P) + 1\cdot\phi(Q)\\
&= \phi(Q)
\end{align*}

\begin{align*}
\phi(T\lef P\rig F) &= \phi(P)\phi(T) + (1-\phi(P))\phi(F)\\
&=\phi(P)\cdot 1+ (1-\phi(P))\cdot 0\\
&=\phi(P)
\end{align*}
Checking whether CP4 and CPstat are true in this model requires some bookkeeping.
\begin{align*}
\phi(P\lef(Q\lef R\rig S)\rig V)
&= \phi(Q\lef R\rig S)\phi(P)+(1-\phi(Q\lef R\rig S))\phi(V)\\
&= (\phi(R)\phi(Q)+(1-\phi(R))\phi(S))\phi(P)\\
&\quad +(1-(\phi(R)\phi(Q)+(1-\phi(R))\phi(S)))\phi(V)\\
&= \phi(R)\phi(Q)\phi(P) + \phi(S)\phi(P) - \phi(R)\phi(S)\phi(P)\\
&\quad  + \phi(V) - \phi(R)\phi(Q)\phi(V) - \phi(S)\phi(V)\\
&\quad  + \phi(R)\phi(S)\phi(V)\\
&= \phi(R)\phi(Q)\phi(P)+\phi(R)\phi(V)-\phi(R)\phi(Q)\phi(V)\\
&\quad + \phi(S)\phi(P) + \phi(V) - \phi(S)\phi(V)\\
&\quad - \phi(R)\phi(S)\phi(P) - \phi(R)\phi(V) + \phi(R)\phi(S)\phi(V)\\
&= \phi(R)(\phi(Q)\phi(P)+(1-\phi(Q))\phi(V))\\
&\quad  + (1-\phi(R))(\phi(S)\phi(P)+(1-\phi(S))\phi(V))\\
&= \phi(R)\phi(P\lef Q\rig V)+(1-\phi(R))\phi(P\lef S\rig V)\\
&= \phi((P\lef Q\rig V)\lef R\rig(P\lef S\rig V))
\end{align*}

\begin{align*}
\phi((P\lef Q\rig R)\lef S\rig V)
&=\phi(S)(\phi(Q)\phi(P)+(1-\phi(Q))\phi(R))+(1-\phi(S))\phi(V)\\
&=\phi(S)\phi(Q)\phi(P)+\phi(S)\phi(R)-\phi(S)\phi(Q)\phi(R)\\
&\quad +\phi(V)-\phi(V)\phi(S)\\
&=\phi(Q)\phi(S)\phi(P)+\phi(S)\phi(R)+\phi(V)-\phi(S)\phi(V)\\
&\quad -\phi(Q)\phi(S)\phi(R)\\
&=\phi(Q)\phi(S)\phi(P)+\phi(Q)\phi(V)-\phi(Q)\phi(S)\phi(V)\\
&\quad +\phi(S)\phi(R)+\phi(V)-\phi(S)\phi(V)-\phi(Q)\phi(S)\phi(R)\\
&\quad -\phi(Q)\phi(V)+\phi(Q)\phi(S)\phi(V)\\
&=\phi(Q)(\phi(S)\phi(P)+(1-\phi(S))\phi(V))\\
&\quad +(1-\phi(Q))(\phi(S)\phi(R)+(1-\phi(S))\phi(V))\\
&=\phi((P\lef S\rig V)\lef Q\rig(R\lef S\rig V))
\end{align*}
Since CP and CPstat are true in the model and CPcontr is not, CPcontr is independent.

\chapter{Conclusion}
In this final chapter we give a short summary of the previous chapters, thereby listing some of the main results. Furthermore, we discuss the open issues mentioned in the previous chapters, and finally give some suggestions for future work.

\section{Summary}
Sentential logic is limited by the static behaviour of its valuations. In Chapter~1 we introduced the reader to reactive valuations. These reactive valuations, first defined by Bergstra and Ponse in \cite{main}, are an extension of normal valuations because they allow us to take the evaluation order of the expression into account. By means of a few examples we illustrated the advantages of using reactive valuations instead of normal valuations. Similarly, we also revealed some of the limitations of using reactive valuations. At the end of the introduction section, we showed that it is possible to define several classes of reactive valuations depending on their behaviour.

As motivation for looking at reactive valuations we argued that these are interesting because they can be used to model a variety of sequential processes. In the section on motivation we provided a few examples of sequential behaviour from e.g. computer science and linguistics.

Since reactive valuations are a recent invention by Berstra and Ponse, there is no directly related work on reactive valuations, besides the main paper \cite{main}. We, therefore, opted to list some broad areas which might pertain to reactive valuations e.g., non-monotonic reasoning, program semantics and many-valued logics, thereby giving a few specific examples. 

In Chapter~2 we defined the axiomatization of reactive valuation congruence, called CP, and its semantics. The underlying semantics consists of several parts. In the first part we described reactive valuation algebras (RVAs). In the second part we showed how we can compute the value of closed term $P$, given an RVA $\mathbb{A}$ and reactive valuation $H\in\mathbb{A}$. The value is denoted as $P/H$. By imposing limitations on the RVAs and their valuations we can define several varieties of RVAs, namely the varieties with free reactive valuations, repetition-proof valuations, contractive valuations and static valuations. The signature of all these varieties is the same, and consists of constants $T$ and $F$, an infinite set of variables, a finite set of atomic proposition symbols and the ternary operator $\_\lef\_\rig\_$ denoting conditional composition. Each variety has its own axiomatization, where the axiomatization CP corresponds to the variety with free reactive valuations. Given a variety $K$, we say two closed terms $P$ and $Q$ are $K$-equivalent if $P/H=Q/H$ for RVAs $\mathbb{A}\in K$ and valuations $H\in\mathbb{A}$. Unfortunately, $K$-equivalence does not necessarily have the congruence property. We, therefore, define $K$-congruence as the largest congruence contained in the $K$-equivalence relation. 

With the aim of showing soundness for the various varieties it sufficed to show that each axiom is also true under semantical congruence. So if $A=B$ is an axiom of variety $K$, we need only show that $A$ and $B$ are $K$-congruent. In these proofs we took advantage of the fact that if an axiom is sound with respect to a variety $K$ then the axiom is also sound in all subvarieties of $K$.

In order to show completeness we introduced basic forms, where each sentence is provably equal to a unique basic form. The main advantage of basic forms is that their syntactic structure is quite simple. Consequently, structural induction on the set of basic forms is relatively easy. By showing that for basic forms syntactical equality and semantic congruence coincide we were able to prove completeness. Each variety requires its own set of basic forms.

Given an axiomatization, if for all closed substitutions $\sigma$ and terms $s$ and $t$ we can derive $\sigma(s)=\sigma(t)$, we can also derive $s=t$, we say that this axiomatization is $\omega$-complete. In Chapter~3 we discussed $\omega$-completeness of CP and $\text{CP}_{st}$, the axiomatization of $st$-congruence.

Similar to the completeness proofs in Chapter~2, we defined a special set of terms, namely the set of open basic forms. As opposed to the various sets of basic forms used to prove completeness, the open basic forms may contain variables. Using these open basic forms we were able to prove $\omega$-completeness for CP.

For $\text{CP}_{st}$, we used a different approach. This approach does not rely on a set of open basic forms but on a translation between $\text{CP}_{st}$ and a specification of boolean algebra for which we know that it is $\omega$-complete. Using this translation we show that the $\omega$-completeness property transfers to $\text{CP}_{st}$. We did not show $\omega$-completeness for the other varieties.

Independence of an axiom with respect to a particular axiomatization entails that the axiom is not redundant in that set of axioms i.e., it is not derivable from the other axioms. In Chapter~4 we showed independence of axioms with respect to the various axiomatizations. Showing that an axiom $P$ is independent of a set $A$ of other axioms requires that we construct a model such that $P$ is not true in this model but the axioms in $A$ are. There were a few axioms for which we did not succeed in proving that they were independent e.g., CP4 for contractive valuations.

\section{Open issues and future work}
In the past chapters several specific open issues were mentioned. We will list and discuss these issues in this section. Afterwards, we give some general suggestions for future work on the subject of reactive valuations.

\subsection{Open issues}

The first set of open issues is mentioned in the chapter on $\omega$-completeness. We were not able to establish $\omega$-completeness for $\text{CP}_{rp}$ and $\text{CP}_{cr}$. In Chapter~3 we showed $\omega$-completeness for CP and $\text{CP}_{st}$ using two different methods.

The method we used for $\text{CP}_{st}$ involved a translation between an equivalent $\omega$-complete axiomatization and $\text{CP}_{st}$. Unfortunately, this will not work for $\text{CP}_{rp}$ and $\text{CP}_{cr}$ because finding an equivalent $\omega$-complete axiomatization for these axiomatizations is unlikely.

Consequently, the best approach seems to be the one we used for CP, where we used open basic forms. However, first attempts at using this method failed to yield a positive result. The open basic forms as defined in Definition \ref{open basic form def} have the nice property that if we substitute each variable in an open basic form with an atomic proposition we end up with a closed basic form i.e., a member of \BF\ (see Definition \ref{basic form def}). We use this property and the results we already have for \BF\ in the lemmas leading up to the $\omega$-completeness theorem. The problems arise when we define similar open basic forms for $\text{CP}_{rp}$ and $\text{CP}_{cr}$. For example, take the term
\[
(T\lef a\rig F)\lef x\rig T
\]
At first glance this seems like an excellent candidate for the set of open basic forms of both $\text{CP}_{rp}$ and $\text{CP}_{cr}$. However, if we substitute the $x$ with atomic proposition $a$, the resulting term is neither a member of $\BF_{rp}$ nor of $\BF_{cr}$. Whether it is possible to work around this problem remains to be seen. Additional tools for proving $\omega$-completeness can be found in \cite{omega}.

The second set of open issues concerns the independence of axioms. We failed to show independence of CP4 in $\text{CP}_{cr}$. For the axiomatization $\text{CP}_{st}$ we only showed independence of CP2, CP3, CPstat and CPcontr, which leaves CP1 and CP4.

We showed the independence of an axiom with respect to a particular axiomatization by constructing a model such that the axiom in question is not true but the rest of the axioms in the axiomatization are. Constructing such a model is regrettably a hit-or-miss affair and becomes increasingly more difficult as the number of axioms multiply. We can, however, eliminate some options. In Chapter~3 we used roughly three methods of model constructions. 

The first method involves using normal valuations as we know them from sentential logic. The constants $T$ and $F$ are mapped to \emph{true} and \emph{false}, and similarly the atomic propositions are mapped to either \emph{true} or \emph{false}. Conditional composition is interpreted using a combination of connectives, thereby ignoring evaluation order, e.g., $P\lef Q\rig R$ is mapped to $(Q\to P)\wedge(\neg Q\to R)$. We used such a method to prove that CP1 and CP2 are independent with respect to CP. Since there are only a small finite number of interpretations of conditional composition it is easy to check them all. We, therefore, wrote a small Prolog program that checks these interpretations given a set of axioms to model and the one it should not model. There were no interpretations that proved the independence of the aforementioned axioms.

The second method relies on constructing a variety of RVAs. We used this method to show independence of e.g. CPcr1 and CPcr2. The problem is that such a variety is by definition a subvariety of $fr$. Since we proved that the axioms in CP are sound in $fr$ (see Theorem \ref{soundness cp}), they are sound in all subvarieties of $fr$. The axioms for which we need to prove independence are all in CP, and therefore this method will fail automatically.

This leaves us with the third option of constructing an interpretation in the natural numbers with the usual operations of addition and multiplication. Whether or not this method will work remains an open question. Of course, there are many other possibilities that are not listed here e.g. an interpretation in an n-valued model that takes the evaluation history into account.

\subsection{Future work}
In Chapter~1 we discussed some possible application areas for reactive valuations. However, we did not go into great detail as to the specifics of such applications, and more importantly what is to be gained by the use of reactive valuations. This search for specific applications might also yield new and interesting varieties of RVAs. 

Chapter~1 also contained a discussion on related work in which we mentioned that besides the main reference \cite{main} there is no related work directly pertaining to reactive valuations. We, therefore, listed some areas of interest with possible connections to reactive valuations. These and other areas warrant a more in-depth study, which may involve translations between varieties and other logics. For example, in Chapter~3 we showed a translation between the variety with static valuations and boolean algebra.

Proposition \ref{equiv congr prop} was, despite its apparent simplicity and usefulness, discovered towards the end of writing this thesis. This proposition which given variety $K$ describes the relationship between $K$-equivalence and $K$-congruence, is used extensively in the various soundness proofs throughout this thesis. We, however, have not yet fully taken advantage of this proposition in proving completeness. We feel that this proposition will in all likelihood simplify the proofs of some of the crucial lemmas we need for completeness.

Furthermore, we used basic forms to prove completeness. Alternatively, we can define a term rewriting system with convenient normal forms for each variety, and use that to prove completeness. In Appendix B we give an example of such a term rewriting system. 

In this thesis we gave some suggestions for future research. Of course, these suggestions are not exhaustive as there are many other options for further research not mentioned here e.g., results in expressivity and complexity.

On a final note, during the writing of this thesis a new paper on reactive valuations by Bergstra and Ponse appeared\footnote{In fact the main reference \cite{main} was a prior draft for this paper.}, see \cite{future}, which would be a fitting starting point for any future research.

\appendix

\chapter{Characterization of CP+CP5}

\section{Non-replicating valuations}
In this appendix we define an additional variety, which uses non-replicating valuations. It is axiomatized by CP plus the CP5 axiom,
\[
\begin{array}{lrcl}
\text{(CP5)} & x\lef y\rig (z\lef u \rig(v\lef y\rig w)) & = & x\lef y\rig (z\lef u\rig w)
\end{array}
\]
The CP5 axiom is taken from the appendix of \cite{main}. In that appendix Bergstra and Ponse show that the symmetric version of this axiom can be derived from CP+CP5:
\begin{align*}
((x\lef y\rig z)\lef u\rig v)\lef y\rig w
&= w\lef\neg y\rig (v\lef \neg u\rig(z\lef\neg y\rig x))\\
&= w\lef\neg y\rig(v\lef \neg u\rig x)\\
&=(x\lef u\rig v)\lef y\rig w
\end{align*}
with $\neg x\syn F\lef x\rig T$.

The variety of RVAs with non-replicating valuations consists of all RVAs that satisfy the
equations:
\[
y_a(x) = y_a(\dd P(\dd a(x)))
\]
and
\[
\dd a(\dd P(\dd a(x))) = \dd P(\dd a(x))
\]
for all $a\in A$ and $P\in A\cup\{T,F\}$. We call this variety $nr$.

The following lemmas show that the above equations imply their more general versions. Note that the $P$ in the next lemma ranges over all closed terms not just $A\cup\{T,F\}$.

\begin{lem}\label{char lem one}
For all closed terms $P$,
\[
\dd a(\dd P(\dd a(x))) = \dd P(\dd a(x))
\]
\end{lem}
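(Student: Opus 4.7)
The plan is to proceed by structural induction on the closed term $P$, fixing throughout an arbitrary $\mathbb{A}\in nr$ and $x\in\mathbb{A}$. The base cases $P\in A\cup\{T,F\}$ are exactly the defining axiom of $nr$, noting that $\dd T$ and $\dd F$ act as the identity, so the $T$ and $F$ instances collapse to $\dd a(\dd a(x))=\dd a(x)$.

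For the inductive step, suppose $P\syn P_1\lef P_2\rig P_3$ and assume the claim for $P_1$, $P_2$ and $P_3$. By the definition of the generalized derivative,
\[
\dd P(\dd a(x))=\begin{cases}\dd{P_1}(\dd{P_2}(\dd a(x)))&\text{if }P_2/\dd a(x)=T,\\ \dd{P_3}(\dd{P_2}(\dd a(x)))&\text{if }P_2/\dd a(x)=F,\end{cases}
\]
so it suffices to show, for each $i\in\{1,3\}$, that $\dd a(\dd{P_i}(\dd{P_2}(\dd a(x))))=\dd{P_i}(\dd{P_2}(\dd a(x)))$.

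The crucial move is to apply the induction hypothesis to $P_2$ first: it gives $\dd a(\dd{P_2}(\dd a(x)))=\dd{P_2}(\dd a(x))$, so the valuation $y:=\dd{P_2}(\dd a(x))$ is a fixed point of $\dd a$, i.e.\ $y=\dd a(y)$. Rewriting the inner $y$ as $\dd a(y)$ and then invoking the induction hypothesis on $P_i$ yields
\[
\dd a(\dd{P_i}(y))=\dd a(\dd{P_i}(\dd a(y)))=\dd{P_i}(\dd a(y))=\dd{P_i}(y),
\]
which closes the induction.

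The only real obstacle is recognising that the induction hypothesis must be fired on the antecedent $P_2$ first, in order to turn $\dd{P_2}(\dd a(x))$ into a valuation that is already a fixed point of $\dd a$; once this is in hand, the induction hypothesis on $P_1$ and $P_3$ applies immediately, and the case analysis on the value of $P_2/\dd a(x)$ is entirely routine.
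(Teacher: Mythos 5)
Your proof is correct and follows essentially the same route as the paper: both arguments reduce the inductive step to the two branches $\dd{P_i}(\dd{P_2}(\dd a(x)))$, use the induction hypothesis on $P_2$ to rewrite the inner valuation $\dd{P_2}(\dd a(x))$ as $\dd a(\dd{P_2}(\dd a(x)))$, and then apply the induction hypothesis on $P_1$ and $P_3$ before undoing the rewrite. Your explicit framing of $\dd{P_2}(\dd a(x))$ as a fixed point of $\dd a$ is just a cleaner packaging of the same three $=_{IH}$ steps the paper performs inline.
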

\begin{proof} By structural induction on $P$:
\begin{itemize}
\item If $P\in A\cup\{T,F\}$ then $\dd a(\dd T(\dd a(H)))=\dd T(\dd a(H))$ follows directly from the definition of $L$.
\item Suppose $P\syn P_1\lef P_2\rig P_3$. Then
\begin{align*}
\dd a(\dd{(P_1\lef P_2\rig P_3)}(\dd a(H)))
&=\begin{cases}
\dd a(\dd{P_1}(\dd{P_2}(\dd a(H)))) & \text{if $P_2/\dd a(H)=T$}\\
\dd a(\dd{P_3}(\dd{P_2}(\dd a(H)))) & \text{if $P_2/\dd a(H)=F$}\\
\end{cases}\\
&=_{IH}\begin{cases}
\dd a(\dd{P_1}(\dd a(\dd{P_2}(\dd a(H))))) & \text{if $P_2/\dd a(H)=T$}\\
\dd a(\dd{P_3}(\dd a(\dd{P_2}(\dd a(H))))) & \text{if $P_2/\dd a(H)=F$}\\
\end{cases}\\
&=_{IH}\begin{cases}
\dd{P_1}(\dd a(\dd{P_2}(\dd a(H)))) & \text{if $P_2/\dd a(H)=T$}\\
\dd{P_3}(\dd a(\dd{P_2}(\dd a(H)))) & \text{if $P_2/\dd a(H)=F$}\\
\end{cases}\\
&=_{IH}\begin{cases}
\dd{P_1}(\dd{P_2}(\dd a(H))) & \text{if $P_2/\dd a(H)=T$}\\
\dd{P_3}(\dd{P_2}(\dd a(H))) & \text{if $P_2/\dd a(H)=F$}\\
\end{cases}\\
&=\dd{P_1\lef P_2\rig P_3}(\dd a(H))
\end{align*}
\end{itemize}
\end{proof}

\begin{lem}\label{char lem two}
For all closed terms $P$ and $Q$,
\[
P/\dd Q(\dd P(x)) = P/x
\]
\end{lem}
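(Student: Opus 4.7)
The plan is to proceed by structural induction on the closed term $P$, with an auxiliary sub-induction on $Q$ when $P$ is atomic. The base cases $P\syn T$ and $P\syn F$ are immediate, since $\dd T(x)=x=\dd F(x)$ and $T/y=T$, $F/y=F$ for any valuation $y$. For the base case $P\syn a$ with $a\in A$, I would induct on $Q$: when $Q\in A\cup\{T,F\}$ the result is exactly the defining equation $y_a(x)=y_a(\dd Q(\dd a(x)))$ of the variety $nr$. When $Q\syn Q_1\lef Q_2\rig Q_3$, I would expand $a/\dd Q(\dd a(x))$ according to the value of $Q_2/\dd a(x)$, then use Lemma~\ref{char lem one} to rewrite the inner $\dd{Q_2}(\dd a(x))$ as $\dd a(\dd{Q_2}(\dd a(x)))$. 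This creates a subterm of the form $\dd{Q_i}(\dd a(\cdot))$ to which the inductive hypothesis on $Q_i$ applies, and one further application of the inductive hypothesis (on $Q_2$) then collapses the whole expression to $a/x$.

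For the inductive case $P\syn P_1\lef P_2\rig P_3$, the key observation is that sequential composition neatly packages nested derivatives: for any closed $R,S$ one has $\dd{R\circ S}(z)=\dd S(\dd R(z))$, since $R\circ S\syn S\lef R\rig S$ and both branches of this conditional composition yield $\dd S(\dd R(z))$. Using this, I would first show $P_2/\dd Q(\dd P(H))=P_2/H$ by case analysis on $P_2/H$. When $P_2/H=T$, $\dd P(H)=\dd{P_1}(\dd{P_2}(H))$, so $\dd Q(\dd P(H))=\dd{P_1\circ Q}(\dd{P_2}(H))$, and the inductive hypothesis on $P_2$, applied with closed term $P_1\circ Q$ and valuation $H$, finishes this step; the $P_2/H=F$ case is symmetric with $P_3$. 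Having equated $P_2/\dd Q(\dd P(H))$ with $P_2/H$, the outer conditional composition in $P/\dd Q(\dd P(H))$ selects the same branch as in $P/H$. In the $P_2/H=T$ branch I then need $P_1/\dd{P_2}(\dd Q(\dd P(H)))=P_1/\dd{P_2}(H)$; substituting $\dd P(H)=\dd{P_1}(\dd{P_2}(H))$ and rewriting $\dd{P_2}(\dd Q(\cdot))$ as $\dd{Q\circ P_2}(\cdot)$ turns this into the inductive hypothesis on $P_1$ applied to the closed term $Q\circ P_2$ and valuation $\dd{P_2}(H)$; the $P_2/H=F$ branch is symmetric with $P_3$.

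The main obstacle is this inductive case: each individual manipulation is simple, but the correct pattern of nesting and repackaging via $\circ$ must be spotted so that the inductive hypothesis fires on the right subterm. Once one sees that sequential composition is the right bookkeeping device for turning $\dd Q\circ\dd{P_i}$ into $\dd{P_i\circ Q}$ (and $\dd{P_2}\circ\dd Q$ into $\dd{Q\circ P_2}$), the rest is a routine two-level case analysis driven by $P_2/H$, with Lemma~\ref{char lem one} handling the atomic base case.
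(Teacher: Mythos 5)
Your proposal is correct and follows essentially the same route as the paper's proof: structural induction on $P$ with a sub-induction on $Q$ in the atomic case (using Lemma~\ref{char lem one} to insert the extra $\dd a$ so the inductive hypothesis fires), and in the composite case the same repackaging of nested derivatives via sequential composition, rewriting $\dd{P_2}(\dd Q(\cdot))$ as $\dd{(Q\circ P_2)}(\cdot)$ and $P_2/\dd Q(\dd{P_i}(\cdot))$ as $P_2/\dd{(P_i\circ Q)}(\cdot)$ before applying the inductive hypotheses on $P_2$ and $P_1,P_3$. The only cosmetic difference is that you first settle the value of $P_2/\dd Q(\dd P(H))$ and then treat the branches, whereas the paper carries everything through a single four-way case analysis.
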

\begin{proof}
By structural induction on $P$:
\begin{itemize}
\item The case for $P\syn T$ or $P\syn F$ is trivial because $T/H=T$ and
$F/H=F$ for any valuation $H$.
\item Suppose $P\syn a$ for $a\in A$. Then by structural induction on $Q$.
\begin{itemize}
\item The case for $Q\in A\cup\{T,F\}$ follows directly from
the definition of variety $L$.
\item Suppose $Q\syn Q_1\lef Q_2\rig Q_3$. Then
\begin{align*}
a/\dd{(Q_1\lef Q_2\rig Q_3)}(\dd a(H))
&=\begin{cases}
a/\dd{Q_1}(\dd{Q_2}(\dd a(H))) & \text{if $Q_2/\dd a(H)=T$}\\
a/\dd{Q_3}(\dd{Q_2}(\dd a(H))) & \text{if $Q_2/\dd a(H)=F$}\\
\end{cases}\\
&=\begin{cases}
a/\dd{Q_1}(\dd a(\dd{Q_2}(\dd a(H)))) & \text{if $Q_2/\dd a(H)=T$}\\
a/\dd{Q_3}(\dd a(\dd{Q_2}(\dd a(H)))) & \text{if $Q_2/\dd a(H)=F$}\\
\end{cases}\\
&=_{IH}\begin{cases}
a/\dd{Q_2}(\dd a(H)) & \text{if $Q_2/\dd a(H)=T$}\\
a/\dd{Q_2}(\dd a(H)) & \text{if $Q_2/\dd a(H)=F$}\\
\end{cases}\\
&=_{IH}\begin{cases}
a/H & \text{if $Q_2/\dd a(H)=T$}\\
a/H & \text{if $Q_2/\dd a(H)=F$}\\
\end{cases}\\
&= a/H\\
\end{align*}
In the second step of this derivation Lemma \ref{char lem
one} is applied i.e., we substitute $\dd{Q_2}(\dd a(H))$ with $\dd
a(\dd{Q_2}(\dd a(H)))$.
\end{itemize}
\item Suppose $P\syn P_1\lef P_2\rig P_3$. Then
\begin{align*}
&(P_1\lef P_2\rig P_3)/\dd Q(\dd{(P_1\lef P_2\rig P_3)}(H))\\
&=\begin{cases}
P_1/\dd{P_2}(\dd Q(\dd{(P_1\lef P_2\rig P_3)}(H))) & \text{if $P_2/\dd Q(\dd{(P_1\lef P_2\rig P_3)}(H))=T$}\\
P_3/\dd{P_2}(\dd Q(\dd{(P_1\lef P_2\rig P_3)}(H))) & \text{if $P_2/\dd Q(\dd{(P_1\lef P_2\rig P_3)}(H))=F$}\\
\end{cases}\\\\
&=\begin{cases}
P_1/\dd{P_2}(\dd Q(\dd{P_1}(\dd{P_2}(H)))) & \text{if $P_2/\dd Q(\dd{P_1}(\dd{P_2}(H)))=T$ and $P_2/H=T$}\\
P_1/\dd{P_2}(\dd Q(\dd{P_3}(\dd{P_2}(H)))) & \text{if $P_2/\dd Q(\dd{P_3}(\dd{P_2}(H)))=T$ and $P_2/H=F$}\\
P_3/\dd{P_2}(\dd Q(\dd{P_1}(\dd{P_2}(H)))) & \text{if $P_2/\dd Q(\dd{P_1}(\dd{P_2}(H)))=F$ and $P_2/H=T$}\\
P_3/\dd{P_2}(\dd Q(\dd{P_3}(\dd{P_2}(H)))) & \text{if $P_2/\dd Q(\dd{P_3}(\dd{P_2}(H)))=F$ and $P_2/H=F$}\\
\end{cases}\\\\
&=\begin{cases}
P_1/\dd{(Q\circ P_2)}(\dd{P_1}(\dd{P_2}(H))) & \text{if $P_2/\dd{(P_1\circ Q)}(\dd{P_2}(H))=T$ and $P_2/H=T$}\\
P_1/\dd{P_2}(\dd Q(\dd{P_3}(\dd{P_2}(H)))) & \text{if $P_2/\dd{(P_3\circ Q)}(\dd{P_2}(H))=T$ and $P_2/H=F$}\\
P_3/\dd{P_2}(\dd Q(\dd{P_1}(\dd{P_2}(H)))) & \text{if $P_2/\dd{(P_1\circ Q)}(\dd{P_2}(H))=F$ and $P_2/H=T$}\\
P_3/\dd{(Q\circ P_2)}(\dd{P_3}(\dd{P_2}(H))) & \text{if $P_2/\dd{(P_3\circ Q)}(\dd{P_2}(H))=F$ and $P_2/H=F$}\\
\end{cases}\\\\
&=_{IH}\begin{cases}
P_1/\dd{P_2}(H) & \text{if $P_2/H=T$ and $P_2/H=T$}\\
P_1/\dd{P_2}(\dd Q(\dd{P_3}(\dd{P_2}(H)))) & \text{if $P_2/H=T$ and $P_2/H=F$}\\
P_3/\dd{P_2}(\dd Q(\dd{P_1}(\dd{P_2}(H)))) & \text{if $P_2/H=F$ and $P_2/H=T$}\\
P_3/\dd{P_2}(H) & \text{if $P_2/H=F$ and $P_2/H=F$}\\
\end{cases}\\\\
&=\begin{cases}
P_1/\dd{P_2}(H) & \text{if $P_2/H=T$}\\
P_3/\dd{P_2}(H) & \text{if $P_2/H=F$}\\
\end{cases}\\\\
&=(P_1\lef P_2\rig P_3)/H
\end{align*}
Note that sequential composition $\circ$ is defined in Chapter~2, and $P\circ Q=Q\lef P\rig Q$.
\end{itemize}
\end{proof}

\begin{lem}\label{char lem three}
For all closed terms $P$ and $Q$,
\[
\dd P(\dd Q(\dd P(x))) = \dd Q(\dd P(x))
\]
\end{lem}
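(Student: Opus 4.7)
The plan is to proceed by structural induction on $P$, closely mirroring the pattern used in the proof of Lemma \ref{char lem two}. The base case $P \in \{T, F\}$ is immediate since $\dd T$ and $\dd F$ act as the identity on valuations, and the base case $P \syn a$ for $a \in A$ is exactly the statement of Lemma \ref{char lem one} with the bound closed term renamed to $Q$.

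For the inductive step, assume $P \syn P_1 \lef P_2 \rig P_3$ and that the claim holds for $P_1, P_2, P_3$. Fix an arbitrary valuation $H$ and write $H^* = \dd Q(\dd P(H))$. The first step is to show that $\dd P(H^*)$ selects the same branch as $\dd P(H)$, i.e. $P_2/H^* = P_2/H$. Unwinding $\dd P(H) = \dd{P_i}(\dd{P_2}(H))$ for the appropriate $i \in \{1,3\}$ determined by $P_2/H$, we have
\[
H^* = \dd Q(\dd{P_i}(\dd{P_2}(H))) = \dd{(P_i \circ Q)}(\dd{P_2}(H)),
\]
using the sequential-composition identity $\dd{(R \circ S)}(H) = \dd S(\dd R(H))$, which is immediate from $R \circ S \syn S \lef R \rig S$ and case-analysis on $R/H$. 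Lemma \ref{char lem two} applied to $P_2$ with the closed term $P_i \circ Q$ then yields $P_2/H^* = P_2/H$, so the branch is preserved.

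Next I would simplify $\dd{P_2}(H^*)$. Writing $H^* = \dd{(P_i \circ Q)}(\dd{P_2}(H))$ as above, the induction hypothesis applied to $P_2$ with the closed term $P_i \circ Q$ gives $\dd{P_2}(\dd{(P_i \circ Q)}(\dd{P_2}(H))) = \dd{(P_i \circ Q)}(\dd{P_2}(H)) = H^*$. Consequently,
\[
\dd P(H^*) = \dd{P_i}(\dd{P_2}(H^*)) = \dd{P_i}(H^*) = \dd{P_i}(\dd Q(\dd{P_i}(\dd{P_2}(H)))).
\]
A final application of the induction hypothesis, now to $P_i$ with closed term $Q$ and valuation $\dd{P_2}(H)$, collapses this to $\dd Q(\dd{P_i}(\dd{P_2}(H))) = \dd Q(\dd P(H)) = H^*$, completing the inductive step.

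The main obstacle I anticipate is keeping the bookkeeping straight in the repackaging step: the induction hypothesis and Lemma \ref{char lem two} are both stated with a single closed term sandwiched between two identical derivatives, whereas the expressions arising during the calculation are longer compositions of three derivatives. The trick $\dd Q \circ \dd R = \dd{(R \circ Q)}$ is what allows us to fold these compositions back into the required shape, and choosing the right $R$ and $Q$ at each invocation is where care is needed. Once that repackaging is performed, the case analysis on $P_2/H \in \{T, F\}$ is symmetric and the argument closes without further complications.
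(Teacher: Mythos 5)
Your proof is correct and follows essentially the same route as the paper's: structural induction on $P$, with the base case $P\syn a$ discharged by Lemma \ref{char lem one}, branch preservation obtained from Lemma \ref{char lem two} via the repackaging $\dd Q\circ\dd{P_i}=\dd{(P_i\circ Q)}$, and the induction hypothesis applied to $P_2$ and to the selected consequent $P_i$. The only difference is presentational — you isolate branch preservation and the identity $\dd{P_2}(H^*)=H^*$ up front instead of carrying the paper's four-way case table, which is a tidier but not substantively different argument.
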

\begin{proof}
By structural induction on $P$:
\begin{itemize}
\item $P$ is $T$ or $F$; trivial.
\item $P\in A$; see Lemma \ref{char lem one}.
\item Suppose $P\syn P_1\lef P_2\rig P_3$. Then
\begin{align*}
&\dd{(P_1\lef P_2\rig P_3)}(\dd Q(\dd{(P_1\lef P_2\rig P_3)}(H)))\\
&=\begin{cases}
\dd{P_1}(\dd{P_2}(\dd Q(\dd{(P_1\lef P_2\rig P_3)}(H)))) & \text{if $P_2/\dd Q(\dd{(P_1\lef P_2\rig P_3)}(H))=T$}\\
\dd{P_3}(\dd{P_2}(\dd Q(\dd{(P_1\lef P_2\rig P_3)}(H)))) & \text{if $P_2/\dd Q(\dd{(P_1\lef P_2\rig P_3)}(H))=F$}\\
\end{cases}\\\\
&=\begin{cases}
\dd{P_1}(\dd{P_2}(\dd Q(\dd{P_1}(\dd{P_2}(H))))) & \text{if $P_2/\dd Q(\dd{P_1}(\dd{P_2}(H)))=T$ and $P_2/H=T$}\\
\dd{P_1}(\dd{P_2}(\dd Q(\dd{P_3}(\dd{P_2}(H))))) & \text{if $P_2/\dd Q(\dd{P_3}(\dd{P_2}(H)))=T$ and $P_2/H=F$}\\
\dd{P_3}(\dd{P_2}(\dd Q(\dd{P_1}(\dd{P_2}(H))))) & \text{if $P_2/\dd Q(\dd{P_1}(\dd{P_2}(H)))=F$ and $P_2/H=T$}\\
\dd{P_3}(\dd{P_2}(\dd Q(\dd{P_3}(\dd{P_2}(H))))) & \text{if $P_2/\dd Q(\dd{P_3}(\dd{P_2}(H)))=F$ and $P_2/H=F$}\\
\end{cases}\\\\
&=\begin{cases}
\dd{P_1}(\dd{Q\circ P_2}(\dd{P_1}(\dd{P_2}(H)))) & \text{if $P_2/\dd{P_1\circ Q}(\dd{P_2}(H))=T$ and $P_2/H=T$}\\
\dd{P_1}(\dd{P_2}(\dd Q(\dd{P_3}(\dd{P_2}(H))))) & \text{if $P_2/\dd{P_3\circ Q}(\dd{P_2}(H))=T$ and $P_2/H=F$}\\
\dd{P_3}(\dd{P_2}(\dd Q(\dd{P_1}(\dd{P_2}(H))))) & \text{if $P_2/\dd{P_1\circ Q}(\dd{P_2}(H))=F$ and $P_2/H=T$}\\
\dd{P_3}(\dd{Q\circ P_2}(\dd{P_3}(\dd{P_2}(H)))) & \text{if $P_2/\dd{P_3\circ Q}(\dd{P_2}(H))=F$ and $P_2/H=F$}\\
\end{cases}
\end{align*}
Now it is not only possible to apply the induction hypothesis but also
Lemma \ref{char lem two} which will result in a reduction in the
number of cases:
\begin{align*}
&=_{IH}\begin{cases}
\dd{Q\circ P_2}(\dd{P_1}(\dd{P_2}(H))) & \text{if $P_2/H=T$ and $P_2/H=T$}\\
\dd{P_1}(\dd{P_2}(\dd Q(\dd{P_3}(\dd{P_2}(H))))) & \text{if $P_2/H=T$ and $P_2/H=F$}\\
\dd{P_3}(\dd{P_2}(\dd Q(\dd{P_1}(\dd{P_2}(H))))) & \text{if $P_2/H=F$ and $P_2/H=T$}\\
\dd{Q\circ P_2}(\dd{P_3}(\dd{P_2}(H))) & \text{if $P_2/H=F$ and $P_2/H=F$}\\
\end{cases}\\\\
&=\begin{cases}
\dd{Q\circ P_2}(\dd{P_1}(\dd{P_2}(H))) & \text{if $P_2/H=T$}\\
\dd{Q\circ P_2}(\dd{P_3}(\dd{P_2}(H))) & \text{if $P_2/H=F$}\\
\end{cases}\\\\
&=\begin{cases}
\dd{P_2}(\dd{P_1\circ Q}(\dd{P_2}(H))) & \text{if $P_2/H=T$}\\
\dd{P_2}(\dd{P_3\circ Q}(\dd{P_2}(H))) & \text{if $P_2/H=F$}\\
\end{cases}\\\\
&=_{IH}\begin{cases}
\dd{P_1\circ Q}(\dd{P_2}(H)) & \text{if $P_2/H=T$}\\
\dd{P_3\circ Q}(\dd{P_2}(H)) & \text{if $P_2/H=F$}\\
\end{cases}\\\\
&=_{IH}\begin{cases}
\dd Q(\dd{P_1}(\dd{P_2}(H))) & \text{if $P_2/H=T$}\\
\dd Q(\dd{P_3}(\dd{P_2}(H))) & \text{if $P_2/H=F$}\\
\end{cases}\\\\
&=\dd Q(\dd{(P_1\lef P_2\rig P_3)}(H))
\end{align*}
\end{itemize}
\end{proof}

These three lemmas will demonstrate their usefulness in the next section where we will prove soundness.

\section{Soundness}
\begin{thm}
$\text{CP+CP5}\vdash P=Q$ implies that $P=_{nr}Q$.
\end{thm}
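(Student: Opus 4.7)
The plan is to invoke the previous soundness work and reduce the problem to verifying the single new axiom CP5. Since the variety $nr$ is defined by imposing extra equations on free RVAs, every $nr$-algebra is an $fr$-algebra, so Theorem \ref{soundness cp} already gives that CP1--CP4 are sound with respect to $=_{nr}$. Hence it suffices to show that CP5 is sound, i.e.\ that for all closed terms $P,Q,R,S,V,W$ and every $\mathbb{A}\in nr$ with $H\in\mathbb{A}$, both
\[
(P\lef Q\rig(R\lef S\rig(V\lef Q\rig W)))/H=(P\lef Q\rig(R\lef S\rig W))/H
\]
and the corresponding equation for $\dd{(\cdot)}(H)$ hold; then Proposition \ref{equiv congr prop} yields CP5 $\subseteq\; =_{nr}$.

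First I would unfold both sides using the semantics. When $Q/H=T$ both sides immediately reduce to $P/\dd Q(H)$, and the derivatives to $\dd P(\dd Q(H))$, so this case is trivial. The interesting case is $Q/H=F$. On the left, after unfolding through the outer conditional and the inner conditional on $S$, the subterm $V\lef Q\rig W$ must be evaluated at the valuation $\dd S(\dd Q(H))$. Here is where I would apply Lemma \ref{char lem two} with $P=Q$: it gives $Q/\dd S(\dd Q(H))=Q/H=F$, which eliminates the $V$-branch entirely. The remaining value is $W/\dd Q(\dd S(\dd Q(H)))$, which Lemma \ref{char lem three} (with $P\coloneqq Q$ and $Q\coloneqq S$) collapses to $W/\dd S(\dd Q(H))$. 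This matches exactly the $Q/H=F$, $S/\dd Q(H)=F$ branch of the right-hand side, while the $S/\dd Q(H)=T$ branch is unchanged on both sides and already matches.

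The derivative equation is handled the same way. After unfolding $\dd{(P\lef Q\rig(R\lef S\rig(V\lef Q\rig W)))}(H)$ in the case $Q/H=F$, the $V$-branch again disappears by Lemma \ref{char lem two}, and the surviving expression $\dd W(\dd Q(\dd S(\dd Q(H))))$ simplifies to $\dd W(\dd S(\dd Q(H)))$ by Lemma \ref{char lem three}, agreeing with the right-hand side.

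The main obstacle is purely bookkeeping: keeping track of the case splits and applying the two derived lemmas at exactly the right moment to collapse the extra $V\lef Q\rig w$ layer. All the conceptual work is already packaged inside Lemmas \ref{char lem two} and \ref{char lem three}, so once those are in hand the verification is essentially mechanical and mirrors the style of the earlier soundness proofs in this chapter.
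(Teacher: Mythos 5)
Your proposal is correct and follows essentially the same route as the paper: reduce the problem to CP5 by observing that $nr$ is a subvariety of $fr$, unfold the semantics with the only interesting case being $Q/H=F$, use Lemma \ref{char lem two} to force $Q/\dd S(\dd Q(H))=Q/H=F$ (eliminating the $V$-branch) and Lemma \ref{char lem three} to collapse $\dd Q(\dd S(\dd Q(H)))$ to $\dd S(\dd Q(H))$, then repeat for the derivative and conclude via Proposition \ref{equiv congr prop}.
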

\begin{proof}
Since the axioms in CP are sound for the variety $fr$ of RVAs with free reactive valuations, and $nr$ is a subvariety of $fr$, it suffices to check CP5:
\begin{align*}
&(P\lef Q\rig(R\lef S\rig(V\lef Q\rig W)))/H\\
&=\begin{cases}
P/\dd Q(H) & \text{if $Q/H=T$}\\
R\lef S\rig(V\lef Q\rig W))/\dd Q(H) & \text{if $Q/H=F$}\\
\end{cases}\\
&=\begin{cases}
P/\dd Q(H) & \text{if $Q/H=T$}\\
R/\dd S(\dd Q(H)) & \text{if $Q/H=F$ and $S/\dd Q(H)=T$}\\
(V\lef Q\rig W)/\dd S(\dd Q(H)) & \text{if $Q/H=F$ and $S/\dd Q(H)=F$}\\
\end{cases}\\
&=\begin{cases}
P/\dd Q(H) & \text{if $Q/H=T$}\\
R/\dd S(\dd Q(H)) & \text{if $Q/H=F$ and $S/\dd Q(H)=T$}\\
V/\dd Q(\dd S(\dd Q(H))) & \text{if $Q/H=F$, $S/\dd Q(H)=F$ and $Q/\dd S(\dd Q(H))$=T}\\
W/\dd Q(\dd S(\dd Q(H))) & \text{if $Q/H=F$, $S/\dd Q(H)=F$ and $Q/\dd S(\dd Q(H))$=F}\\
\end{cases}\\
\end{align*}
Now Lemma \ref{char lem two} and Lemma \ref{char lem three} can be
applied:
\begin{align*}
&=\begin{cases}
P/\dd Q(H) & \text{if $Q/H=T$}\\
R/\dd S(\dd Q(H)) & \text{if $Q/H=F$ and $S/\dd Q(H)=T$}\\
V/\dd S(\dd Q(H)) & \text{if $Q/H=F$, $S/\dd Q(H)=F$ and $Q/H$=T}\\
W/\dd S(\dd Q(H)) & \text{if $Q/H=F$, $S/\dd Q(H)=F$ and $Q/H$=F}\\
\end{cases}\\
&=\begin{cases}
P/\dd Q(H) & \text{if $Q/H=T$}\\
R/\dd S(\dd Q(H)) & \text{if $Q/H=F$ and $S/\dd Q(H)=T$}\\
W/\dd S(\dd Q(H)) & \text{if $Q/H=F$ and $S/\dd Q(H)=F$}\\
\end{cases}\\
&= (P\lef Q\rig (R\lef S\rig W))/H
\end{align*}
Furthermore, we have
\[
\dd{(P\lef Q\rig(R\lef S\rig(V\lef Q\rig W)))}(H)
\]
\begin{align*}
&=\begin{cases}
\dd P(\dd Q(H)) & \text{if $Q/H=T$}\\
\dd R(\dd S(\dd Q(H))) & \text{if $Q/H=F$ and $S/\dd Q(H)=T$}\\
\dd V(\dd Q(\dd S(\dd Q(H)))) & \text{if $Q/H=F$, $S/\dd Q(H)=F$ and $Q/\dd S(\dd Q(H))=T$}\\
\dd W(\dd Q(\dd S(\dd Q(H)))) & \text{if $Q/H=F$, $S/\dd Q(H)=F$ and $Q/\dd S(\dd Q(H))=F$}\\
\end{cases}\\
&=\begin{cases}
\dd P(\dd Q(H)) & \text{if $Q/H=T$}\\
\dd R(\dd S(\dd Q(H))) & \text{if $Q/H=F$ and $S/\dd Q(H)=T$}\\
\dd V(\dd S(\dd Q(H))) & \text{if $Q/H=F$, $S/\dd Q(H)=F$ and $Q/H=T$}\\
\dd W(\dd S(\dd Q(H))) & \text{if $Q/H=F$, $S/\dd Q(H)=F$ and $Q/H=F$}\\
\end{cases}\\
&=\begin{cases}
\dd P(\dd Q(H)) & \text{if $Q/H=T$}\\
\dd R(\dd S(\dd Q(H))) & \text{if $Q/H=F$ and $S/\dd Q(H)=T$}\\
\dd W(\dd S(\dd Q(H))) & \text{if $Q/H=F$ and $S/\dd Q(H)=F$}\\
\end{cases}\\
&=\dd{(P\lef Q\rig(R\lef S\rig W))}(H)
\end{align*}
Hence, by Proposition \ref{equiv congr prop}, CP5 is sound.
\end{proof}

\chapter{Term rewriting system}

\section{Term rewriting for CP}
In this appendix we define a term rewriting system for CP and prove that it is convergent. For more information on term rewriting see \cite{rewriting}. We call the term rewriting system $R_{fr}$ and it is defined as follows:
\begin{align*}
x\lef T\rig y &\to x\\
x\lef F\rig y &\to y\\
T\lef x\rig F &\to x\\
x\lef(y\lef z\rig v)\rig w &\to (x\lef y\rig w)\lef z\rig(x\lef v\rig w)
\end{align*}

In the following lemma we show that $R_{fr}$ is terminating i.e., all terms can be reduced to a normal form. Note that this in itself does not guarantee that there is a unique normal form for each term.

\begin{lem}
$R_{fr}$ is terminating.
\end{lem}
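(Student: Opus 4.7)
The plan is to establish termination by exhibiting a polynomial interpretation $\phi$ from terms to the natural numbers such that (i) $\phi$ is strictly monotonic in each argument of the ternary operator and (ii) every rewrite rule strictly decreases $\phi$. Because $(\mathbb{N},<)$ is well-founded, any infinite rewrite sequence would induce an infinite strictly descending chain in $\mathbb{N}$, which is impossible; so $R_{fr}$ must be terminating. Variables are handled by letting them range over $\{n \in \mathbb{N} : n \geq 2\}$ and verifying the strict decrease for every such valuation.

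The interpretation I propose is
\[
\phi(T) = \phi(F) = 2, \qquad \phi(a) = 2 \text{ for each } a \in A, \qquad \phi(t \lef r \rig s) = \phi(r)\cdot(\phi(t) + \phi(s) + 1).
\]
Since the base values are at least $2$, an easy induction gives $\phi(t)\geq 2$ for every term, so $\phi(r)\geq 2$ and $\phi(t)+\phi(s)+1 \geq 5$. The product $\phi(r)(\phi(t)+\phi(s)+1)$ is therefore strictly increasing in each of $\phi(t)$, $\phi(r)$, $\phi(s)$, which gives the strict monotonicity needed to lift a strict decrease at a redex to a strict decrease of the surrounding term.

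Next I check each rule. For CP1, $\phi(x \lef T \rig y) = 2(\phi(x)+\phi(y)+1) \geq 2\phi(x)+6 > \phi(x)$; CP2 is symmetric; CP3 gives $\phi(T \lef x \rig F) = 5\phi(x) > \phi(x)$. The real obstacle is CP4. Setting $X = \phi(x)+\phi(w)+1$ and $Y = \phi(y)+\phi(v)$, the left-hand side evaluates to $\phi(z)(Y+1)X = \phi(z)(XY + X)$ and the right-hand side to $\phi(z)(XY + 1)$, so the difference is $\phi(z)(X-1) = \phi(z)(\phi(x)+\phi(w)) \geq 8 > 0$. The crucial feature is the additive $1$ inside the bracket: without it both sides collapse to $\phi(z)XY$ and CP4 preserves $\phi$ exactly, which is what rules out the two most natural candidates $\phi(r)(\phi(t)+\phi(s))$ and $\phi(t)+\phi(r)+\phi(s)$.

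Finding a workable interpretation is really the only delicate step: CP1--CP3 reduce the term size and impose only mild constraints, while CP4 duplicates $x$ and $w$ and therefore precludes any purely additive scheme, and the rearrangement it performs is symmetric enough to defeat purely multiplicative schemes. Once $\phi$ is in hand, termination of $R_{fr}$ is immediate from well-foundedness of $\mathbb{N}$ and the strict monotonicity observed above, without any further machinery.
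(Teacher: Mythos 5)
Your proof is correct, and it follows the same overall template as the paper's: interpret terms into $(\mathbb{N},<)$ and show that every rule strictly decreases the interpretation, so that an infinite rewrite sequence would yield an infinite descent. The interpretations differ, however. The paper uses the norm $|t\lef r\rig s| = 2|r| + \max(|t|,|s|)$ with constants, atoms and variables all getting norm $1$, and verifies the decrease rule by rule much as you do. Your polynomial interpretation $\phi(t\lef r\rig s)=\phi(r)\cdot(\phi(t)+\phi(s)+1)$ with base value $2$ buys you something the paper's norm does not: strict monotonicity in \emph{all three} argument positions, which is exactly what is needed to lift a strict decrease at a redex to a strict decrease of the enclosing term. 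The paper's norm is only weakly monotone in the two consequent positions (increasing $|t|$ need not change $\max(|t|,|s|)$ when $|s|$ dominates), so its closing claim that $t\to r$ implies $|t|>|r|$ is not actually justified for a reduction performed inside a consequent whose sibling has larger norm; your explicit monotonicity check closes that gap. The only cost is the more delicate arithmetic for CP4, which you carry out correctly: the additive $1$ inside the bracket is precisely what separates the two sides by $\phi(z)(\phi(x)+\phi(w))>0$, and your observation that the purely additive and the unshifted multiplicative candidates fail on CP4 is accurate.
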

\begin{proof}
In order to show that $R_{fr}$ is terminating, we are required to prove that an infinite derivation $t_1\to t_2\to t_3\to\ldots$ does not exist. First we define the norm on terms as follows
\begin{align*}
|T| &= 1\\
|F| &= 1\\
|a| &= 1\\
|x| &= 1\\
|t\lef r\rig s| &= 2|r|+max(|t|, |s|)
\end{align*}
Subsequently, we show that for each rewrite rule the norm of the left-hand side 
is strictly greater than the norm of the right-hand side.
\begin{align*}
|t\lef T\rig s|
&= 2|T| + max(|t|, |s|)\\
&= 2 + max(|t|, |s|)\\
&> |t|\\\\
|t\lef F\rig s|
&= 2|F| + max(|t|, |s|)\\
&= 2 + max(|t|, |s|)\\
&> |s|\\\\
|T\lef s\rig F|
&= 2|s| + max(|T|, |F|)\\
&= 2|s| + 1\\
&> |s|\\\\
|t\lef(r\lef s\rig v)\rig w|
&= 2|r\lef s\rig v| + max(|t|, |w|)\\
&= 2(2|s| + max(|r|, |v|)) + max(|t|, |w|)\\
&= 4|s| + max(2|r|, 2|v|) + max(|t|, |w|)\\
&= 4|s| + max(2|r| + max(|t|, |w|), 2|v| + max(|t|, |w|))\\
&= 4|s| + max(|t\lef r\rig w|, |t\lef v\rig w|)\\
&> 2|s| + max(|t\lef r\rig w|, |t\lef v\rig w|)\\
&= |(t\lef r\rig w)\lef s\rig(t\lef v\rig w)|
\end{align*}
Consequently, if $t\to r$ using these rules, it follows that $|t|>|r|$.

Suppose that there is an infinite rewrite sequence $t_1\to t_2\to t_3\to\ldots$. Then we know that if $i < j$, $|t_i|>|t_j|$. Since the norm is always positive and finite (we do not allow terms with an infinite number of symbols), this sequence must end. Hence, such an infinite rewrite sequence does not exist and $R_{fr}$ is terminating.
\end{proof}

The following lemma shows that $R_{fr}$ is locally confluent. If there is a term $u$ mutually derivable from terms $s$ and $t$, we write $s\downarrow t$ and say that $s$ and $t$ are \emph{joinable}. A binary relation $\to$ is then \emph{locally confluent} if for all terms $r$, $s$ and $t$, if $s\downarrow t$ whenever $r\to s$ and $r\to t$.

In order to prove local confluence, we first need to define the concept of critical pair. Let $s\to t$ and $u\to v$ be two rewrite rules with variables renamed such that they do not share variables. Furthermore, let $\mu$ be the most general unifier of $u$ and a nonvariable subterm $s'$ of $s$. A \emph{critical pair} is then the term $\mu(t)$ combined with the term resulting from replacing $\mu(s')$ with $\mu(v)$ in $\mu(s)$. A critical pair $(u_1, u_2)$ is joinable if $u_1$ and $u_2$ are joinable.

\begin{lem}
$R_{fr}$ is locally confluent.
\end{lem}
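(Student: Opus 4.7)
The plan is to invoke the standard \emph{Critical Pair Lemma}: a terminating rewrite system is locally confluent iff every critical pair is joinable. Since $R_{fr}$ has been shown terminating, it suffices to enumerate the critical pairs arising from overlaps between the left-hand sides of the four rules, and to check joinability of each. Overlaps at variable positions never yield nontrivial critical pairs, and the trivial self-overlaps of a rule with itself at the root are vacuous, so I need only consider overlaps of the form where a non-variable proper subterm of one LHS unifies with the LHS of another rule.

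The key observation is that the LHS of each of the rules R1--R3 contains no \emph{proper} non-variable subterm other than a constant ($T$ or $F$), and constants cannot unify with the compound LHS of any of the four rules. Hence the only overlaps to consider are those obtained by unifying the proper non-variable subterm $y\lef z\rig v$ of R4's LHS with the LHS of each of R1, R2, R3, R4 in turn. This yields exactly four critical pairs:
\begin{align*}
\text{R4/R1:}&\quad x\lef(y\lef T\rig v)\rig w,\\
\text{R4/R2:}&\quad x\lef(y\lef F\rig v)\rig w,\\
\text{R4/R3:}&\quad x\lef(T\lef z\rig F)\rig w,\\
\text{R4/R4:}&\quad x\lef(y\lef(a\lef b\rig c)\rig v)\rig w.
\end{align*}

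For the first three critical pairs, joinability is immediate: one reduction rewrites the inner condition using R1, R2, or R3 respectively; the other applies R4 first and then eliminates the resulting $T$ or $F$ conditions using R1 and R2. Concretely, in the R4/R3 case one derivation gives $x\lef z\rig w$ directly, while the other yields $(x\lef T\rig w)\lef z\rig(x\lef F\rig w)$, which reduces to $x\lef z\rig w$ by one application each of R1 and R2. The R4/R1 and R4/R2 cases are even simpler.

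The main bookkeeping obstacle will be the R4/R4 critical pair, where both sides must be fully expanded. One reduction applies R4 to the inner conditional first and then distributes R4 twice more; the other applies R4 at the outer position first and then at the condition. A direct calculation shows both sides collapse to the same term
\[
\bigl((x\lef y\rig w)\lef a\rig(x\lef v\rig w)\bigr)\lef b\rig\bigl((x\lef y\rig w)\lef c\rig(x\lef v\rig w)\bigr),
\]
so this pair is joinable too. Since all critical pairs are joinable, the Critical Pair Lemma yields local confluence of $R_{fr}$.
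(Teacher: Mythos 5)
Your overall strategy --- terminating system, Critical Pair Lemma, enumerate and join all critical pairs --- is exactly the paper's, and the four overlaps you do treat (including the R4/R4 self-overlap, whose common reduct you compute correctly) match four of the paper's seven cases. But your enumeration of critical pairs is incomplete, and the justification you give for its completeness is wrong. You exclude only ``trivial self-overlaps of a rule with itself at the root'' and then claim the remaining overlaps all arise from unifying the inner subterm $y\lef z\rig v$ of R4 with some left-hand side. This overlooks \emph{root overlaps between two distinct rules}: the standard definition of critical pair excludes the root position only when the two rules involved are the same rule (up to renaming). Here there are three such overlaps. The LHS of R3, namely $T\lef x\rig F$, unifies at the root with the LHS of R1 (giving the peak $T\lef T\rig F$ and the critical pair $(T,T)$), with the LHS of R2 (giving $T\lef F\rig F$ and the pair $(F,F)$), and with the LHS of R4 (giving the peak $T\lef(y\lef z\rig v)\rig F$, whose two reducts are $y\lef z\rig v$ and $(T\lef y\rig F)\lef z\rig(T\lef v\rig F)$). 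The paper checks all three of these in addition to your four.

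As it happens, all three missed pairs are joinable --- the first two trivially, and the third by two applications of R3 --- so the lemma is true and the repair is short. But as written your proof does not establish local confluence: the Critical Pair Lemma requires \emph{every} critical pair to be joinable, and your argument that only four exist would, if accepted, equally well ``prove'' local confluence of a system in which one of the omitted root overlaps failed to join. You need either to add the three missing cases or to correct the combinatorial claim about where overlaps can occur.
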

\begin{proof}
$R_{fr}$ is locally confluent if all its critical pairs are joinable (see Lemma 5.15 in \cite{rewriting}). We first rename the variables in the rules so they are distinct:
\begin{align*}
x_1\lef T\rig y_1 &\to x_1\\
x_2\lef F\rig y_2 &\to y_2\\
T\lef x_3\rig F &\to x_3\\
x_4\lef(y_4\lef z_4\rig v_4)\rig w_4 &\to (x_4\lef y_4\rig w_4)\lef z_4\rig(x_4\lef v_4\rig w_4)
\end{align*}
Then we identify the critical pairs and check whether they are joinable.

\begin{itemize}

\item Let $\mu_1$ be a substitution such that
\begin{align*}
\mu_1(y_4) &= x_1\\
\mu_1(z_4) &= T\\
\mu_1(v_4) &= y_1\\
\end{align*}
and the rest of the variables map to themselves e.g., $\mu_1(x_1)=x_1$ and $\mu_1(x_4)=x_4$. Then we have the following critical pair
\[
(\mu_1(x_4\lef x_1\rig w_4),~ \mu_1((x_4\lef y_4\rig w_4)\lef z_4\rig(x_4\lef v_4\rig w_4)))
\]
which is joinable
\begin{align*}
\mu_1((x_4\lef y_4\rig w_4)\lef z_4\rig(x_4\lef v_4\rig w_4))
&=(x_4\lef x_1\rig w_4)\lef T\rig (x_4\lef y_1\rig w_4)\\
&\to x_4\lef x_1\rig w_4\\
&= \mu_1(x_4\lef x_1\rig w_4)
\end{align*}

\item Let $\mu_2$ be a substitution such that
\begin{align*}
\mu_2(y_4) &= x_2\\
\mu_2(z_4) &= F\\
\mu_2(v_4) &= y_2\\
\end{align*}
and the rest of the variables map to themselves. Then we have the following critical pair
\[
(\mu_2(x_4\lef y_2\rig w_4),~ \mu_2((x_4\lef y_4\rig w_4)\lef z_4\rig(x_4\lef v_4\rig w_4)))
\]
which is joinable
\begin{align*}
\mu_2((x_4\lef y_4\rig w_4)\lef z_4\rig(x_4\lef v_4\rig w_4))
&=(x_4\lef x_2\rig w_4)\lef F\rig (x_4\lef y_2\rig w_4)\\
&\to x_4\lef y_2\rig w_4\\
&= \mu_2(x_4\lef y_2\rig w_4)
\end{align*}

\item Let $\mu_3$ be a substitution such that
\begin{align*}
\mu_3(y_4) &= T\\
\mu_3(z_4) &= x_3\\
\mu_3(v_4) &= F\\
\end{align*}
and the rest of the variables map to themselves. Then we have the following critical pair
\[
(\mu_3(x_4\lef x_3\rig w_4),~ \mu_3((x_4\lef y_4\rig w_4)\lef z_4\rig(x_4\lef v_4\rig w_4)))
\]
which is joinable
\begin{align*}
\mu_3((x_4\lef y_4\rig w_4)\lef z_4\rig(x_4\lef v_4\rig w_4))
&=(x_4\lef T\rig w_4)\lef x_3\rig (x_4\lef F\rig w_4)\\
&\to x_4\lef x_3\rig (x_4\lef F\rig w_4)\\
&\to x_4\lef x_3\rig w_4\\
&= \mu_3(x_4\lef x_3\rig w_4)
\end{align*}

\item Let $\mu_4$ be a substitution such that
\begin{align*}
\mu_4(x_1) &= T\\
\mu_4(x_3) &= T\\
\mu_4(y_1) &= F\\
\end{align*}
and the rest of the variables map to themselves. Then we have the following critical pair
\[
(T,~T)
\]
which is joinable.

\item Let $\mu_5$ be a substitution such that
\begin{align*}
\mu_5(x_2) &= T\\
\mu_5(x_3) &= F\\
\mu_5(y_2) &= F\\
\end{align*}
and the rest of the variables map to themselves. Then we have the following critical pair
\[
(F,~F)
\]
which is joinable.

\item Let $\mu_6$ be a substitution such that
\begin{align*}
\mu_6(x_3) &= y_4\lef z_4\rig v_4\\
\mu_6(x_4) &= T\\
\mu_6(w_4) &= F\\
\end{align*}
and the rest of the variables map to themselves. Then we have the following critical pair
\[
(\mu_6(x_3),~\mu_6((x_4\lef y_4\rig w_4)\lef z_4\rig(x_4\lef v_4\rig w_4)))
\]
which is joinable
\begin{align*}
\mu_6((x_4\lef y_4\rig w_4)\lef z_4\rig(x_4\lef v_4\rig w_4))
&=(T\lef y_4\rig F)\lef z_4\rig(T\lef v_4\rig F)\\
&\to y_4\lef z_4\rig(T\lef v_4\rig F)\\
&\to y_4\lef z_4\rig v_4\\
&=\mu_6(x_3)\\
\end{align*}

\item The last critical pair requires that we rename the variables in the fourth rule a second time
\[
x'_4\lef(y'_4\lef z'_4\rig v'_4)\rig w'_4 \to (x'_4\lef y'_4\rig w'_4)\lef z'_4\rig(x'_4\lef v'_4\rig w'_4)
\]
Let $\mu_7$ be a substitution such that
\begin{align*}
\mu_7(y_4) &= x'_4\\
\mu_7(z_4) &= y'_4\lef z'_4\rig v'_4\\
\mu_7(v_4) &= w'_4\\
\end{align*}
and the rest of the variables map to themselves. Then we have the following critical pair
\[
(\mu_7(x_4\lef((x'_4\lef y'_4\rig w'_4)\lef z'_4\rig(x'_4\lef v'_4\rig w'_4))\rig w_4),~ \mu_7((x_4\lef y_4\rig w_4)\lef z_4\rig(x_4\lef v_4\rig w_4)))
\]
which is joinable
\[
\mu_7((x_4\lef y_4\rig w_4)\lef z_4\rig(x_4\lef v_4\rig w_4))
\]
\begin{align*}
&=(x_4\lef x'_4\rig w_4)\lef(y'_4\lef z'_4\rig v'_4)\rig(x_4\lef w'_4\rig w_4)\\
&\to x_4\lef(x'_4\lef(y'_4\lef z'_4\rig v'_4)\rig w'_4)\rig w_4\\
&\to x_4\lef((x'_4\lef y'_4\rig w'_4)\lef z'_4\rig(x'_4\lef v'_4\rig w'_4))\rig w_4\\
&=\mu_7(x_4\lef((x'_4\lef y'_4\rig w'_4)\lef z'_4\rig(x'_4\lef v'_4\rig w'_4))\rig w_4)\\
\end{align*}

\end{itemize}
Since every critical pair is joinable, the rewrite system $R_{fr}$ is locally confluent.
\end{proof}

By Lemma 5.13 (the so called Diamond Lemma, see \cite{newman}) in \cite{rewriting}, a terminating binary relation is Church-Rosser iff it is locally confluent. Hence, $R_{fr}$ is Church-Rosser. Furthermore, by Theorem 5.4 in \cite{rewriting} a binary relation is confluent iff it is Church-Rosser. Consequently, $R_{fr}$ is also confluent, which leads us to the final theorem.

\begin{thm}
$R_{fr}$ is convergent.
\end{thm}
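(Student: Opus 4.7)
The plan is to derive convergence directly from the two lemmas just established, using the standard chain of implications that the author has already sketched immediately before the statement. Recall that a term rewriting system is \emph{convergent} exactly when it is both terminating and confluent. Termination has been proved in the first lemma by exhibiting the norm $|\cdot|$ and checking that each rewrite rule strictly decreases it. Local confluence has been proved in the second lemma by enumerating all critical pairs of $R_{fr}$ and verifying that each is joinable. Hence all that remains is to promote local confluence to full confluence, using termination.

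First I would invoke Newman's Lemma (cited in the excerpt as the Diamond Lemma, Lemma~5.13 of \cite{rewriting}): any terminating binary rewrite relation that is locally confluent is Church-Rosser. Combined with the two lemmas above, this immediately yields that $R_{fr}$ is Church-Rosser. Then I would apply Theorem~5.4 of \cite{rewriting}, which states that Church-Rosser and confluent are equivalent, to conclude that $R_{fr}$ is confluent. Since $R_{fr}$ is both terminating and confluent, it is convergent by definition.

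There is really no obstacle here: the entire argument is a two-line application of results that the author has already cited in the paragraph preceding the theorem. The only minor point worth noting in the writeup is that, since $R_{fr}$ is both terminating and confluent, every term has a \emph{unique} normal form, which is the practical payoff of convergence and the reason this property was sought in the first place (it provides an alternative route to completeness, as alluded to in the concluding chapter). So the proof will be just a short paragraph assembling the two preceding lemmas through Newman's Lemma.
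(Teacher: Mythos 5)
Your proposal is correct and follows exactly the paper's own route: termination plus local confluence, promoted to Church--Rosser via Newman's (Diamond) Lemma, then to confluence via Theorem~5.4 of the rewriting reference, and finally convergence by definition. The paper's proof is just the phrase ``By Definition 5.6 in \cite{rewriting}'' with the same chain of implications laid out in the preceding paragraph, so there is nothing to add.
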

\begin{proof}
By Definition 5.6 in \cite{rewriting}.
\end{proof}

This means that term rewriting system $R_{fr}$ has unique normal forms. In the next section, we list a program that uses this result to determine whether two terms are provably equal in CP.

\section{Theorem prover for CP}

In this section we list the code of a small theorem prover for CP based on the term-rewriting system $R_{fr}$ in the previous section. The program is written in Prolog, and has been tested in the \href{http://www.swi-prolog.org}{SWI-Prolog} interpreter.

Within the program we use \verb|1| and \verb|0| to denote $T$ and $F$, respectively. Atomic propositions have the same notation i.e. atomic proposition $a$ is represented by \verb|a|. Conditional composition is represented as a three-place predicate \verb|c(_,_,_)| in which the middle argument is the antecedent and the first and third argument are the left- and right-consequent, respectively. So, for example, the term $a\lef (T\lef b\rig c)\rig F$ is represented by the term \verb|c(a, c(1, b, c), 0)|.

After loading the program in the Prolog interpreter we can check whether two terms are equal as follows:
\begin{verbatim}
-? equals(Term1, Term2).
\end{verbatim}
where you should replace \verb|Term1| and \verb|Term2| with two terms using the notation we just discussed. The \verb|equals| predicate will compute the normal form of each term and determine if they are equal or not.

In the code below we see several other predicates. We now give a short description of each of these predicates.

The \verb|rule| predicate describes the term rewriting rules i.e., in this case the rules of $R_{fr}$.

The \verb|normal_form(+Term, -NormalForm)| predicate computes the normal form \verb|NormalForm| of the term \verb|Term|.

The \verb|subterm(-Subterm, +Term)| predicate returns a subterm \verb|Subterm| of the term \verb|Term|. 

The \verb|substitute(+Subterm1, +Subterm2, +Term1, -Term2)| predicate replaces all occurrences of \verb|Subterm1| in \verb|Term1| with \verb|Subterm2|, and returns the resulting term as \verb|Term2|.
 
\begin{verbatim}
rule(c(X, 1, _), X).
rule(c(_, 0, Y), Y).
rule(c(1, X, 0), X).
rule(c(X, c(Y, Z, U), V), c(c(X, Y, V), Z, c(X, U, V))).

normal_form(Term, Term) :-
   findall(Subterm, subterm(Subterm, Term), Subterms),
   forall(member(X, Subterms), \+ rule(X, _)).
normal_form(Term1, NormalForm) :-
   subterm(Subterm1, Term1),
   rule(Subterm1, Subterm2),
   substitute(Subterm1, Subterm2, Term1, Term2),
   normal_form(Term2, NormalForm).

subterm(T, T).
subterm(T1, T2) :-
   T2 =.. [_|T],
   member(T3, T),
   subterm(T1, T3).

substitute(Term1, Term2, Term1, Term2) :- !.
substitute(_, _, Term, Term) :-
   Term \= c(_,_,_).
substitute(Term1, Term2, c(X, Y, Z), c(NewX, NewY, NewZ)) :-
   substitute(Term1, Term2, X, NewX),
   substitute(Term1, Term2, Y, NewY),
   substitute(Term1, Term2, Z, NewZ).

equal(Term1, Term2) :-
   normal_form(Term1, NormalForm),
   normal_form(Term2, NormalForm).
\end{verbatim}

\bibliographystyle{plain}
\bibliography{thesis}

\end{document}